\documentclass[manuscript,screen]{acmart}

\usepackage{type1cm}
\usepackage{times}

\usepackage{helvet}
\usepackage{courier}
\usepackage{url}
\usepackage{bm}
\usepackage{graphicx}
\usepackage{tcolorbox}
\usepackage{enumerate}
\usepackage{pgfplots}
\usepackage{algorithmic}
\usepackage{algorithm}

\usepackage{eqparbox}
\usepackage{xcolor}

\usepackage{eucal}
\usepackage{mathtools}
\usepackage{float}
\usepackage{color}
\usepackage{bbold}
\usepackage{xspace}
\usepackage{lettrine}
\usepackage{subfig}
\usepackage{tikz}
\usetikzlibrary{backgrounds,automata,shapes.geometric, arrows, positioning}
\usepackage{latexsym}
\usepackage{fancybox}

\usepackage{thm-restate}

\newtheorem{lemma}{Lemma}

\newtheorem{corollary}{Corollary}
\newtheorem{observation}{Observation}
\theoremstyle{remark}
\newtheorem{remark}{Remark}
\usepackage{hyperref}
\usepackage{cleveref}

\usepackage{setspace}
\usepackage{ifthen}
\usepackage{caption}

\newcommand{\Hg}{\mathcal{H}}
\newcommand{\So}{\tilde{S}}
\newcommand{\G}{\tilde{G}}
\newcommand{\hv}{\tilde{v}}
\newcommand{\hi}{j}
\newcommand{\U}{\mathcal{P}}
\newcommand{\X}{\mathbf{X}}
\newcommand{\N}{\mathbf{N}}
\renewcommand{\vv}{\mathbf{v}}
\newcommand{\hvv}{\tilde{\mathbf{v}}}
\newcommand{\prpt}{\Pi}

\newcommand{\FixPropMultTwo}{FixProp12}
\newcommand{\MinEnviedSet}{MinimalEnviedSet}

\AtBeginDocument{%
  }

\setcopyright{acmlicensed}
\copyrightyear{2018}
\acmYear{2018}
\acmDOI{XXXXXXX.XXXXXXX}
\acmConference[Conference acronym 'XX]{Make sure to enter the correct
  conference title from your rights confirmation email}{June 03--05,
  2018}{Woodstock, NY}
\acmISBN{978-1-4503-XXXX-X/2018/06}

\begin{document}

\title{Almost EFX in Hypergraphs}

\author{Ioannis Kakatelis}
\affiliation{%
  \institution{Czech Technical University in Prague}
  \city{}
  \country{Czech Republic}}
\email{ioannis.kakatelis@fit.cvut.cz}

\author{Thanasis Lianeas}
\affiliation{%
  \institution{University of West Attica}
  \country{Greece}}
\email{lianeas@corelab.ntua.gr}

\author{Alkmini Sgouritsa}
\affiliation{%
  \institution{Athens University of Economics and Business, and Archimedes/Athena RC }
  \country{Greece}
}
\email{alkmini@aueb.gr}

\author{Minas Marios Sotiriou}
\affiliation{%
  \institution{Athens University of Economics and Business, and Archimedes/Athena RC }
  \country{Greece}
}
\email{minas\_marios@outlook.com}
\renewcommand{\shortauthors}{Kakatelis et al.}

\begin{abstract}
We study the existence of envy-free-up-to-any-good (EFX) allocations of {\em indivisible} goods among agents with heterogeneous monotone valuations.
Christodoulou et al. (2023) introduced the (multi-hyper)graph setting, where agents and goods are represented by vertices and edges of a graph respectively, and only the endpoints of an edge may have non-zero marginal value for it. Our work simplifies and extends previous results of \cite{kaviani2024envyfreeallocationindivisiblegoods} in this domain. First, we provide a simpler construction of EF2X allocation for \textit{general monotone} valuations   in \textit{hypergraphs} with girth at least 3. We extend our ideas when the multiplicity of each edge is 2 and show that an EF3X allocation always exists for additive valuations. Both results can be constructed in polynomial time. Regarding EFX approximations, we provide a simpler construction for $\frac{\sqrt{2}}{2}$-EFX allocations in hypergraphs of girth at least 3 under subadditive valuations. We push the state-of-the-art by establishing the existence of $\frac{2}{3}$-EFX allocations for additive valuations when the edge multiplicity is 2. Both of the latter results can be constructed in pseudo-polynomial time. By addressing these multi-hypergraph settings, our work contributes to the ongoing effort to resolve the existence of EFX in increasingly general and applicable domains.

\end{abstract}

\begin{CCSXML}
<ccs2012>
<concept>
<concept_id>10003752.10010070.10010099.10010100</concept_id>
<concept_desc>Theory of computation~Algorithmic game theory</concept_desc>
<concept_significance>500</concept_significance>
</concept>
</ccs2012>
\end{CCSXML}

\ccsdesc[500]{Theory of computation~Algorithmic game theory}

\keywords{Discrete Fair Division, EFX Allocation, Fair Division, Algorithm Design}

\maketitle

\section{Introduction}
The problem of fair division involves allocating a set of indivisible goods among multiple agents in a manner that ensures fairness. While the roots of fair resource allocation trace back to antiquity, the field formally emerged in the late 1940s with the work of \citet{h__steihaus_1948} and later the works of \cite{gamow1958puzzle, foley1966resource, VARIAN197463}. Today, fair division represents a rich interdisciplinary field intersecting economics, social choice theory, mathematics, and computer
science \cite{ dblp:journals/ai/amanatidisabflmvw23, brams1996fair, brandt2016handbook, robertson1998cake}. In the last decade, research has focused on minimizing envy among agents in diverse applications such as auctions, business partnerships, and course scheduling \cite{courses}. However, eliminating envy is often impossible with indivisible items; for instance, if two agents both desire a single indivisible good, any allocation necessarily creates envy. This reality has necessitated the study of envy-freeness relaxations.
The first major relaxation is Envy-freeness up to one good (EF1) introduced by \citet{budish}, which requires that envy can be eliminated by removing a prespecified single good from the envied agent's bundle. \citet{LiptonEtAl} proved that EF1 allocations are guaranteed to exist even for general monotone valuations and can be computed efficiently. A more rigorous and ``enigmatic'' relaxation is Envy-freeness up to any good (EFX) as mentioned in \cite{procaccia} and was introduced by \citet{EFXCara} and \citet{GMT14}. 
In an EFX allocation an agent should not envy another agent after the hypothetical removal of any single good from the envied agent's bundle. 

Whether EFX allocations always exist remained one of the most significant open problems in fair division for several years. Very recently, \citet{akrami2026counterexample} answered negatively the question of EFX existence by giving a counterexample in the case of $n \geq 3$ agents and $m \geq n+5$ goods. In the case of $n = 3$ and $m = 8$, where agents have general monotone valuations, this result was extended to show non-existence even for submodular valuations \cite{mackenzie2026counterexamplesefxsubmodularsubadditive}. The question still remains open for more specialized valuation functions, such as additive, and more restrictive but realistic settings. Known existence results are limited to specific cases: when there are two agents with general monotone valuations and for any number of agents, provided that all agents have identical valuations \cite{PlautRough} and also for three agents with additive valuations \cite{CGM24}. The case of three agents was extended to nice-cancelable valuations by \citet{almostfullefxforfouragents} and generalized further by \citet{akrami2025efx} to the  case where two agents have arbitrary monotone valuations and  at least one agent has MMS-feasible valuation. 
For $n$ agents, known results involve cases where agents have additive bi-valued valuations \cite{twovaluedinstanses,byrka2026probing}, all agents have one of two general valuations or when the number of items is at most $n+3$ \cite{mahara2021extensionadditivevaluationsgeneral}, when the valuations are submodular with binary marginals \cite{babaioff2020fairtruthfulmechanismsdichotomous}, and when there are at most three distinct additive valuations \cite{efxexistsfor3typesofadditiveagents}. 

The attention of the research community has also turned into graphical valuations, where the relationship between agents and goods is captured by a graph $G=(V,E)$. In this setting, agents are represented by vertices and goods by edges; an edge/good may have positive value only for its endpoints
\cite{EFXsimplegraphs}. \citet{EFXsimplegraphs} proved that EFX allocations always exist on simple graphs and they noted that EFX orientations, i.e., allocations where each edge is given to one of its endpoints may not always exist and are NP-complete to decide their existence. 
Our work focuses on multigraph and hypergraph settings, where a single good may be relevant to more than two agents and multiple goods may connect the same set of agents. These structures are more complex but offer deeper insights into real-world applications like territorial borders or shared research workspace.

\subsection{Our Contribution}
We focus on hypergraphs of girth at least 3 (meaning that any two agents share at most one good/edge) and relax the EFX requirement in two directions, searching for EF2X allocations, i.e., allocations under which removing any two goods from any envied agent's bundle eliminates the envy towards the agent, and approximate EFX allocations. In both directions we recover best-known results via simpler constructions. We further extend these findings to  multi-hypergraph instances of girth at least 3 and multiplicity 2,  i.e., instances where any subset of agents may share more than one but up to two common goods. Notably the only prior work that considers multiplicity greater than 1 is by  \citet{AAAILianeasSS26}, who showed EFX existence for girth at least 4. In contrast, our focus on girth at least 3 introduces significantly more complex agent connectivity, resulting in a more challenging setting.

In Section \ref{sec:EFkX} we aim for EF2X allocations and provide a simple polynomial time algorithm that  returns an EF2X allocation for hypergraphs of girth at least 3 and general monotone valuations (Theorem \ref{thm:girth3}), thus reproving a result of   \citet[Theorem 5.7]{kaviani2024envyfreeallocationindivisiblegoods} with a much simpler and polynomial time construction. The algorithm, in a first step, by employing ideas of \citet{AAAILianeasSS26} and prioritizing the agents, creates first a partial EFX allocation with specific properties, and in a second step it assigns, in one pass,  the unallocated goods to reach  a complete EF2X allocation. Next, moving one step forward, we focus on multi-hypergraphs of girth at least 3 and multiplicity 2. For such instances, we provide a polynomial time algorithm that returns an EF3X allocation for additive valuations (Theorem \ref{thm:girth3EF3X}), i.e., any agent does not envy any other bundle after removing 3 goods from it.

In Section \ref{sec:apxEFX} we turn into finding approximate EFX allocations. For simple hypergraphs of girth at least 3 we  extend ideas used by Amanatidis et al. \cite{amanatidis2024pushingfrontierapproximateefx} and show that a  $\frac{\sqrt{2}}{2}$-EFX allocation  exists and can be computed in pseudo-polynomial time for subadditive valuations in hypergraphs with girth at least 3 (Theorem \ref{thmhyper1}).
Our work matches the current best known approximation result for hypergraphs \cite{kaviani2024envyfreeallocationindivisiblegoods}, but the construction is 
simpler, requiring a more straightforward envy graph construction. 
Furthermore, we extend the frontier in multi-hypergraphs, demonstrating that a $\frac{2}{3}$-EFX allocation exists and can be computed in pseudo-polynomial time for additive valuations in multi-hypergraphs of girth at least 3 and multiplicity 2 (Theorem \ref{thm:hyper2}). The proposed algorithm for this last result needs to be more sophisticated for its final allocation compared to the previous algorithms. It may have  to iterate many times, improving the partial allocation, before getting a partial allocation for which the unallocated goods can be assigned in one last pass, which is what the rest of the algorithms do.

\subsection{Related Work}

\noindent{\bf Approximate EFX}. A significant line of research has considered approximations of EFX, denoted by $\alpha-$EFX, where the envy that remains even after the removal of any good from another envied agent's bundle, is bounded by a multiplicative factor of $a \in(0,1]$. The approximation of EFX was proposed by Plaut and Roughgarden and they showed in \cite{PlautRough} that a $\frac{1}{2}-$EFX allocation always exists when agents have subadditive valuations; it was later shown by \citet{DBLP:conf/atal/Chan0LW19} that this allocation can be derived in polynomial time. The best known result in EFX approximation guarantee under the assumption that the agents have additive valuations, without any restriction on the number of agents or their values is $\alpha = \frac{1}{\phi} \approx 0.618 $ by \citet{Amanatidis_Markakis_Ntokos_2020}. In more restricted valuation functions, \citet{DBLP:conf/atal/MarkakisS23} showed that a $\frac{2}{3}-$EFX allocation exists when the agents agree on what are the top n items (but not necessarily on their exact ranking), with n being the number of agents. Following this direction \citet{amanatidis2024pushingfrontierapproximateefx} showed that a $\frac{2}{3}-$EFX allocation exists when each agent values each good by one of three fixed values, or when there are at most 7 agents, which was recently improved to 8 agents by \citet{filosratsikas2026approximateenvyfreeallocationsk}. Moreover, Prakash \citet{hv2025almost} showed the existence of $\frac{2}{3}-$EFX allocation when there are at most four distinct valuation functions. On the negative side, very recently, \citet{mackenzie2026counterexamplesefxsubmodularsubadditive} showed that $\alpha-$EFX does not exist for three agents with monotone subadditive valuations and eight goods for $\alpha \approx 0.89$.

\noindent{\bf EFX with Charity}. An important relaxation of EFX that was introduced by \citet{CGH19} is called EFX with charity. In this scenario we are seeking for a (partial) EFX allocation, where not all the items need to be allocated (some of them remain unallocated), or in other words, we donate some goods to charity and achieve EFX with the rest. \citet{CGH19} proved the existence of a partial allocation that is EFX and its Nash Social Welfare is half of the maximum possible. Later, \citet{CKMS21} proved that EFX allocation exist for $n$ agents with arbitrary valuations if up to $n-1$ goods can be donated and no agent envies the donated bundle. \citet{almostfullefxforfouragents} showed that an EFX allocation exists for $n$ agents with nice cancelable valuations if up to $n-2$ goods can be donated. They also showed that an EFX allocation exists for four agents with at most one donated good. \citet{mahara2021extensionadditivevaluationsgeneral} extended the EFX allocation existence to general monotone valuations leaving $n-2$ goods unallocated. For hypergraphs with girth at least 3, the size of charity was reduced to $\big\lfloor \frac{n}{2} \big\rfloor - 1$ for general monotone valuations \cite{kaviani2024envyfreeallocationindivisiblegoods}. Regarding now the number of donated goods, some studies have introduced a correspondence between approximate EFX allocations with a sublinear number of discarded goods with a  combinatorial problem called  the Rainbow Cycle problem, see (i.e \cite{chaudhury2024improving,jahan2022rainbow,berendsohn2022fixed,akrami2025efx,hv2025almost}). 

\noindent{\bf EF2X}. Another relaxation of EFX that has been studied is the notion of EF2X. It is a specific case of EFkX notion which is a general relaxation of EFX, introduced by \citet{akrami2022ef2x}, that allows an amount of envy up to the value of $k$ least valuable goods of a bundle. In a recent work by \citet{Ashuri25EF2X4agents} it has been shown that EF2X exists for four agents with cancelable valuations. In \cite{kaviani2024envyfreeallocationindivisiblegoods} they showed that in hypergraphs with girth at least 3, there always exists an EF2X allocation. Very recently, \citet{filosratsikas2026approximateenvyfreeallocationsk} showed that for any number of agents with additive valuations a $\frac{3}{4}-$EF2X allocation always exists and that can be generalized for every $k\geq 2$ to $\frac{k+1}{k+2}-$EFkX approximation.

\noindent{\bf EFX Allocations in Multigraphs and Hypergraphs}.We focus now on the line of work related to multigraphs and hypergraphs.
\citet{EFXsimplegraphs} introduced the graph setting and showed the existence of EFX allocation when the graph is simple and the edges are goods. This result was extended to the case of mixed manna, where items may be both goods and chores \cite{mixedmanna}. Existence of EFX allocations has been shown in multigraphs where agents have restricted additive valuations \cite{kaviani2024envyfreeallocationindivisiblegoods}, in bipartite multigraphs where agents have additive \cite{afshinmehr2024efxallocationsorientationsbipartite} and cancelable \cite{bhaskar2024efxallocationsmultigraphclasses} valuations, in multigraphs with girth at least 6 for agents with general monotone valuations \cite{OntheExistenceofEFXAllocationsinMultigraphs}. \citet{ZengStructureOrientationsGraphsAAMAS} established a connection between the EFX orientation existence and the chromatic number of the graph, more precisely they showed that EFX orientation exists in $t$-colored multigraphs with girth at least $2t-1$ for agents with cancelable valuations \cite{bhaskar2024efxallocationsmultigraphclasses}, in multigraphs where agents are neighbors to at most roughly one quarter of the other agents and have general monotone valuations \cite{OntheExistenceofEFXAllocationsinMultigraphs},  and in multi-trees for agents with general monotone valuations \cite{bhaskar2024efxallocationsmultigraphclasses}. Subsequently, \citet{afshinmehr2025efxallocationsexisttrianglefree} showed that triangle-free multigraphs (girth at least 4) always admit EFX allocations for general monotone valuations. Another, very recent work showed the existence of EFX allocations in hypergraph and multi-hypergraph setting with girth at least 4 and a further assumption on a single's vertex multiplicity \cite{AAAILianeasSS26}. 

\noindent{\bf Approximate EFX in Multigraphs and Hypergraphs}. Approximate EFX-allocations have also been studied in multigraphs and hypergraphs. \citet{amanatidis2024pushingfrontierapproximateefx} showed that $\frac{2}{3}-$EFX allocations always exist in multigraphs when agents have additive valuations, which was recently improved to $\frac{\sqrt{2}}{2}-$EFX \cite{kaviani2025improvedapproximateefxguarantees}. \citet{kaviani2024envyfreeallocationindivisiblegoods} showed that for hypergraphs with 
 girth at least 3, $\frac{\sqrt{2}}{2}$-EFX allocations always exist when agents have  subadditive valuations.  

\noindent{\bf EFX Orientations in Graphs and Multigraphs}. In the graph and multigraph setting the existence of EFX orientations, i.e., allocations where edges/goods may only be allocated to one of the endpoints, has received much of attention. \citet{EFXsimplegraphs} showed that EFX orientations may not exist by giving a counterexample in a $K_4$ graph, and they showed that even deciding if an EFX orientation exists  is NP-complete. This result holds even if the vertex cover of the graph has size  8, or in multigraphs with only 10 vertices \cite{deligkas2024ef1efxorientations}, which was later improved to a vertex cover of size 4 or multigraphs with as few as 4 vertices \cite{kanellopoulos2025ef}. \citet{ZengStructureOrientationsGraphsAAMAS} showed that EFX orientations may not 
exist in graphs with chromatic number greater than 3, and that EFX orientations always exist when the chromatic number is at most 2. \citet{filosratsikas2026approximateenvyfreeallocationsk} showed that even EFkX orientations on graphs do not always exists and deciding whether it admits an EFkX orientation is NP-complete. The complexity of orientations has been also explored in multigraphs \cite{hsu2024efxorientationsmultigraphs,afshinmehr2024efxallocationsorientationsbipartite} and recently through the lens of parameterized complexity \cite{blavzej2025tractable,kanellopoulos2025ef}. 

Another fairness notion has also been explored in multigraphs known as maxmin share (MMS) \cite{DBLP:journals/corr/abs-2506-21493, christodoulou2025exactapproximatemaximinshare}.

\section{Preliminaries}
We consider an instance of discrete fair division to be a triple $(N,M,\vv)$, where $N = [n] = \{1,\ldots,n\}$ is the set of $n$  agents,  $M = [m] =  \{1,\ldots,m\}$ is the set of $m$  indivisible goods and $\vv = (v_{1},\ldots,v_{n})$ is a valuation profile represented by a tuple of valuation functions, where $v_{i}:2^{M}\rightarrow \mathbb{R}_{\geq 0}$ and for agent $i \in N$ and $S\subseteq M$, $v_i(S)$ represents $i$'s value for the subset $S$ of goods. 
Throughout, for $k\in \mathbb{N^{+}}$, we let $\left[k\right]:= \{1,2,\ldots,k\}$.

\subsubsection*{Types of Valuations}
We consider valuation functions (or just valuations) that are {\em monotone}, i.e.,  for any $i\in N$ and $S\subseteq T \subseteq M$, it holds that $v_i(S) \leq v_i(T)$, and {\em normalized}, i.e., $v_i(\emptyset)=0$.  For simplicity, for the valuation of $i$ over some good $g$, we write $v_i(g)$ instead of $v_i(\{g\})$. Also, we may say \textit{value} of (set) $S$ for (agent) $i$ to refer to $v_i(S)$.
Based on additional properties of the valuation functions, we give the following classes of valuation function in increasing order of inclusion.  

\noindent{\bf Additive Valuations}. A valuation function $v_{i}$ is additive if $v_{i}(S) = \sum_{g \in S}v_{i}(g)$ for any $S \subseteq M$. \\ 
\noindent{\bf Subadditive Valuations}. A valuation function is subadditive if $v_{i}(S)+ v_{i}(T) \geq v_{i}(S \cup T)$ for any $S,T \subseteq M$. 
 \\
\noindent{\bf General Monotone Valuations}. When a valuation function has no other restriction but monotonicity and normalization, we call it general monotone.

\subsubsection*{Types of Goods}
A good $g \in M$ is {\em  relevant}  to an agent $i \in N$, if adding $g$ to some subset of goods increases the subset's value for $i$, i.e., there exists an $S\subseteq M$ so that $v_i(S\cup \{g\})>v_i(S)$. For $g$ relevant to $i$, if the valuation functions are subadditive, then $v_i(S)+v_i(g)\geq v_i(S\cup \{g\})>v_i(S)$, and thus,  a good $g$ is relevant to $i$ iff $v_i(g)>0$. A good that is not relevant to $i$ is called {\em irrelevant} to $i$.

\noindent{\bf Twin Good}. Let $g\in M$, we call {\em twin good} of $g$ another good different from $g$ that is relevant to the exact same agents as $g$. We will usually denote it as $g_t$.

\subsubsection*{Types of Allocations/Fairness}
An allocation $\X = (X_{1},\ldots,X_{n})$ is an ordered  tuple of mutually disjoint subsets of $M$ (i.e., for $i \neq j$: $X_{i} \cap X_{j} = \emptyset$). We refer to the $X_i$'s as bundles.
An allocation is {\em complete} if $\cup_{i  = 1}^{n}X_{i} = M$ and is {\em partial} otherwise.
For an allocation $\X$ the {\em pool} $\U(\X)$ is the set containing the goods left unallocated by $\X$, i.e. $\U(\X)=M\setminus \cup_{i  = 1}^{n}X_{i}$. 

\noindent{\bf Social Welfare.} Given an allocation $\X$, the Social Welfare is defined as $\sum_iv_i(X_i)$.

\noindent {\bf Orientation.}
An allocation $\X$ is an {\em orientation} if for any $g \in X_i$,  $g$ is relevant to $i$. We say that we orient a good to an agent if we allocate the good to an agent for which it is relevant.

\noindent{\bf Envy}.
Given an allocation $\X$, we say that an agent $i$ {\em envies} another agent $j$ (or bundle $X_j$), if $v_{i}(X_{i}) <v_{i}(X_{j})$.

\noindent{\bf EFX}.
Given an allocation $\X$, we say that agent $i$ is {\em envy free up to any good}, or simply EFX,  towards an agent $j$ (or bundle $X_j$), if  for all $g \in X_{j}:  v_{i}(X_{i}) \geq v_{i}(X_{j}\setminus \{g\})$. An allocation $\X$ is EFX if every agent is EFX towards any other agent.

\noindent{\bf $\alpha$-EFX}.
Given an allocation $\X$, we say that agent $i$ is  $\alpha$-EFX  towards an agent $j$ (or bundle $X_j$), if  for all $g \in X_{j}:  v_{i}(X_{i}) \geq \alpha\cdot v_{i}(X_{j}\setminus \{g\})$. An allocation $\X$ is $\alpha$-EFX if {\em every} agent is $\alpha$-EFX towards any other agent.

\subsubsection*{Envy Graphs}
\noindent Given a partial allocation $\X$ and the valuation functions of the agents we define the corresponding {\em envy graph}  $G(\X) = (\N,E(\X))$ to be a directed graph where the vertex set is the set of agents $N$ and the edge set contains a directed edge $(i,j)$ if and only if  agent $i$ envies agent $j$, i.e., $E(\X) = \{(i,j):v_{i}(X_{i}) <v_{i}(X_{j})\}$. A vertex $s$ of an envy graph $G(\X)$ is a {\em source} of $G(\X)$ if it has  in-degree 0, i.e., for any $(i,j)\in E(\X): s\neq j$.

\subsubsection*{The Hypergraph Setting}
We assume that the setting is modeled on hypergraphs.
A \emph{hypergraph} is a pair $\Hg=(V,E)$ where $V$ is a set of vertices and $E$ is a set of subsets of $V$, called hyperedges or simply edges. The size of an edge is the number of vertices it contains. For an edge $e\in E$ and a vertex $i\in e$ we say that $e$ is incident to $i$ and $i$ is incident to $e$, or $i$ is an endpoint of $e$. If vertices  $i$ and $j$ belong to some edge $e$ (i.e., $i,j\in e$), we say that $i$ and $j$ share $e$ and we call $i$ and $j$ neighbors.

\noindent {\bf Cycles \cite{berge1973graphs} - Girth.}
    A \emph{cycle} of length k is defined by a sequence $(x_1,e_1,x_2,e_2, \dots ,x_k,e_k,x_{k+1})$ such that:
              (i) $x_1,x_2, \dots, x_k$ are distinct vertices and $x_{k+1} = x_{1}$,
           (ii) $e_1,e_2, \dots, e_k$ are  distinct edges and (iii)
         $x_i,x_{i+1} \in e_i$, $\forall i \in [k]$.
       A hypergraph $\Hg$ has \emph{girth} $k$, if $\Hg$'s shortest cycle has length  $k$; if there is no cycle in $\Hg$ the girth is infinity. 

\noindent {\bf Hypergraph Instance}. An instance $(N,M,\vv)$ of the problem can be equivalently modeled via hypergraphs. In a \emph{hypergraph instance}, a hypergraph $\Hg=(V,E)$ is given together with a set $\vv$ of valuation functions. The vertices of  $\Hg$ are the agents and the edges of $\Hg$ are the goods. Edge/good $e$ {\em may} be relevant to the vertices/agents it contains but  is irrelevant to all other vertices/agents.

\noindent {\bf Multi-hypergraph Setting}. In Section \ref{sec:ef2x_multi} and Section \ref{sec:multi-hyper} we allow a more general hypergraph setting.  In the \emph{multi-hypergraph setting}  
the edge set $E$ is allowed to have
more than one edge on the same set of vertices. Such edges correspond to different goods, with possibly  different impact on the valuation functions of the vertices/agents.
For an edge $e$ that appears $k$ times in $E$ we say that $e$ has multiplicity $k$. If an edge $e$ has multiplicity 2 then we call the two appearances  of $e$ twins, or twin goods.
For a multi-hypergraph $\Hg=(V,E)$ we define the girth of $\Hg$ to be the girth of $\Hg'=(V,E')$, where $E'$ is derived from $E$ if we delete all repetitions of the  edges of $E$.

We provide the following two observations for hypergraphs and multi-hypergraphs of girth at least 3. For hypergraphs, any two agents/vertices may share at most one good/edge, and for multi-hypergraphs, any two vertices/agents may share at most one good/edge and its repetitions. Our results always consider 
instances where the corresponding hypergraph instance has girth at least $3$.

\begin{observation}\label{obs:at-most-1-relevant-common-edge}
    In hypergraphs with girth at least 3 
    any two agents may share at most one good.
\end{observation}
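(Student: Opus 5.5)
The statement follows directly from the definition of a cycle of length $2$, so I will argue by contradiction. Suppose two distinct agents $i$ and $j$ share two distinct goods $e_1 \neq e_2$, that is, $i,j \in e_1$ and $i,j \in e_2$. I will exhibit a cycle of length $2$ in $\Hg$, which contradicts the assumption that the girth is at least $3$.

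The candidate cycle is the sequence $(x_1,e_1,x_2,e_2,x_3)$ with $x_1 = i$, $x_2 = j$ and $x_3 = x_1 = i$. I check the three conditions in the definition of a cycle (following \cite{berge1973graphs}) with $k=2$:
\begin{enumerate}[(i)]
\item The vertices $x_1 = i$ and $x_2 = j$ are distinct, and $x_3 = x_1$.
\item The edges $e_1$ and $e_2$ are distinct by assumption.
\item We have $x_1, x_2 \in e_1$ because $i,j\in e_1$, and $x_2, x_3 \in e_2$ because $j,i \in e_2$.
\end{enumerate}
So $\Hg$ contains a cycle of length $2$. Its shortest cycle therefore has length at most $2$, and the girth is at most $2 < 3$, a contradiction.

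One subtlety needs care: the paper identifies goods with edges. In a (non-multi) hypergraph, two distinct goods are two distinct edges, so they are also distinct as vertex subsets. This makes condition (ii) immediate, and there is no real obstacle here.

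For the multi-hypergraph version mentioned in the text (shared goods are a single edge and its repetitions), I would run the same argument in $\Hg'$, where repetitions are deleted. If $i$ and $j$ shared two goods whose underlying vertex sets differ, those two sets would be distinct edges of $\Hg'$ and would give a $2$-cycle there. That contradicts the definition of the girth of $\Hg$ as the girth of $\Hg'$.
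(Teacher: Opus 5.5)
Your proof is correct and is the same as the paper's: assuming two agents $i,j$ share distinct goods $e_1,e_2$, the sequence $(i,e_1,j,e_2,i)$ is a cycle of length $2$, which contradicts girth at least $3$. Your closing remark on multi-hypergraphs also matches the paper's proof of Observation~\ref{obs:repetition-common-edge}, which passes to $\Hg'$ in the same way.
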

\begin{proof}
    On the contrary, let $i,j$ be two agents/vertices that both belong to goods/edges, say,  $e_1$ and $e_2$. Starting with vertex/agent $i$, following edge/good $e_1$ to reach $j$ and then following edge/good $e_2$ to reach $i$ we get a cycle of length 2, i.e., the cycle $(i,e_1,j,e_2,i)$, contradicting the hypothesis that the girth is at least 3.
\end{proof}

\begin{observation}\label{obs:repetition-common-edge}
    In multi-hypergraphs with girth at least 3 
    any two agents may share at most one good and its repetitions.
\end{observation}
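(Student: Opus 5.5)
The plan is to mirror the proof of Observation~\ref{obs:at-most-1-relevant-common-edge}, after reducing to it through the definition of girth for multi-hypergraphs. Recall that the girth of a multi-hypergraph $\Hg=(V,E)$ is defined as the girth of $\Hg'=(V,E')$, where $E'$ is obtained from $E$ by deleting all repetitions of edges. So if $\Hg$ has girth at least $3$, then by definition $\Hg'$ is a hypergraph of girth at least $3$.

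Suppose, for contradiction, that two agents $i,j$ share two goods $e_1,e_2\in E$ that are not repetitions of one another. Then $e_1$ and $e_2$ are distinct as sets of vertices, so both survive as distinct edges in $E'$, and both contain $i$ and $j$. The sequence $(i,e_1,j,e_2,i)$ is then a cycle of length $2$ in $\Hg'$: the vertices $i,j$ are distinct, the edges $e_1,e_2$ are distinct edges of $E'$, and consecutive vertices lie in the corresponding edges. This contradicts girth at least $3$. Equivalently, we may simply apply Observation~\ref{obs:at-most-1-relevant-common-edge} to $\Hg'$.

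Hence every good shared by $i$ and $j$ corresponds to the same edge $e$ of $E'$, meaning all such goods are copies of $e$ (the good and its repetitions). There is no real obstacle here. The only point needing care is that "distinct edges" in the cycle definition must be read in $E'$, where twins collapse into a single edge; this is exactly why repetitions are allowed while two genuinely different shared edges are not.
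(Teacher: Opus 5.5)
Your proposal is correct and follows essentially the same route as the paper: pass to the de-duplicated hypergraph $\Hg'$, note that two non-repeated shared edges survive there as distinct edges, and derive a contradiction via Observation~\ref{obs:at-most-1-relevant-common-edge} (your explicit 2-cycle $(i,e_1,j,e_2,i)$ is just that observation's proof unfolded). The only nuance is that the paper phrases the survival of $e_1,e_2$ in $\Hg'$ as holding ``w.l.o.g.'', since only one representative of each repeated edge is kept; your remark about reading distinctness in $E'$ covers the same point.
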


\begin{proof}
     On the contrary, let $i,j$ be two vertices/agents that both belong to  edges/goods, say,  $e_1$ and $e_2$, that one is not a repetition of the other. If we delete all repetitions of the edges/goods then in the simplified graph $\Hg'$, w.l.o.g., $e_1$ and $e_2$ will be present. This is a contradiction due to Observation~\ref{obs:at-most-1-relevant-common-edge}. 
\end{proof}

In multi-hypergraphs with multiplicity 2, we assume w.l.o.g. that all edges/goods have multiplicity exactly 2, i.e., if an edge/good has multiplicity 1, then we create a duplicate ``dummy'' one which has 0 marginal value for all vertices/agents. Any EFX allocation in the new graph remains an EFX allocation in the original graph after the removal of the dummy edges/goods.

\section{Pursuing EF2X Allocations}\label{sec:EFkX}

In this section we relax the notion of EFX to the envy-free allocations up to more than one good. We first consider hypergraphs of girth at least 3, where we show that envy-free allocations up to any two goods always exist even for general monotone valuations and can be computed in polynomial time (Section~\ref{sec:EF2X-simple}). Then we consider  multi-hypergraphs of girth at least 3 and multiplicity at most 2, and show that envy-free allocations up to any three goods always exist even for general monotone valuations and can be computed in polynomial time (Section~\ref{sec:EF2X-multi}). Before we continue with our results, we give some further preliminaries specifically for this section.

\noindent{\bf EFkX}. Given an allocation $\X$, we say that agent $i$ is \textit{envy free up to k goods} $(k \geq 1)$, or simply EFkX,  towards an agent $j$ (or bundle $X_j$), if for all $\{g_1,g_2,\dots g_k\} \subseteq X_{j}:  v_{i}(X_{i}) \geq v_{i}(X_{j}\setminus \{g_1,g_2, \dots g_k\})$. An allocation $\X$ is EFkX if every agent is EFkX towards any other agent.
In this section, we search for EF2X allocations in hypergraph and multi-hypergraph instances where the underlying hypergraph has girth at least 3.  For ease of presentation  we need the following definitions.

\noindent{\bf Pool relevant to $i$}.
    For an agent $i$, and an allocation $\X$, we denote the set of unallocated goods relevant to $i$ as $\U_i(\X)$.

\noindent{\bf Special agent $0$}. In all our algorithms in this section, there is an arbitrarily chosen agent that serves special purposes. We call this agent, {\em agent $0$} or simply $0$.

\noindent{\bf Pool irrelevant to $0$}.
    Given an allocation $\X$, we define $\U^*(\X) = \U(\X) \setminus \U_0(\X)$. Moreover, for an agent $i$, we denote the set of unallocated goods relevant to $i$ that are not relevant to agent $0$ to be $\U^*_i(\X)$. Formally $\U^*_i(\X) = \U_i(\X) \setminus \U_0(\X)$. 

\subsection{Results and roadmap}

Our first main result of the section is the following, in which we reprove the result of   \citet[Theorem 5.7]{kaviani2024envyfreeallocationindivisiblegoods} with a much simpler and polynomial time construction.

\begin{restatable*}{thm}{ThmEFtXsimple}
\label{thm:girth3}
Instances on hypergraphs of girth at least 3 with general monotone valuations always admit an EF2X allocation that can be computed in polynomial time.
\end{restatable*}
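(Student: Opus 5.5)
The plan has two phases, following the roadmap sketched in the introduction: first build a partial EFX allocation with structural properties tied to the special agent $0$ and the pools $\U_i(\X)$, $\U^*_i(\X)$, then allocate the whole pool in one pass so that each agent receives at most a controlled number of pool goods, and finish with a local argument based on Observation~\ref{obs:at-most-1-relevant-common-edge}.

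\textbf{Phase 1 (partial allocation).} Process agents one at a time in a fixed priority order, starting with agent $0$. When agent $i$ is processed, give it a bundle consisting only of goods relevant to it: a minimal subset of its still-unallocated incident goods that it values at least as much as any single remaining relevant good. Taking a minimal such set (removing goods while the property holds) ensures that no neighbor can be strongly envious. Since valuations are general monotone, we use only value queries on sets, so this takes polynomially many steps.

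The invariant we aim for is threefold:
\begin{enumerate}
\item the partial allocation $\X$ is EFX;
\item every agent $i$ weakly prefers $X_i$ to every single good of $\U_i(\X)$, and also to any single good plus whatever a later step could add;
\item every agent $j \neq i$ is EF1-type towards $X_i$ (up to one good).
\end{enumerate}
For the third point we use Observation~\ref{obs:at-most-1-relevant-common-edge}: $i$ and $j$ share at most one good $e$. So from $j$'s perspective, $X_i$ contains at most one good relevant to $j$, and removing that one good leaves a bundle worth $0$ to $j$ whenever $X_i$ consists of goods relevant to $i$. This is the key simplification. Any orientation-like bundle $X_i$ satisfies $v_j(X_i\setminus\{e\}) = 0$ for every $j\neq i$, because the goods in $X_i$ other than $e$ are irrelevant to $j$. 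For general monotone valuations we need the slightly stronger statement $v_j(X_i \setminus \{e\}) = v_j(\emptyset) = 0$. This follows by chaining the irrelevance definition over the goods one at a time.

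\textbf{Phase 2 (one-pass completion).} Each unallocated good $g$ is relevant to a set of agents. We assign $g$ to some agent in a way that lets any envious agent discard at most two goods. Assign every pool good to an agent for which it is irrelevant, or to an endpoint chosen so that each agent $i$ receives at most one good relevant to each neighbor $j$. By girth $\ge 3$ there is at most one such good per pair $(i,j)$ anyway. Then from $j$'s viewpoint, $X_i$ contains at most the originally shared good plus nothing else relevant. Since girth at least 3 already bounds shared goods by one, the real issue is goods relevant to $j$ that end up with $i$ while being irrelevant to $i$. So the construction must send every pool good either to an endpoint or to agent $0$ (the special sink). The analysis then reduces to two cases:
\begin{itemize}
\item \textbf{Agent $0$'s bundle.} It may contain many goods relevant to others. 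This is why Phase 1 processes $0$ first and uses $\U^*$: every good relevant to $0$ is given to agent $0$ at the end. Each other agent $j$ shares at most one good with $0$, so $0$'s bundle contains at most one good relevant to $j$, plus possibly one good from $X_0$.
\item \textbf{Every other bundle.} Each other good in $\U^*(\X)$ goes to one of its endpoints, chosen to be the one that does not envy it via the invariant. A bundle $X_i \cup \{\text{pool goods}\}$ then contains at most two goods relevant to $j$: the original shared edge $e$ and a pool copy. By Observation~\ref{obs:at-most-1-relevant-common-edge} these coincide, so at most one good is relevant to $j$. Removing two goods therefore always kills envy.
\end{itemize}
Agents' own values only increase by monotonicity, so any guarantee they had in $\X$ persists.

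\textbf{Main obstacle.} The hard step is handling goods in the pool irrelevant to all incident agents' interests in a way that would break EF2X for agent $0$ and for agents whose bundles grow. In particular, we must verify that agent $0$ does not envy others by more than two goods, and that no agent other than $0$ strongly envies $X_0$ after $0$ absorbs $\U_0(\X)$. I would first try to prove the clean lemma above: in any allocation where every good sits with one of its endpoints, $v_j(X_i \setminus \{e\}) = 0$ for $e$ the unique shared good. That lemma alone yields EF1, hence EF2X, for any orientation. The real difficulty is then that not all goods can be oriented while keeping $0$ satisfied, and Phase 1's EFX invariant relative to $0$ must be designed to absorb the extra good.
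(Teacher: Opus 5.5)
Your plan has a genuine gap at its core. The claim that any orientation ``yields EF1, hence EF2X'' is false. EF2X quantifies over \emph{every} pair of removed goods, so EF1 does not imply it. Suppose $X_i$ holds the unique good $e$ shared with $j$ plus two or more goods irrelevant to $j$. Removing two of the latter leaves value $v_j(e)$, which can exceed $v_j(X_j)$. The same conflation drives your Phase~2 sentence ``removing two goods therefore always kills envy.''

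Girth only tells you that each bundle contains at most one good a given neighbor cares about. What must actually be shown is:
(i) any agent receiving pool goods stays non-envied; and
(ii) every envied agent ends with at most two goods.

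The paper gets (i) by keeping $X_0=\emptyset$ and giving every other agent its best single good, so that $v_j(X_j)\ge v_j(g)$ for every pool good $g$ (Property~(2)). It gets (ii) by routing each pool good relevant to $0$ to an envied endpoint whenever one exists. This adds at most one good per envied agent, since that agent shares at most one good with $0$.

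Your Phase~1, which lets $0$ pick first, already breaks the sink role of $0$. Take $v_0(e)=v_j(e)=10$, let $j$'s only other good be worth $1$ to $j$, and let $b,c,d$ be relevant only to $0$ and worth less than $10$ to it. Agent $0$ takes $e$ and $j$ takes its $1$-good. After $0$ absorbs $b,c,d$, agent $j$ still envies $\{e,d\}$ once $b,c$ are removed.

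The second missing idea is how to place pool goods that are irrelevant to $0$ but whose endpoints are all envied. Giving them to an envied endpoint can leave it with three or more goods, breaking EF2X for whoever envies it. Dumping them on $0$ can make an envied agent $i$ envy $X_0$ even after removing any two goods, because $X_0$ may collect many goods of $\U^*_i(\X)$.

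The paper handles this with an envy-reduction phase. While some envied $i$ has $v_i(\U^*_i(\X))>v_i(X_i)$, it replaces $X_i$ by $\U^*_i(\X)$. This set is irrelevant to $0$, and every other agent sees at most one good of it, which by Property~(2) it does not prefer to its own bundle. So $i$ becomes non-envied and EFX survives. The paper then restores Property~(2) by single-good upgrades.

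At the end every envied $i$ holds one good and satisfies $v_i(X_i)\ge v_i(\U^*_i(\X))$. The good shared by $0$ and an envied agent is never given to $0$, so the goods of the final $X_0$ relevant to $i$ lie inside $\U^*_i(\X)$, and $i$ does not envy $0$ at all. Your proposal has neither this property nor the routing rule. This is precisely the obstacle you flag at the end without resolving it.
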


The construction is very simple and the proof will come as a result of several but simple lemmas.
We pick an arbitrary agent  and call it agent $0$, with the rest of the agents   renamed arbitrarily with numbers 1 to n-1.
To reach a complete EF2X allocation we first have as an intermediate goal to reach a (partial) allocation $\X$ satisfying the following properties.

\begin{tcolorbox}[colback=black!5!white,colframe=black!75!black]
\begin{enumerate}
    \item $\X$ is a (partial) EFX orientation.
    \item For any agent $i\neq 0$ and $g \in \U(\X), v_i(X_i) \geq v_i(g)$ and $X_0=\emptyset$.
    \item For any envied agent $i$, $v_i(X_i) \geq v_i\left(\U^*_i(\X)\right)$. 
    \item Every envied agent has at most 1 good.
\end{enumerate}
\end{tcolorbox}

\paragraph{Proof Roadmap.}
We start with an initial allocation (Section~\ref{sec:step1_2X}) satisfying Properties (1)-(2) where each agent other than $0$ receives at most one relevant good: Algorithm~\ref{Algo:Round Robin EF2X} assigns at each agent their most valuable good among the currently unallocated ones. Then, in Algorithm~\ref{Algo:Algo2EF2X} (Section~\ref{sec:step2_2X})
envied agents may switch their allocated good to a set of unallocated relevant goods that they prefer more; those sets are chosen not to include goods relevant to $0$, so EFX does not break for $0$. If such a switch takes place, a further reallocation is needed in order to preserve Property (2).  We note that Algorithms \ref{Algo:Round Robin EF2X} and \ref{Algo:Algo2EF2X}  do not allocate any goods to agent 0, so agent $0$ remains non-envied. After the termination of these algorithms, Properties (3) and (4) are further satisfied and the envy may only be reduced.
Once we have these 4 properties we construct a complete EF2X allocation (Section~\ref{sec:step3_2X}). 
Algorithm~\ref{Algo:FinalAllocationEF2X} carefully assigns each remaining unallocated good to one of the following agents: a non-envied agent for whom the good is relevant, an envied agent for whom the good is relevant but holds only one other good, thereby ensuring EF2X, or agent $0$.

\smallskip
We next  turn to multi-hypergraph instances of multiplicity 2 and prove the second main result of the section, stated below. 

\begin{restatable*}{thm}{ThmEFtXmulti}
    \label{thm:girth3EF3X}
    Instances on multi-hypergraphs of girth at least 3 and multiplicity at most 2 with additive valuations always admit a polynomial-time constructible EF3X allocation. 
    
\end{restatable*}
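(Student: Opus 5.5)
I would mirror the three-phase construction behind Theorem~\ref{thm:girth3}, adapting each property to twins. First fix an arbitrary agent $0$ and aim for a partial allocation $\X$ with four properties. (1) $\X$ is a partial EFX orientation with $X_0=\emptyset$. (2) Every agent $i\neq 0$ values its bundle at least as much as any single unallocated good, and by additivity also at least as much as any unallocated twin pair $\{g,g_t\}$ only when this is enforceable; the minimal requirement is $v_i(X_i)\ge v_i(g)$ for all $g\in\U(\X)$. (3) Every envied agent $i$ satisfies $v_i(X_i)\ge v_i(\U^*_i(\X))$. (4) Every envied agent holds at most two goods, and these are relevant to it (in practice a single good or a twin pair).

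The initial allocation is a one-pass picking: agents $1,\dots,n-1$ in order each take their most valuable unallocated relevant good. This gives Properties (1)--(2), and by Observation~\ref{obs:repetition-common-edge} any envy comes only through a shared edge and its repetition. For Property~(3) I would run the analogue of Algorithm~\ref{Algo:Algo2EF2X}. While some envied agent $i$ has $v_i(\U^*_i(\X))>v_i(X_i)$, it swaps its bundle for a minimal subset $S\subseteq \U^*_i(\X)$ with $v_i(S)>v_i(X_i)$. Minimality together with EFX-feasibility makes the new bundle EFX-safe, and since $S$ avoids goods relevant to $0$, agent $0$ gains no envy. The released goods are returned to the pool and re-offered to agents violating Property~(2), who upgrade. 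Termination follows by a potential argument: each swap strictly increases the value of the swapping agent, upgrades increase values of others, and values only take finitely many configurations, giving polynomially many steps as in the simple case since each agent's value ranks among goods or minimal sets. The new bundle $S$ of a swapping agent is then non-envied, because envy towards $i$ is only via the edge shared with $i$ and $S$ may contain a good of that edge. This is the delicate point: I would restrict $S$ to goods not shared with any currently envious neighbor, or accept that $i$ may become envied while holding at most the twin pair of one edge, which still satisfies Property~(4).

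The final phase assigns each remaining good $g$ of edge $e$ in one pass. If some endpoint of $e$ is non-envied, give $g$ to it; it may become envied but all its goods came legitimately, so the resulting envy is bounded by at most its extra goods. If all endpoints are envied and none is $0$-relevant, Property~(3) says $v_i(X_i)\ge v_i(\U^*_i)$, so giving the good to an envied endpoint adds at most one good (or a twin pair, adding two) to a bundle of size $\le 2$. Removing any 3 goods from that bundle then leaves at most one good, which is dominated by Property~(2) for envious neighbors. Otherwise give $g$ to agent $0$. Agent $0$ is then compared via Property~(3): every other agent's extra goods lie in $\U^*$, while $0$ gets goods from $\U_0$. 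For any $j$ envious of $0$, the goods $j$ sees in $X_0$ are its shared edge with $0$, at most two goods, so EF2X holds trivially.

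The main obstacle is the accounting in this last step when both twins of an edge end up with the same envied agent on top of a twin pair it already holds. That is exactly where the bound degrades from 2 to 3, and additivity is needed to compare $X_j$ minus three goods against the single remaining good via Property~(2).
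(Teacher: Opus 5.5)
Your plan has a genuine gap, and it sits exactly where the twins matter. In Step~2 you keep only the single-good version of Property~(2), $v_j(X_j)\ge v_j(g)$ for $g\in\U(\X)$. Suppose an envied agent $i$ takes a set $S\subseteq\U^*_i(\X)$ that contains both twins $g,g_t$ of the edge it shares with a neighbor $j$. Nothing prevents $v_j(g)=v_j(g_t)=v_j(X_j)>0$, so $j$ envies $S$. Your minimality is with respect to $v_i$ alone, so $S$ can also contain many further goods irrelevant to $j$. Then $j$ is not EFX, and not even EF3X, towards $i$: delete three goods of $S\setminus\{g,g_t\}$. Neither remedy you mention works. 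Excluding such goods from $S$ means the loop can no longer enforce Property~(3) for the whole set $\U^*_i(\X)$, which your last step needs. The fallback claim that $i$ then holds at most one twin pair is false.

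The missing idea is to make the construction pair-based, as the paper does. Agents repeatedly pick their most valuable unallocated twin pair (\FixPropMultTwo), so Property~(2) becomes $v_j(X_j)\ge v_j(\{g,g_t\})$. Twins are always both allocated or both in the pool, and every envied agent holds exactly one pair. The partial allocation is then only EF2X, not EFX as you claim. But an agent that takes $\U^*_i(\X)$ is now non-envied. By Observation~\ref{obs:repetition-common-edge}, any $j\neq 0$ values that set at most as much as one unallocated twin pair, hence at most $v_j(X_j)$, and agent $0$ finds it irrelevant.

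Your final step also breaks down, in three places.
\begin{enumerate}
\item Handing goods one by one to a non-envied endpoint can put both twins with an agent that already holds many goods. This creates envy that deleting three goods does not remove.
\item Sending goods irrelevant to $0$ whose endpoints are all envied to an envied endpoint leaves envied bundles unbounded, since an envied agent can have many such edges. The paper instead gives these pairs to agent~$0$, and that is exactly what Property~(3) is for.
\item The claim that EF2X towards $0$ ``holds trivially'' because $j$ sees at most two goods of $X_0$ is wrong. EF$k$X must survive deleting \emph{any} $k$ goods, so with a large $X_0$ you need that nobody envies $0$ at all.
\end{enumerate}

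You also misplace the source of the constant $3$. An envied agent receiving both twins on top of its own pair holds four goods, which only gives the EF4X of Corollary~\ref{cor:ofThm1}. The paper's key step is to split the pair $\{g,g_t\}$ shared by an envied agent $i$ and agent $0$. If $i$ is the only envied endpoint, it gets its preferred twin $g$ and $0$ gets $g_t$. Additivity and Property~(3) then give $v_i(X_i\cup\{g\})\ge v_i(\U^*_i(\X)\cup\{g\})\ge v_i(\U^*_i(\X)\cup\{g_t\})$, so $i$ does not envy $0$. If several envied endpoints exist, two of them get one twin each. Either way every envied agent ends with at most $2+1=3$ goods and no new envy arises.
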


To reach an EF3X allocation, we have to ensure that the (partial) allocation $\X$ that is reached satisfies the following properties:

\begin{tcolorbox}[colback=black!5!white,colframe=black!75!black]
\begin{enumerate}
    \item $\X$ is a (partial) EF2X orientation.
    \item For any agent $i \neq 0$, and good $g \in \U(\X): v_i(X_i) \geq v_i(\{g, g_t\})$ and $X_0=\emptyset$.
    \item For any envied agent $i$, $v_i(X_i) \geq v_i\left(\U^*_i(\X)\right)$. 
    \item For every good $g$, $g\in \U(\X)$ iff $g_t \in \U(\X)$. 
    \item Every agent receives pairs of twin goods; every envied agent receives a single pair of twin goods.
\end{enumerate}
\end{tcolorbox}

\paragraph{Proof Roadmap.}
The core of this result is similar to above, with the main difference to be that we offer the whole pair of twin goods. The only case that we may break the pair is in the final allocation for goods relevant to agent $0$, which is crucial to achieve EF3X; otherwise, we could guarantee an EF4X allocation as a corollary of Theorem~\ref{thm:girth3} (see Corollary~\ref{cor:ofThm1}).

We start with an initial allocation which satisfies Properties (1)-(2) (Section \ref{Section:step1EF3X}): Each agent is given an arbitrary priority, and  based on that priority we offer each of them to get their most valued unallocated pair of twin goods, except agent 0. 
Afterwards, in Algorithm \ref{Algo:Algo2EF3X} (section \ref{Section:step2EF3X}) for each envied agent $i$ we offer its relevant goods that are irrelevant to agent $0$.  
This is needed to satisfy Property (3). Again, in the case such a change occurs,
a further reallocation is needed in
order to preserve Property (2).
In all these procedures, Properties (4)-(5) are also satisfied.
 Finally, Algorithm \ref{Algo:FinalAllocationEF3X} allocates any remaining goods. Regarding the unallocated pairs of twin goods that are irrelevant to agent $0$ are given to some (arbitrarily chosen) non-envied incident agent, if there exists any, otherwise, they are given to agent $0$. Properties (2)-(3) guarantee the EF2X condition. Regarding the goods relevant to agent $0$ they are oriented appropriately, so that every envied agent receives at most 3 goods, and no envied agent envies agent $0$ (this is done by Property (3) and additivity); non-envied agents do not envy agent $0$ by Property (2).

\subsection{EF2X allocation for hypergraphs with girth at least 3} 
\label{sec:EF2X-simple}
We go on to present the details of the steps that lead to the proof of Theorem \ref{thm:girth3}, restated below.

\ThmEFtXsimple

We prove this theorem in the following three steps.

\subsubsection{Step 1: An EFX orientation such that each envied agent has at most one good}
\label{sec:step1_2X}

In this step every agent other than $0$ gets its most valuable unallocated good.

\begin{algorithm}[h]
\caption{Single Round Robin}
\label{Algo:Round Robin EF2X}
\raggedright\textbf{Input:} A hypergraph $\Hg = (V,E)$ of girth at least 3. \\ 
\textbf{Output:} An allocation $\X$ satisfying properties (1), (2) and (4).
\begin{algorithmic}[1]
\FOR{$i = n-1$ down to $0$ }
    \STATE $X_i\gets \emptyset$
\ENDFOR
\FOR{$i=n-1$ down to 1}
    \STATE $X_{i} \gets  \arg\max_{g \in \U_i(\X)} v_{i}(g)$
\ENDFOR

\end{algorithmic}
\end{algorithm}

\begin{lemma}
    Algorithm~\ref{Algo:Round Robin EF2X} outputs an allocation satisfying Properties (1), (2), (4).
\end{lemma}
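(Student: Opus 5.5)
The plan is to check Properties (1), (2), (4) directly from the structure of Algorithm~\ref{Algo:Round Robin EF2X}. Every agent receives at most one good, agent $0$ receives none, and each agent $i\neq 0$ picks, in order $n-1,\dots,1$, a most valuable good among $\U_i(\X)$ at the time of its turn. We may assume that $\U_i(\X)$ is empty, in which case $i$ gets nothing, or that the argmax is taken over relevant goods only. With that convention, every allocated good is relevant to its holder, so $\X$ is an orientation.

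\emph{Property (1), EFX.} Every bundle has size at most one, and removing any good from a singleton leaves the empty set. So for any $i,j$ and $g\in X_j$ we have $v_i(X_j\setminus\{g\})=v_i(\emptyset)=0\le v_i(X_i)$ by normalization and monotonicity. The EFX condition therefore holds trivially, and $\X$ is a partial EFX orientation.

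\emph{Property (4).} This is immediate, since every agent (envied or not) holds at most one good.

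\emph{Property (2).} $X_0=\emptyset$ because the allocation loop stops at $i=1$. Fix $i\neq 0$ and $g\in\U(\X)$ in the final allocation. If $g$ is irrelevant to $i$, then $v_i(g)\le v_i(\emptyset)=0$. (A good irrelevant to $i$ has zero singleton value: take $S=\emptyset$ in the definition of relevance.) Hence $v_i(g)=0\le v_i(X_i)$. If $g$ is relevant to $i$, note that $g$ is still unallocated at the end, so it was unallocated at $i$'s turn and lay in $\U_i(\X)$ then. Since $i$ chose an argmax, $v_i(X_i)\ge v_i(g)$. The pool only shrinks over time, which is exactly what this step relies on.

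There is no real obstacle here. The only point needing care is the convention when $\U_i(\X)=\emptyset$, and the fact that irrelevant goods have zero singleton value, which follows directly from the definition and normalization.
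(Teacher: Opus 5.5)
Your proposal is correct and follows the paper's proof: bundles of size at most one make Properties (1) and (4) immediate, and each agent's argmax choice over a pool that only shrinks gives Property (2), with $X_0=\emptyset$ because agent $0$ never picks. Your additional points fill in details the paper leaves implicit: chosen goods are relevant, so $\X$ is an orientation; irrelevant goods have zero singleton value; and you fix a convention for an empty $\U_i(\X)$.
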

\begin{proof}
Properties (1) and (4) are trivially satisfied since in $\X$ every agent receives a single good. Property (2) is satisfied, since when Algorithm~\ref{Algo:Round Robin EF2X} updates the bundle of an agent $i\neq 0$, $i$ is allocated its most valuable unallocated good, and thus it prefers $X_i$ to any unallocated good. Also, agent $0$ is not allocated any good, i.e., $X_0=\emptyset$.
\end{proof}

\subsubsection{Step 2: Reduce Envy.}
\label{sec:step2_2X}

Here,  we offer envied agents all their adjacent, irrelevant to $0$, unallocated goods in place of their current bundle. If an agent
prefers it, it is allocated
that bundle instead. Then, repeatedly, any
unallocated single good is offered to the agents until no agent prefers
any unallocated good.

\begin{algorithm}
\caption{Reducing Envy Algorithm}
\label{Algo:Algo2EF2X}
\raggedright\textbf{Input:} $\X$ satisfying Properties (1), (2) and (4). \\
\textbf{Output:} An allocation $\X$ satisfying Properties (1)-(4).
\begin{algorithmic}[1]
\WHILE{ there exists an envied agent $i$ s.t. $v_i(\U^*_i(\X)) > v_i(X_i)$}
\STATE $X_i \gets \U^*_i(\X)$

    \WHILE{there exists $g \in \U(\X)$ s.t. for some agent $j \neq 0:v_j(g) > v_j(X_j)$}
        \STATE $X_j \gets \arg\max_{g \in \U_j(\X)} v_j(g)$

    \ENDWHILE

\ENDWHILE
\end{algorithmic}
\end{algorithm}

A first thing to note is that whenever some $X_i$ changes  due to the allocating steps of lines 2 or 4, the value that agent $i$ gets strictly increases. Since the bundles (and thus the values) that an agent may get are finite, this directly implies that Algorithm~\ref{Algo:Algo2EF2X} terminates.

\begin{lemma}
    Algorithm~\ref{Algo:Algo2EF2X} outputs an allocation satisfying Properties (1)-(4).

\end{lemma}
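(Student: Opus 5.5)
We argue by induction over the execution of Algorithm~\ref{Algo:Algo2EF2X}: Properties (1), (2) and (4) hold at the start (by the previous lemma), and every execution of line 2 or line 4 preserves them. Properties (2) and (3) then follow from the loop exit conditions. Termination is already established, since each reassignment strictly increases the value of the receiving agent. Throughout we use one invariant: every bundle other than $X_0$ consists of unallocated goods that were relevant to its new owner when assigned, and $X_0=\emptyset$ always, because no step ever allocates to agent $0$.

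\textbf{Orientation and EFX (Property (1)).} Lines 2 and 4 give agent $i$ only goods in $\U_i(\X)$, so the allocation stays an orientation. For EFX, a reassignment to agent $i$ raises $v_i(X_i)$, so $i$'s own EFX conditions toward others can only improve. Agents whose bundles do not change keep their values, so the only new violations could be other agents $k$ toward the newly changed bundle $X_i$.

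\emph{Line 4.} The new bundle is a single good, so EFX toward it holds trivially.

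\emph{Line 2.} The new bundle is $\U^*_i(\X)$. Take any agent $k$ relevant to some good in it. No such good is relevant to $0$, so $k\neq 0$. By girth at least 3 (Observation~\ref{obs:at-most-1-relevant-common-edge}), $k$ and $i$ share at most one good, so $X_i$ contains at most one good $g$ relevant to $k$. For monotone valuations, $v_k(X_i\setminus\{h\})\le v_k(X_i)=v_k(g)$, where the equality holds because the other goods are irrelevant to $k$. This last step needs care: it requires that irrelevant goods add no value, which follows from the definition of relevance applied one good at a time. Since $g$ was unallocated before line 2, Property (2) gives $v_k(g)\le v_k(X_k)$. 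Hence $k$ does not even envy $X_i$.

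\textbf{Property (4).} The only multi-good bundles are created in line 2, and the argument above shows that such an agent $i$ is not envied immediately afterwards. The main obstacle is that $i$ could become envied later, when another agent's value drops, or when Property (2) is temporarily broken for some $k$. Values never drop, and goods only leave the pool, so the inner loop only ever restores Property (2). We therefore show the stronger invariant that an agent holding more than one good is envied by nobody. Envy toward $X_i$ by $k$ requires $v_k(X_i)=v_k(g)>v_k(X_k)$ for the single good $g$ relevant to $k$. Since $v_k(X_k)$ is nondecreasing and $X_i$ only changes by $i$ reassigning to a new bundle, the inequality $v_k(g)\le v_k(X_k)$, established when $g$ was taken from the pool, persists. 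This gives Property (4). Because envied agents hold single goods, and line 4 also produces single goods, the invariant covers every reassignment.

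\textbf{Properties (2) and (3).} At termination the inner loop condition fails, which is exactly Property (2), with $X_0=\emptyset$ by the invariant. The outer loop condition fails, which is exactly Property (3).
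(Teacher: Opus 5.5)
Your proof is correct and follows essentially the same route as the paper's. Both arguments get the orientation from lines 2 and 4, show that nobody envies a new bundle $\U^*_i(\X)$ (girth $\geq 3$ leaves at most one good relevant to each $k\neq 0$, and Property (2) holds when line 2 fires), use that agents' values only increase so this non-envy persists and gives Property (4), and read Properties (2)--(3) off the loop exit conditions. One wording fix: your opening claim that line 2 and line 4 each preserve Property (2) is false, since a released good can break it (which is why the inner loop exists); what your argument actually needs, and what is true, is that Property (2) holds whenever line 2 executes, because line 2 runs only on the initial allocation or right after the inner loop exits.
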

    
\begin{proof}
   Regarding Properties (2) and (3), the condition of the outer while-loop is the negation of Property (3) and the condition of the inner while-loop is the negation of Property (2). Thus, Property (2) and Property (3)  hold for the allocation returned by Algorithm \ref{Algo:Algo2EF2X}.
    
    We next show that Property (1), that initially holds, will not break.
    All the updates at lines 2 and 4 orient goods towards their endpoints and thus $\X$ will remain an orientation. Moreover, the allocation remains EFX because envied agents are allocated only one good, which further means that Property (4) is preserved. To see this consider any agent $i$ assigned $\U^*_i(\X)$ at line 2 (which is the only way that an agent is assigned more than one good). Since the allocation is an orientation, $i$ receives at most one good relevant to any other agent (Observation~\ref{obs:at-most-1-relevant-common-edge}), by definition of $\U^*_i(\X)$, all the goods allocated to $i$ are irrelevant to $0$. Suppose that good $g \in \U^*_i(\X)$ is relevant to some other agent $j \neq 0$. Since Property (2) is satisfied before the update of $i$'s bundle (due to the inner while-loop and to the fact that Property (2) was initially satisfied), $j$ does not prefer $g$ to its current bundle at the time of the update and afterwards (since the agents' values may only increase), hence $j$ does not envy $i$. 
\end{proof}

\subsubsection{Step 3: Final allocation}
\label{sec:step3_2X}

The final allocation considers 2 cases for the unallocated goods. First, consider any unallocated good $g$ irrelevant to agent $0$. If $g$ contains any {\em non-envied} agent, $g$ is allocated arbitrarily to any of them, otherwise $g$ is allocated to $0$. Second, consider any unallocated good $g$ relevant to agent $0$, then if $g$ contains at least one {\em envied} agent, $g$ is allocated arbitrarily to any of them, otherwise it is allocated to agent $0$.

\begin{algorithm}[H]
\caption{Final allocation}
\label{Algo:FinalAllocationEF2X}
\raggedright\textbf{Input:} $\X$ satisfying Properties (1)-(4).\\
\textbf{Output:} An EF2X allocation. 
\begin{algorithmic}[1]

\WHILE{ $\exists$ an unallocated good $g$}
    \IF{$g$ is irrelevant to $0$, and $\exists$ a non-envied agent $i \neq 0$ that $g$ is relevant to $i$}
        \STATE $X_i \gets X_i \cup \{g\} $
    \ELSIF{$g$ is relevant to $0$, and $\exists$ an envied agent $i \neq 0$ that $g$ is relevant to $i$}
        \STATE $X_i \gets X_i \cup \{g\} $
    \ELSE
        \STATE $X_0 \gets X_0 \cup \{g\} $
    \ENDIF
        
\ENDWHILE
\end{algorithmic}
\end{algorithm}

\begin{observation}\label{obs:envied-have-only-two}
    In the allocation produced by Algorithm~\ref{Algo:FinalAllocationEF2X}, each envied agent is allocated at most two goods.
\end{observation}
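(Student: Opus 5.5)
The plan is to track which agents can gain goods in Algorithm~\ref{Algo:FinalAllocationEF2X}, and to use Property (4) of the input allocation: every envied agent holds at most one good when the final step begins. We also need that envied agents are never created during the algorithm. By ``envied agent'' we mean envied with respect to the input allocation $\X$, which is exactly the status the algorithm tests at lines 2 and 4. Under that reading, an envied agent $i\neq 0$ can receive a new good only through line 4, and the goods it can receive there are relevant both to $0$ and to $i$.

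The key step is to bound how many such goods exist. By Observation~\ref{obs:at-most-1-relevant-common-edge}, since the hypergraph has girth at least 3, agents $i$ and $0$ share at most one good. So at most one good is relevant to both $i$ and $0$. Therefore $i$ receives at most one extra good in the whole run of Algorithm~\ref{Algo:FinalAllocationEF2X}. Together with Property (4), $i$ ends with at most $1+1=2$ goods.

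It remains to handle agents that become envied only during the final allocation, in case ``envied'' is read with respect to the final allocation. A non-envied agent $i$ may receive goods via line 2, and afterwards someone might envy it. I would argue that the observation concerns the agents classified as envied in the input $\X$, since that is what the subsequent EF2X argument needs. If the final-allocation reading is intended, the plan is to show that goods added at line 2 never create envy toward $i$, relying on Property (2) and orientation. This needs a separate argument: a neighbour $j$ of $i$ shares at most the single good $g$ with $i$, so $v_j(X_i)$ changes only through $g$, and $v_j(X_j)\ge v_j(g)$ by Property (2).

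However, $j$ could also envy the combination of $i$'s original good and $g$, so this reading would not be fully settled by that argument. I would therefore state the observation with the input-allocation meaning of ``envied''. The main obstacle is pinning down this definitional point; the counting itself is immediate from Property (4) and Observation~\ref{obs:at-most-1-relevant-common-edge}.
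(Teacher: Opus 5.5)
Your argument is the paper's. An envied agent $i$ can gain goods only in the \emph{elsif} branch (line 5), and each such good is relevant to both $i$ and $0$. Observation~\ref{obs:at-most-1-relevant-common-edge} allows at most one such good, so Property (4) gives at most two goods. The paper likewise reads ``envied'' relative to the input allocation, and proves separately, in the next lemma, that no new envied agents arise.

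Your leftover worry about the other reading is unfounded. Every good in $X_i$ is relevant to $i$ because $\X$ is an orientation, so by Observation~\ref{obs:at-most-1-relevant-common-edge} none of them other than $g$ is relevant to $j$. Hence $v_j(X_i\cup\{g\})=v_j(g)\le v_j(X_j)$ by Property (2).
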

\begin{proof}
    Any envied agent $i$ may only be allocated extra goods in line 5, and those are relevant to agent $0$. By Observation \ref{obs:at-most-1-relevant-common-edge}, only one good can be relevant to both $i$ and $0$, hence $i$ is allocated at most one extra good. 
\end{proof}

\begin{lemma}
    Algorithm~\ref{Algo:FinalAllocationEF2X} outputs an EF2X allocation.
\end{lemma}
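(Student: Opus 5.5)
Let $\X$ be the input allocation satisfying Properties (1)--(4), and $\X'$ the output of Algorithm~\ref{Algo:FinalAllocationEF2X}. Two facts drive the proof. First, every agent's own value only increases from $\X$ to $\X'$, since bundles only grow. Second, the set of envied agents in $\X$ will serve as the reference point: we call an agent envied or non-envied according to its status in $\X$, because that is how the algorithm branches. We fix an ordered pair $(i,j)$ and show that $i$ is EF2X towards $j$ in $\X'$, by cases on $j$.

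\emph{Case $j \neq 0$ non-envied in $\X$.} Then $X'_j = X_j \cup A$, where every $g\in A$ is irrelevant to $0$ and relevant to $j$ (line 3). For $i \ne j$, goods relevant to $i$ in $A$ lie in the unique good shared by $i$ and $j$ (Observation~\ref{obs:at-most-1-relevant-common-edge}), so at most one such good. Since $j$ was non-envied, $v_i(X_i)\ge v_i(X_j)$. Removing any two goods from $X'_j$ leaves at most $X_j$ plus at most one relevant good $g$. Here I will use a structural fact: the shared good $g$ and any good in $X_j$ relevant to $i$ also lie in the common edge. So $X_j\cup\{g\}$ contains at most two goods relevant to $i$, namely that shared edge and possibly its copy — but in the simple hypergraph there is only one shared good. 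Hence at most one good of $X'_j$ is relevant to $i$, and $i$ does not even envy $j$ after removing a single good. For $i=0$, $X'_j\setminus A = X_j$ and $X_0=\emptyset$ requires care: $0$ values goods of $X_j$ only if they are relevant to $0$. Again $0$ and $j$ share at most one good, so after removing one good, $0$ values the rest at $0$. In general, the key reduction is that $v_i(S)=v_i(S\cap\{\text{goods relevant to } i\})$ for monotone valuations under the relevance definition, and $X'_j$ contains at most one good relevant to $i\neq j$ when $j\neq 0$. The latter holds because $X'_j$ is an orientation to $j$ and the girth is at least 3. So EF1X, hence EF2X, holds trivially for all $j\neq 0$.

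\emph{Case $j=0$.} This is the real content. $X'_0$ consists of goods $g$ given in line 7: either $g$ irrelevant to $0$ with all incident agents envied, or $g$ relevant to $0$ with all incident agents non-envied (other than $0$). Fix $i\neq 0$. The goods of $X'_0$ relevant to $i$ are all in edges containing $i$. If $i$ is non-envied, only goods of the second type can be relevant to it; by Observation~\ref{obs:at-most-1-relevant-common-edge} there is at most one such good (shared with $0$), and by Property (2) $v_i(X'_i)\ge v_i(X_i)\ge v_i(g)$. If $i$ is envied, only first-type goods are relevant, and all lie in $\U^*_i(\X)$, so Property (3) gives $v_i(X'_i)\ge v_i(X_i)\ge v_i(\U^*_i(\X))\ge v_i(X'_0)$. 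This holds because $v_i(X'_0)$ depends only on its $i$-relevant part, which is a subset of $\U^*_i(\X)$, and by monotonicity. So no agent envies $0$.

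The main obstacle is making rigorous that $v_i(S)=v_i(S\cap R_i)$ for general monotone valuations, where $R_i$ is the set of goods relevant to $i$. I would prove it by removing irrelevant goods one at a time: irrelevance means adding the good never increases value, and monotonicity gives equality. With that lemma, combined with Observations~\ref{obs:at-most-1-relevant-common-edge} and \ref{obs:envied-have-only-two}, the envied-agent case for $i=j$'s neighbors also follows, since envied bundles have at most two goods, so removing two leaves $\emptyset$. This completes the EF2X check.
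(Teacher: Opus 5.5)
\textbf{There is a genuine gap, in the case where $j\neq 0$ is non-envied in $\X$.} The rest of your argument is sound. Your case $j=0$ is correct and is essentially the paper's argument: Property (2) covers agents non-envied in $\X$, Property (3) covers envied ones, and the tests in lines 2 and 4 ensure each agent sees only one kind of good in $X'_0$. The envied-$j$ case via Observation~\ref{obs:envied-have-only-two} is also fine. Freezing envied/non-envied status at the input $\X$ is the right way to read the algorithm.

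In the non-envied case you correctly show that $X'_j$ contains at most one good $g^*$ relevant to $i$. But you then conclude that EFX, hence EF2X, holds ``trivially'', and counting relevant goods cannot give this. EF2X quantifies over every pair of removed goods. When $|X'_j|\ge 3$ one may remove two goods irrelevant to $i$ and keep $g^*$, so what you actually need is $v_i(X'_i)\ge v_i(g^*)$. Your remark that $i$ does not envy $j$ ``after removing a single good'' only covers removing $g^*$ itself.

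Your blanket claim for all $j\neq 0$ is in fact false. Take an envied $j$ with $X_j=\{e\}$, where $e$ is the good shared with $i$ and $i$ envies it, and let $j$ receive one line-5 good. Then $X'_j$ has only one $i$-relevant good, yet it is not EFX for $i$. You rescue envied $j$ separately, but the same faulty reasoning is what you rely on for non-envied $j$.

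\textbf{The missing ingredient is a value bound; for goods added in line 3 it must come from Property (2).} Write $X'_j=X_j\cup A$ as you did.

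\emph{Case $g^*\in X_j$, or no such good exists.} Then $A$ is irrelevant to $i$, so $v_i(X'_j)=v_i(X_j)\le v_i(X_i)\le v_i(X'_i)$, because $j$ is non-envied in $\X$. For $i=0$ this reads $v_0(X'_j)=v_0(X_j)=0$, since $A$ is irrelevant to $0$ and $X_0=\emptyset$.

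\emph{Case $g^*\in A$.} Then $i\neq 0$, because line-3 goods are irrelevant to $0$. Also $X_j$ contains no good relevant to $i$, so $v_i(X'_j)=v_i(g^*)$. Since $g^*\in\U(\X)$, Property (2) gives $v_i(g^*)\le v_i(X_i)\le v_i(X'_i)$.

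This is exactly the paper's line-3 argument (Observation~\ref{obs:at-most-1-relevant-common-edge} plus Property (2)). It yields the stronger fact that $i$ does not envy $j$ at all, i.e., no new envy is created. Without Property (2), nothing stops a pool good shared with $j$ from being worth more to $i$ than $X_i$, and EF2X towards $j$ would fail.
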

\begin{proof}

    We will show that Algorithm~\ref{Algo:FinalAllocationEF2X} does not create any further envy, i.e., no non-envied agent becomes envied. Then, by Observation \ref{obs:envied-have-only-two} the lemma follows.

    We only need to consider lines 3 and 7, and show that the updated bundle is not envied. Consider any non-envied agent $i \neq 0$ that its bundle is updated at line 3. By Observation \ref{obs:at-most-1-relevant-common-edge}, any other agent $j$ for whom $g$ is relevant, has value $0$ for the bundle that $i$ had before, and due to Property (2), $j$ will not envy $i$ after the update. 
    
    Regarding agent $0$, whose bundle may be updated at line 7, there are two cases to consider: $g$ is irrelevant to agent $0$, and only envied agents may find it relevant or $g$ is relevant to agent $0$ and only a non-envied agent may find it relevant. In the first case, note that agent $0$ does not receive any good relevant to both agent $0$ and some envied agent $i$, due to the if-statement at line 4. Therefore, due to Property (3) on the allocation $\X$ that Algorithm~\ref{Algo:FinalAllocationEF2X} receives as input, no envy is created towards agent $0$, even if agent $0$ receives the whole $\U^*_i(\X)$. In the second case, due to Property (2), allocating $g$ to agent $0$ does not cause any envy from agents that find $g$ relevant. 
\end{proof}

\subsubsection{Complexity of constructing the allocation}
\begin{lemma}
    An EF2X allocation for hypergraphs with girth at least 3 can be constructed in polynomial time.
\end{lemma}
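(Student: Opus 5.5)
We bound the running time of each of the three algorithms separately, assuming each valuation is given by a value oracle; the main work is the reducing-envy algorithm. Algorithm~\ref{Algo:Round Robin EF2X} performs $n-1$ iterations, each scanning at most $m$ goods with $O(m)$ oracle calls, so it runs in $O(nm)$ time. Algorithm~\ref{Algo:FinalAllocationEF2X} handles each unallocated good once. For each good it checks the incident agents for relevance and envied status; the envied status is computed once from the input allocation, since no new envy is created. Hence it is polynomial.

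For Algorithm~\ref{Algo:Algo2EF2X}, the termination argument given earlier only uses strict increase of values, which in general yields a pseudo-polynomial or even exponential bound. We replace it with a combinatorial potential. The key observation concerns an agent $i$ after its bundle is set to $\U^*_i(\X)$ at line~2: this bundle contains every unallocated good relevant to $i$ and irrelevant to $0$. We plan to show that $i$ never again triggers line~2 or line~4.

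\textbf{Line 2.} Goods in $\U^*_i(\X)$ can only later appear in the pool if some other agent $j$ leaves them there. Agent $j$ only ever holds either $\U^*_j$ (goods irrelevant to $0$ that $j$ took from the pool) or a single good. When $j$ releases a good $g$ relevant to $i$, $i$ may gain new options. So we need a cleaner argument, as follows. Once $i$ has performed line~2, Property~(4) is no longer relevant to $i$. We argue that $i$ is then never envied: any good in its bundle relevant to another agent $j\neq 0$ is worth at most $v_j(X_j)$ by Property~(2), and $j$ shares at most one good with $i$ by Observation~\ref{obs:at-most-1-relevant-common-edge}. Agent $0$ finds none of these goods relevant. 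So $i$ is non-envied forever, and line~2, which requires an envied agent, cannot select $i$ again.

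\textbf{Line 4.} An agent at line 4 switches to a single strictly better good. We bound these switches by the following potential. Consider, for each agent $j$, the rank of its current single good in $j$'s preference order over its at most $m$ relevant goods. Each execution of line~4 strictly improves this rank. Line~2 executions occur at most $n$ times, and after line~2 an agent's value exceeds all of its single goods in the pool, so line~4 need not apply to that agent again. Between line~2 executions, each agent therefore improves its rank at most $m$ times, giving at most $O(n\cdot nm)$ executions of line~4 overall, each requiring polynomially many oracle calls.

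The main obstacle is making the claim ``$i$ never re-enters line~2'' rigorous. This requires confirming that envy toward $i$ cannot appear later, since other agents' values only increase, and that Property~(2) holds at the moment of $i$'s update. Both facts were already used in the lemma proving Properties~(1)--(4), so we expect to reuse that argument directly. Combining the three bounds yields polynomial time overall.
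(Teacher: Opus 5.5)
\textbf{There is a genuine gap.} Your plan follows the same route as the paper's proof: each agent executes line~2 of Algorithm~\ref{Algo:Algo2EF2X} at most once, and line-4 updates are bounded by strict value increases. However, the step that carries it is false.

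Your argument shows only that the bundle $\U^*_i(\X)$ is unenvied \emph{while $i$ holds it}. Nothing prevents $i$ from leaving it. Property~(2) constrains only goods in the pool, so a good held by another agent $k$ at the time of $i$'s update can be worth more to $i$ than $\U^*_i(\X)$. When $k$ later executes line~2, that good is released and $i$ takes it at line~4. If the good is relevant to $0$, agent $0$ (with $X_0=\emptyset$) now envies $i$.

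Here is a concrete instance with additive valuations and agents $0,\dots,4$:
\begin{itemize}
\item goods $g=\{0,2,4\}$, $a=\{1,2\}$, $p=\{2,3\}$, and singleton goods $b_1,b_2,b_3$ (at $2$), $d_1,d_2$ (at $4$), $q_1,q_2$ (at $3$), $f$ (at $1$);
\item values $v_0(g)=1$, $v_2(g)=100$, $v_4(g)=10$, $v_2(p)=90$, $v_3(p)=20$, $v_2(a)=5$, $v_1(a)=50$, $v_2(b_r)=4$, $v_4(d_r)=6$, $v_3(q_r)=15$, $v_1(f)=1$.
\end{itemize}
The round robin gives $g,p,a,f$ to $4,3,2,1$. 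Suppose the outer loop selects $2$, then $3$, then $4$. Agent $2$ takes $\{b_1,b_2,b_3\}$ at line~2, then $p$ and later $g$ at line~4, and so becomes envied by $0$. It then takes $\{b_1,b_2,b_3,p\}$ (value $102>100$) at line~2 a second time.

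So the claims ``$i$ never again triggers line~2 or line~4'', ``$i$ is non-envied forever'' and ``line~4 need not apply to that agent again'' are all false. Consequently the bound of $n$ line-2 executions, on which your segment-wise $O(n\cdot nm)$ count rests, is unjustified. The paper's proof asserts the same ``at most once'' bound, justified by ``no further envy is created'', so following it does not close this gap.

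\textbf{How to repair it.} The repair is short and uses only facts you already have.
\begin{itemize}
\item \emph{Line~4.} An agent's value never decreases, and a line-4 update hands it a single relevant good worth strictly more than its current bundle. Hence the goods an agent receives at line~4 over the \emph{entire} run are pairwise distinct. This gives at most $m$ line-4 updates per agent, with no segmentation by line-2 executions needed.
\item \emph{Line~2.} Your argument does prove that a bundle assigned at line~2 is unenvied at assignment, by Property~(2), Observation~\ref{obs:at-most-1-relevant-common-edge}, and irrelevance to $0$. It stays unenvied while held, since other agents' values only grow and $X_0$ stays empty. Because line~2 requires $i$ to be envied, any two line-2 updates of $i$ must be separated by a line-4 update of $i$. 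So $i$ performs at most $1+m$ line-2 updates.
\end{itemize}
This yields $O(nm)$ loop iterations in total, each checkable with polynomially many value queries. With this fix, the rest of your proposal, including the bounds for the round robin and the final allocation, goes through.
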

\begin{proof}
    We will argue that Algorithms \ref{Algo:Round Robin EF2X}, \ref{Algo:Algo2EF2X}, \ref{Algo:FinalAllocationEF2X} have polynomial time complexity on the number of agents $n$.

    Algorithm \ref{Algo:Round Robin EF2X} gives each agent its most valued unallocated good. Due to Observation \ref{obs:at-most-1-relevant-common-edge}, each agent has at most $n$ relevant goods. Therefore, finding the good with maximum value for each agent requires to check at most $n$ goods, hence Algorithm \ref{Algo:Round Robin EF2X} has time complexity $O(n^2)$.

    Algorithm \ref{Algo:Algo2EF2X} does not create any further envy, and at each loop of the outer while-loop one envied agent becomes non-envied. 
    Hence, each agent can update its bundle at line 2, at most once. Regarding the inner while-loop, each agent may update its bundle by receiving a single good at most $n-1$ times (due to Observation \ref{obs:at-most-1-relevant-common-edge}). In total, each agent may update its bundle at most $n$ times, so the time complexity is $O(n^2)$.

    Concerning Algorithm \ref{Algo:FinalAllocationEF2X}, the outer while-loop can be executed at most $O(n^2)$ times (which is the maximum number of edges/goods in the graph), while the procedure of checking if a good is relevant to an envied agent requires $O(n)$ time complexity, in total the time complexity is $O(n^3)$.

    Overall, an EF2X allocation can be constructed in  polynomial time complexity and specifically in time $O(n^3)$.
\end{proof}

We remark that for multi-hypergraphs of girth at least 3 and multiplicity $k\geq 1$, if we handle all goods of the same set of agents together as a single good and construct an EF2X allocation as above, then each envied agent would hold at most $2k$ goods, resulting in an EF2kX allocation. We present this as a corollary of Theorem \ref{thm:girth3}.

\begin{corollary}
   \label{cor:ofThm1} 
   Instances on multi-hypergraphs of girth at least 3 and multiplicity at most $k$ with general monotone valuations always admit a polynomial-time constructible EF2kX allocation.
\end{corollary}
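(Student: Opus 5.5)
The plan is to reduce the statement to Theorem~\ref{thm:girth3} by merging parallel edges. Given a multi-hypergraph instance $(N,M,\vv)$ of girth at least 3 and multiplicity at most $k$, group the goods by the set of agents (vertex set of the edge) they correspond to. For every such group $B\subseteq M$, which has $|B|\le k$, introduce a single ``super-good'' $b_B$. This gives a new instance $(N,M',\vv')$ whose underlying hypergraph is exactly the simple hypergraph $\Hg'$ obtained by deleting repetitions. By definition of the girth of a multi-hypergraph, $\Hg'$ has girth at least 3.

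Next we define the valuations of the new instance. For $S'\subseteq M'$ let $v'_i(S')=v_i\bigl(\bigcup_{b_B\in S'}B\bigr)$. Each $v'_i$ is normalized and monotone, so it is a general monotone valuation. Relevance is inherited correctly: if $b_B$ is relevant to $i$, then some good of $B$ is relevant to $i$, so $i$ lies in the edge. Hence $(N,M',\vv')$ is a valid hypergraph instance of girth at least 3 with general monotone valuations, and Theorem~\ref{thm:girth3} yields, in polynomial time, an EF2X allocation $\X'$ of it. We then expand it by $X_i=\bigcup_{b_B\in X'_i}B$. This is a complete allocation of $M$ with $v_i(X_j)=v'_i(X'_j)$ for all $i,j$.

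It remains to verify EF2kX. Fix agents $i,j$ and any set $T\subseteq X_j$ with $|T|=2k$ (if $|X_j|<2k$, removing any $2k$ goods is vacuous or leaves the empty set, so that case is trivial). We need $v_i(X_i)\ge v_i(X_j\setminus T)$. The natural route is as follows. Choose super-goods $b_{B_1},b_{B_2}\in X'_j$ such that $T$ covers $B_1\cup B_2$. If $X'_j$ has at least two super-goods this would give $X_j\setminus T\subseteq X_j\setminus(B_1\cup B_2)$, and EF2X of $\X'$ plus monotonicity would finish. But an arbitrary $T$ of size $2k$ need not cover two whole groups, so this direct reduction fails for general $T$. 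This is the main obstacle.

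To overcome it, I would use the structure of the allocation from the proof of Theorem~\ref{thm:girth3} rather than EF2X as a black box. That proof shows that every agent that is envied in $\X'$ holds at most two super-goods (Observation~\ref{obs:envied-have-only-two}). Hence every agent envied in $\X$ holds at most $2k$ actual goods. For such $j$, removing any $2k$ goods empties $X_j$, so any envy towards $j$ disappears. For a non-envied $j$ in $\X'$ we have $v_i(X_i)=v'_i(X'_i)\ge v'_i(X'_j)=v_i(X_j)\ge v_i(X_j\setminus T)$. So EF2kX holds in both cases, and the running time is that of Theorem~\ref{thm:girth3} plus linear-time grouping and expansion.
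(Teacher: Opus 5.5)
Your proof is correct and is essentially the paper's argument: treat each group of parallel goods as a single good, run the EF2X construction of Theorem~\ref{thm:girth3} on the resulting simple hypergraph of girth at least 3, and use that every envied agent ends with at most two merged goods, i.e., at most $2k$ original goods. You are also right that EF2X of the merged instance alone would not suffice; what makes the reduction work is the structural fact from Observation~\ref{obs:envied-have-only-two}, together with the final step creating no new envy, which is exactly what the paper's remark relies on.
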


In the following section we show that specifically for $k=2$ we can get a guarantee of EF3X (in polynomial time) instead of EF4X. 

\subsection{EF3X allocations for multi-hypergraphs with girth at least 3 and multiplicity at most 2}\label{sec:ef2x_multi}
\label{sec:EF2X-multi}

We proceed to a more generalized result in cases of multi-hypergraphs, with multiplicity at most 2. In this section, we present the details of the steps that lead to the proof of Theorem \ref{thm:girth3EF3X}, restated below.
 
\ThmEFtXmulti

\subsubsection{Step 1: An EF2X orientation such that each envied agent has at most two goods}\label{Section:step1EF3X}

We generalize the approach from the previous section, we offer each agent an unallocated good and its twin good together; we always preserve that twin goods are either both allocated or both unallocated (Property (4)).

\begin{algorithm}[H]
\caption{\FixPropMultTwo}
\label{Algo:FixProp123}
\raggedright\textbf{Input:} An allocation $\X$ which satisfies Properties (1), (4), (5).\\
\textbf{Output:} An allocation $\X$ satisfying Properties (1), (2), (4) and (5).
\begin{algorithmic}[1]
\WHILE{there exist $g,g_t \in \U(\X)$ s.t. for an agent $i \neq 0:v_i(\{g, g_t\}) > v_i(X_i)$} 

    \STATE $X_i \gets \arg\max_{g, g_t \in \U(\X)} v_i(\{g,g_t\})$ 

\ENDWHILE

\end{algorithmic}
\end{algorithm}

\begin{lemma} \label{lem:FixProp123}
    Algorithm \ref{Algo:FixProp123} terminates and outputs an allocation which satisfies Properties (1), (2), (4) and (5).
\end{lemma}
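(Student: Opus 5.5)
We follow the proof of the corresponding lemmas for Algorithms~\ref{Algo:Round Robin EF2X} and~\ref{Algo:Algo2EF2X}, working with pairs of twin goods in place of single goods. Termination comes first. Each execution of line 2 strictly increases $v_i(X_i)$ for the updated agent $i$, because the loop guard provides a pair $\{g,g_t\}$ with $v_i(\{g,g_t\})>v_i(X_i)$, and the argmax is at least that. No other bundle changes in that step, so every agent's value is non-decreasing throughout. By Observation~\ref{obs:repetition-common-edge} and the convention that every edge has multiplicity exactly 2, agent $i$ has at most $n-1$ relevant twin pairs. Its value therefore takes at most $n$ distinct values, and the loop runs at most $O(n^2)$ times. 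The argmax in line 2 is taken over pairs relevant to $i$; if it were not, $i$ would gain nothing, so this is without loss of generality.

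Next come Properties (2), (4) and (5). At termination the negation of the loop guard is exactly Property (2) for agents $i\neq 0$. Agent $0$ is never updated, and $X_0=\emptyset$ is part of the input (Property (5) together with the setting). For Property (4): line 2 moves a full pair $\{g,g_t\}$ from the pool to $X_i$ and releases $i$'s old bundle back to the pool. By Property (5) on the input, and inductively afterwards, that old bundle is a union of twin pairs, so the pool stays closed under twins. Property (5) holds because every updated agent holds exactly one twin pair after its update, and agents that are never updated keep bundles that are unions of pairs by the input hypothesis. The one subtle point is envied agents: we must check that an agent which becomes envied holds a single pair, which is the content of the EF2X argument below.

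The main obstacle is Property (1), namely that the allocation remains an EF2X orientation. The orientation part is immediate, since only relevant pairs are allocated. For EF2X, the only bundles that change are those of updated agents $i$, which become $\{g,g_t\}$. Any agent $j$ removing two goods from such a bundle sees the empty set, so no EF2X violation can arise toward $i$. For agents $i$ whose bundles were not changed, other agents' values only increase, so EF2X toward them is preserved from the input. An agent that released its bundle now holds a better pair, so its own EF2X condition toward others only improves. Finally, any envied agent with a bundle of more than two goods must be an agent that was never updated. We argue that envy toward such an agent was already present in the input, since its bundle is unchanged and envier values only increase. Hence, by the input's Property (5), it already held a single pair. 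This closes the induction and gives Properties (1), (2), (4) and (5) at termination.
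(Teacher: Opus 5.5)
Your proposal is correct and follows essentially the paper's route: termination by monotone improvement (the paper uses Social Welfare, you count per agent), EF2X preserved because a freshly assigned twin pair can never violate EF2X while the updated agent only gains, Property (4) from released and taken sets being unions of twin pairs, and Property (2) as the negated loop guard. You are actually more careful than the paper in checking that no untouched agent with a larger bundle becomes envied, which the paper's one-line claim for Property (5) leaves implicit. One small correction: $X_0=\emptyset$ does not follow from Property (5); it holds because the algorithm never updates agent $0$ and every call in Algorithm~\ref{Algo:Algo2EF3X} passes an input with $X_0=\emptyset$.
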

\begin{proof}
    The algorithm terminates, since every change in the allocation increases the Social Welfare, and since this is upper bounded by a finite value the algorithm must terminate.

    Property (1) is preserved because at each loop of the while-loop, some agent is oriented 2 goods, so even if it is envied, EF2X is preserved, and that agent only increases its value, so it remains EF2X satisfied. Property (4) is satisfied after each loop of the while-loop since Property (5) was satisfied before: Property (5) guarantees that the set of goods released are only pairs of twin goods, and also the updated set includes a pair of twin goods. This also proves the preservation of Property (5). Finally, the condition of the while-loop is the negation of Property (2) and hence the algorithm returns an allocation satisfying Property (2).  
\end{proof}

\subsubsection{Step 2: Reducing Envy}\label{Section:step2EF3X}

Due to Property (2), any unallocated goods $g, g_t$ irrelevant to agent $0$  can be allocated to any incident non-envied agent, if there exists any, without causing further envy, so without violating EF2X. At this step (in Algorithm \ref{Algo:Algo2EF3X}), we perform reallocations so that we succeed essential conditions that allow us to also allocate all unallocated goods that are relevant only to envied agents by preserving EF2X. Algorithm \ref{Algo:Algo2EF3X} follows the basic idea of Algorithm \ref{Algo:Algo2EF2X}: we offer $\U^*_i(\X)$ to any envied agent $i$, and if it accepts, we perform the more special updates of \FixPropMultTwo($\X$) (Algorithm~\ref{Algo:FixProp123}) that considers allocating twin goods together to preserve Property (2) as well.

\begin{algorithm}[H]
\caption{Orienting hyperedges}
\label{Algo:Algo2EF3X}
\raggedright\textbf{Input:} A multi-hypergraph $\Hg = (V,E)$ of girth at least 3 and multiplicity at most 2.  \\
\textbf{Output:} An allocation $\X$ satisfying Properties (1)-(5).
\begin{algorithmic}[1]

\STATE Let $\X$ be the all empty allocation.
\STATE $\X \gets$ \FixPropMultTwo($\X$)

\WHILE{ there exists an envied agent $i$ s.t. $v_i(\U^*_i(\X)) > v_i(X_i)$}
    \STATE $X_i \gets \U^*_i(\X)$
    \STATE $\X \gets$ \FixPropMultTwo($\X$)
\ENDWHILE

\end{algorithmic}
\end{algorithm}

\begin{lemma}
    Algorithm \ref{Algo:Algo2EF3X} terminates, and outputs an allocation which satisfies Properties (1)-(5).
\end{lemma}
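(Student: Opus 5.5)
We prove termination and the five properties separately, following the proof of the analogous lemma for Algorithm~\ref{Algo:Algo2EF2X}. All valuations are additive.

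\emph{Termination.} Every bundle update strictly increases the updated agent's value. This holds at line~4 by the loop condition, and at each iteration of \FixPropMultTwo\ by its loop condition. All other bundles stay unchanged, so the social welfare strictly increases at each update. There are finitely many allocations, so the algorithm terminates. By Lemma~\ref{lem:FixProp123}, each call to \FixPropMultTwo\ terminates as well.

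\emph{Properties (2) and (3).} The empty allocation satisfies Properties (1), (4) and (5) trivially, so Lemma~\ref{lem:FixProp123} applies to the first call. The last operation of the algorithm is always a call to \FixPropMultTwo, so the output satisfies Property (2). The loop condition at line~3 is the negation of Property (3), so Property (3) holds on exit.

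\emph{Invariance of Properties (1), (4), (5) across line~4.} We assume they hold before line~4, and we also assume Property (2) holds, since line~4 always follows a call to \FixPropMultTwo.

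Line~4 releases $X_i$, which by Property (5) is a single twin pair, since $i$ is envied. It gives $i$ the set $\U^*_i(\X)$. By Property (4), this set is a union of twin pairs, so Property (4) is preserved. Each good in $\U^*_i(\X)$ is relevant to $i$, so the allocation remains an orientation.

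The main obstacle is to show that $i$ is not envied after the update. This is needed both for EF2X and for the second half of Property (5), because $i$ may now hold many pairs.

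By Observation~\ref{obs:repetition-common-edge}, any other agent $j$ shares with $i$ at most one edge and its twin. So $v_j(X_i^{new}) \le v_j(\{g,g_t\})$ for that single pair.
\begin{itemize}
\item If $j=0$, this pair is not in $\U^*_i(\X)$ by definition, so $v_0(X_i^{new})=0$.
\item If $j\neq 0$, Property (2) held just before the update, so $v_j(X_j)\ge v_j(\{g,g_t\}) \ge v_j(X_i^{new})$.
\end{itemize}
Values never decrease later, and $i$'s bundle only changes again by being replaced with a single pair. Hence $i$ stays non-envied as long as it holds more than one pair, and Property (5) holds.

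The released pair from $X_i$ goes back to the pool, consistent with Property (4). Other agents' bundles are unchanged and their values unchanged, so their EF2X status toward others is unaffected. Towards $i$, there is no envy at all, so Property (1) is preserved.

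\emph{Invariance across the subsequent call.} The call to \FixPropMultTwo\ then preserves Properties (1), (4) and (5) by Lemma~\ref{lem:FixProp123}. Its precondition is exactly what we just established, so induction over the outer loop completes the proof.

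One subtlety remains. An agent holding several pairs may later become envied only if its bundle changes, and its bundle can only change to a single pair via \FixPropMultTwo. I will state this explicitly to close the Property (5) argument.
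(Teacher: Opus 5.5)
Your proposal is correct and follows essentially the same route as the paper: social-welfare monotonicity for termination, Lemma~\ref{lem:FixProp123} to restore Properties (1), (2), (4), (5) after each call, Property (2) together with irrelevance to agent $0$ to show the agent updated at line~4 is not envied, and the loop condition for Property (3). You are somewhat more explicit than the paper in two places: you invoke Observation~\ref{obs:repetition-common-edge} to bound $v_j(X_i^{new})$, and you explain why a multi-pair bundle can never later become envied, both of which the paper leaves implicit.
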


\begin{proof}
    The algorithm terminates, because whenever an agent changes its bundle, the Social Welfare increases, and the Social Welfare is upper bounded by a finite amount.
    The empty allocation trivially satisfies Properties (1), (4) and (5), since there are no envied agents. After the execution of \FixPropMultTwo\ in line 2, due to Lemma \ref{lem:FixProp123}, the resulting allocation satisfies all properties but possibly Property (3). Those properties are satisfied at the beginning of each loop of the while-loop, as long as Properties (1), (4), (5) are preserved after the update of line 4, which is what we show next, since then   \FixPropMultTwo\ in line 5 would guarantee again all properties but possibly Property (3).

    First, we show that no agent envies an agent $i$ that changed its bundle in line 4.
    This is true for all agents but $0$ due to Property (2), which is satisfied at the beginning of each loop of while-loop (due to the execution of \FixPropMultTwo\ in lines 2 and 5). This is also trivially true for agent $0$, since the updated set allocated to $i$ is irrelevant to agent $0$.
    Moreover, after line 4 agent $i$ will be EF2X-satisfied towards any other agent as Property (1) was satisfied before, and $i$ only increases its value with this update.  

    Further note that Properties (4) and (5) are preserved after the update in line 4: Property (4) was satisfied before the update, meaning that $\U^*_i(\X)$ in line 4 includes only pairs of twin goods, preserving that way Property (5). On the other hand, Property (5) was  satisfied before the update, meaning that the goods released are pairs of twin goods, preserving that way Property (4). 
    
    Overall, 
    the execution of \FixPropMultTwo\ in line 5, guarantees that all properties but possibly (3) are satisfied at the beginning of each loop of the while-loop, and also after the termination of Algorithm \ref{Algo:Algo2EF3X}. 
    Finally, Property (3) is also satisfied after the termination of Algorithm \ref{Algo:Algo2EF3X} since it is the negation of the while-loop condition.

\end{proof}

\subsubsection{Step 3: Final Allocation}\label{Section:step3EF3X}

In this final step, we allocate any unallocated goods such that no further envy is created. Where to allocate a good is decided by two main factors: if it is relevant to agent 0 and how many envied agents find relevant that good.

In high level, we orient goods irrelevant to $0$ to non-envied agents, and then allocate $\U^*_i(\X)$ to agent $0$ for any envied agent $i$. Due to Properties (2)-(3) these allocations would not cause any further envy. For any unallocated good $g$ relevant to agent $0$, if there is a single incident envied agent  $i$, we offer $i$ its most valuable good between $g$ and its twin good $g_t$. Property (3) combined with the additivity of the goods guarantee that $i$ will not envy agent $0$, and Property (2) guarantees that nobody else envies agent $0$. For any other unallocated good $g$ relevant to $0$, if there are incident envied agents, they receive a single good between $g$ and $g_t$, otherwise, if there is no incident envied agents, agent $0$ receives both twin goods $g$ and $g_t$. Due to Property (5), we guarantee that any envied agent receives at most 3 goods. This is sufficient to show that the allocation is EF3X.

\begin{algorithm}
\caption{Final allocation}
\label{Algo:FinalAllocationEF3X}
\raggedright\textbf{Input:} $\X$ satisfying Properties (1)-(5).\\
\textbf{Output:} An EF3X allocation.
\begin{algorithmic}[1]

\FOR{every good $g \in \U^*(\X)$} 
\IF{ $\exists$ non-envied agent $i$ incident to $g$}
    \STATE $X_i \gets X_i \cup \{g, g_t\}$
\ELSE
    \STATE $X_0 \gets X_0 \cup \{g, g_t\}$
\ENDIF
\ENDFOR 

\FOR{every good $g \in \U_0(\X)$}
    \IF{ There are no envied agent $i$ incident to $g$}
        \STATE $X_0 \gets X_0 \cup \{g,g_t\}$
    \ELSIF{$\exists$ a single envied agent $i$ incident to $g$}
        \STATE w.l.o.g., let $v_i(g) \geq v_i(g_t)$ 
        \STATE $X_i \gets X_i \cup \{g\}$
        \STATE $X_0 \gets X_0 \cup \{g_t\}$
    \ELSE 
        \STATE Allocate each of $\{g,g_t\}$ to a different envied agent. \COMMENT{In this case, there are at least 2 envied agents}
    \ENDIF
\ENDFOR 
                
\end{algorithmic}
\end{algorithm}

\begin{lemma}
    Algorithm \ref{Algo:FinalAllocationEF3X} terminates and outputs an EF3X allocation.
\end{lemma}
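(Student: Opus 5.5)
Termination is immediate: both for-loops run over the finite pools $\U^*(\X)$ and $\U_0(\X)$, and each iteration only adds goods to bundles. We treat each pair $\{g,g_t\}$ once, since the pool holds twins together by Property (4). For EF3X, the plan copies the EF2X lemma for Algorithm~\ref{Algo:FinalAllocationEF2X} and proves two claims. The first claim is that no agent that was non-envied in the input $\X$ becomes envied. The second is that every agent envied in the input ends with at most $3$ goods. Given these, every envied bundle $X_j$ satisfies $v_i(X_j\setminus S)=v_i(\emptyset)=0$ for every set $S$ of three of its goods. Agents that received goods only gain value, so EF3X towards non-envied bundles is trivial and EF3X towards envied bundles follows from their size.

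For the size bound, an envied agent $i$ receives no goods in the first loop, since only non-envied agents or agent $0$ receive goods there. In the second loop $i$ receives a single good from a pair relevant to both $i$ and $0$. By Observation~\ref{obs:repetition-common-edge} there is only one such pair, so $i$ receives at most one extra good. Together with Property (5), $i$ holds at most $2+1=3$ goods.

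For the no-new-envy claim I go through the bundles that grow.
\begin{enumerate}
\item \emph{Line 3, a non-envied agent $i\neq 0$ receives $\{g,g_t\}$.} Take another agent $j$ incident to $g$. By Observation~\ref{obs:repetition-common-edge}, the only goods of $X_i$ relevant to $j$ are twins of the edge $g$. By Property (4), every twin pair is either fully allocated or fully unallocated, so $i$ held no goods relevant to $j$ before. Since valuations are additive, $v_j(X_i)=v_j(\{g,g_t\})\le v_j(X_j)$ by Property (2). If several pairs go to $i$, each involves a different set of other agents, so the same argument applies to each pair separately.
\item \emph{Line 17, two envied agents split a pair.} These agents were already envied, so nothing new arises.
\item \emph{Agent $0$ (the main case).} Agent $0$ starts empty. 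By additivity I bound each agent $j$'s value for $X_0$ by summing over the goods relevant to $j$.
\begin{itemize}
\item If $j$ was non-envied, the only goods in $X_0$ relevant to $j$ are one pair: either a pair given to $0$ at line 5 because no non-envied agent was incident (impossible for $j$), or the single pair shared with $0$ (line 10 or 14). So $v_j(X_0)\le v_j(\{g,g_t\})\le v_j(X_j)$ by Property (2).
\item If $j$ was envied, its relevant goods in $X_0$ come from two sources. The first is pairs from line 5, all inside $\U^*_j(\X)$, with total value at most $v_j(\U^*_j(\X))\le v_j(X_j^{\mathrm{old}})$ by Property (3). The second is the single pair shared with $0$. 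Line 10 cannot fire for it, since $j$ is an envied incident agent. Line 17 gives $0$ nothing. If line 14 applies, $j$ was the single envied agent, so $0$ gets only $g_t$ while $j$ gets $g$ with $v_j(g)\ge v_j(g_t)$.
\end{itemize}
\end{enumerate}

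In the last subcase, additivity gives
\[
v_j(X_0)\le v_j(\U^*_j(\X))+v_j(g_t)\le v_j(X_j^{\mathrm{old}})+v_j(g)=v_j(X_j^{\mathrm{new}}).
\]
This is the crux. It is exactly why the algorithm splits the pair instead of giving both twins to $0$, and it needs the fact that $\U^*_j(\X)$ and $\U_0(\X)$ are disjoint. Finally, agents not incident to a good assign it value $0$, so the analysis above covers every agent.
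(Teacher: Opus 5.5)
Your proof is correct and follows essentially the same route as the paper's. You show that no agent non-envied in the input becomes envied, using Property (2) for line 3 and for non-envied agents looking at $X_0$, and using Property (3) together with additivity and the choice $v_j(g)\ge v_j(g_t)$ for envied agents looking at $X_0$. You also show that each envied agent gains at most one good, from its unique pair shared with agent $0$, so it holds at most three goods by Property (5). Your aggregated bound over everything landing in $X_0$ matches the paper's final bound via $\U^*_i(\X)\cup\{g_t\}$. One small slip: the pair-splitting step is line 16 in the paper's numbering, not line 17.
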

\begin{proof}
    The algorithm terminates as it allocates at each iteration at least one good. We examine each possible augmentation done in different lines of the algorithm to show that the allocation returned by Algorithm \ref{Algo:FinalAllocationEF3X} satisfies the Lemma's conditions.
    We will either show that the augmentation creates no envy towards non-envied agents, which is sufficient for the desired allocation due to Property (1) not breaking towards those agents, or that for any agent $i$ that is envied is allocated at most one extra good, meaning that at the end it receives at most 3 goods (due to Property (5)) and EF3X is satisfied.
    
    We first argue about lines 1-7, where only irrelevant goods to agent 0 are allocated. Each allocation in lines 3 and 5 does not create any envy towards the non-envied agents which are given a pair of twin goods goods. To see this, in line 3 no envy is created towards $i$ due to Property (2). In line 5 no envy is created towards agent $0$ due to Property (3) as agent $0$ receives a pair of twin goods, $\{g,  g_t\}$, only when all agents that find them relevant are envied.

    Lines 8-18 deal with the remaining goods which all are relevant to agent 0. Line 10 does not create any envy as Property (2) holds. In line 13 agent $i$ is allocated an additional good, since Property (5) holds for agent $i$ before allocating $g$, then by allocating $g$, any agents that envied $i$ are going to be EF3X-satisfied; note that agent $i$ shares at most one pair of twin goods with agent $0$ (Observation \ref{obs:repetition-common-edge}), so it is considered at most once in the second for-loop. In line 14 all incident non-envied agents, or in other words, all incident agents  apart from $i$, do not envy agent 0 due to Property (2). We also argue that agent $i$ does not envy agent $0$. Assume $\X$ is the input allocation to Algorithm~\ref{Algo:FinalAllocationEF3X}. Since Property (3) holds in $\X$, $v_i(X_i) \geq v_i(\U^*_i(\X))$. After the update in $i$'s bundle in line 13, $i$'s bundle becomes $X_i\cup\{g\}$. By additivity, and the assumption w.l.o.g. that $v_i(g)\geq v_i(g_t)$ (line 12), it holds that $v_i(X_i\cup\{g\}) \geq v_i(\U^*_i(\X)\cup\{g\}) \geq v_i(\U^*_i(\X)\cup\{g_t\})$. Note, further, that at the end of Algorithm~\ref{Algo:FinalAllocationEF3X}, the only goods allocated to agent $0$ that are relevant to agent $i$ form a subset of $\U^*_i(\X)\cup\{g_t\}$, and as a result, agent $i$ does not envy agent $0$.

    In line 16 goods are allocated to envied agents, and we will show that each of them may receive at most one extra good. Note that any envied agent $i$ may receive extra goods only in lines 13 and 16, and each good $g$ that $i$ receives should be relevant to both agents $i$ and $0$ and furthermore its twin good $g_t$ is given to another agent (agent $0$ in line 14 or some other envied agent in line 16). However, by Observation \ref{obs:repetition-common-edge} agents $i$ and $0$ have only $g$ and $g_t$ as common relevant goods, which guarantees that only one such good may be given to $i$. Since Property (5) holds for agent $i$ before allocating $g$, then by allocating $g$, any agent that envies $i$ is going to be EF3X-satisfied.
\end{proof}

\subsubsection{Complexity of constructing the allocation}
\begin{lemma}
    An EF3X allocation for hypergraphs with girth at least 3 and multiplicity at most 2 can be constructed in polynomial time.
\end{lemma}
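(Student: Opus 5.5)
We bound the running time of each of the three procedures, \FixPropMultTwo\ (Algorithm~\ref{Algo:FixProp123}), Algorithm~\ref{Algo:Algo2EF3X} and Algorithm~\ref{Algo:FinalAllocationEF3X}, by a polynomial in $n$. We follow the template of the corresponding lemma for Theorem~\ref{thm:girth3}. The basic size bound comes from Observation~\ref{obs:repetition-common-edge}. Every agent shares at most one pair of twin goods with each other agent, so it has at most $n-1$ relevant pairs. The total number of goods is therefore $O(n^2)$, and finding an agent's most valuable unallocated pair takes $O(n)$ time.

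\textbf{Final allocation.} Algorithm~\ref{Algo:FinalAllocationEF3X} is the easy part. Each iteration of either for-loop handles one pair $\{g,g_t\}$, and there are $O(n^2)$ such pairs. Each iteration only has to scan the at most $n$ incident agents and check whether each is envied. Envy status can be precomputed in $O(n^2)$ value queries on the input allocation, because the algorithm never needs updated envy status: it branches only on which agents are envied in the input $\X$. The total is $O(n^3)$.

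\textbf{Bounding the updates, in place of social welfare.} The main obstacle is Algorithm~\ref{Algo:Algo2EF3X} together with its repeated calls to \FixPropMultTwo. The termination argument via increasing Social Welfare gives only a pseudo-polynomial bound, so it must be replaced by a counting argument.

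Consider updates within \FixPropMultTwo\ first. Each such update gives agent $i$ a single pair that $i$ values strictly more than its current bundle. I want to show that an agent is never re-assigned a pair it has held before and released. The value of every agent is non-decreasing over the whole execution: line 4 and line 2 of \FixPropMultTwo\ only raise the value of the updated agent, and nobody else's bundle shrinks. Hence the single-pair bundles an agent holds in successive single-pair updates have strictly increasing value. So between two executions of line 4 for the same agent, it receives at most $n-1$ single-pair updates. In fact it receives at most $n-1$ in total, if I argue that after a line-4 update its value is at least that of any single pair it could later take.

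\textbf{Bounding the outer loop.} Next I show that each agent executes line 4 at most once. The key claim is that an agent $i$ that executes line 4 becomes non-envied and stays non-envied forever. The proof of the previous lemma shows that nobody envies $i$ right after line 4: by Property (2) for agents other than $0$, and by irrelevance for agent $0$.

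Afterwards $i$ may be updated again within \FixPropMultTwo\ and then hold a single pair. That pair is wanted by $i$ and unallocated, so any other agent relevant to it prefers its own bundle to that pair by Property (2). Thus $i$ can only become envied by an agent whose value increases, which is impossible.

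It follows that the outer while-loop runs at most $n$ times. Every agent is updated $O(n)$ times overall, and each update costs $O(n)$ to find the argmax and $O(n^2)$ to recheck the while conditions. The resulting total is polynomial, roughly $O(n^4)$ or better.

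The delicate point I expect is verifying carefully that no agent's value ever decreases and that no new envy is ever created towards previously non-envied agents. This monotonicity is exactly what the counting argument above rests on.
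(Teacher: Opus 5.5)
Your overall approach is sound and matches the paper's: agents' values never decrease, so the single pairs an agent takes inside \FixPropMultTwo\ have strictly increasing value, giving at most $n-1$ such updates per agent. The gap is in bounding the outer loop. The claim that an agent executing line 4 of Algorithm~\ref{Algo:Algo2EF3X} \emph{stays non-envied forever} is false. Your justification applies Property (2) to the pair $i$ later takes inside \FixPropMultTwo, but Property (2) is not an invariant there: it holds at the start of each outer iteration and is restored only when \FixPropMultTwo\ returns. Any pair $i$ can grab inside \FixPropMultTwo\ must have been released after Property (2) last held. Pairs already in the pool then are worth at most $v_i(X_i)$ to $i$, and $v_i(X_i)$ only grows. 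So the grabbed pair is either the bundle freed at line 4 or one freed by an earlier step of \FixPropMultTwo. The pair freed at line 4 is, by definition, desired by whoever envied its holder.

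Here is how this fails concretely. Let $a$ execute line 4. Later, let $b$ execute line 4, where $b$ holds the pair on an edge $\{a,b,c\}$ and is envied by both $a$ and $c$. \FixPropMultTwo\ may hand this pair to $a$, after which $c$ envies $a$. Suppose that meanwhile a neighbour of $a$ moved to a better pair, freeing a pair relevant to $a$ but not to $0$ that $a$ did not want alone. Then $v_a(\U^*_a(\X))$ can exceed $a$'s value for its new pair, and $a$ executes line 4 a second time. This is easy to realize with additive values on a small girth-$3$ instance. So ``the outer loop runs at most $n$ times'' does not hold. Your aside that after line 4 an agent's value dominates every single pair it could later take fails for the same reason, but you do not need it: strict monotonicity alone gives at most $n-1$ single-pair updates per agent in total.

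The repair is the paper's counting argument. Right after line 4, agent $i$ is non-envied. Since other agents' values never decrease, $i$ can become envied again only through a change of its own bundle inside \FixPropMultTwo, i.e., a single-pair update. Hence $i$ executes line 4 at most one more time than it receives single-pair updates, so at most $n$ times. The outer loop therefore runs $O(n^2)$ times, not $n$. Each agent is still updated $O(n)$ times overall, so your final polynomial bound (and the paper's $O(n^4)$) survives, but only via this argument.
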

\begin{proof}
    We will argue that Algorithms \ref{Algo:FixProp123}, \ref{Algo:Algo2EF3X}, \ref{Algo:FinalAllocationEF3X} have polynomial time complexity on the number of agents, let it be $n$.
    First, observe that for each agent there are at most $n$ possible alternatives to receive a bundle of a relevant good and its twin good (Observation \ref{obs:at-most-1-relevant-common-edge}).

    Algorithm \ref{Algo:FixProp123}: Whenever an agent changes its bundle, it increases its value, so each agent can change its bundle at most $n$ times, giving a total number of $O(n^2)$ updates. Additionally, the check of the while-loop condition needs $O(n^2)$ time complexity ($n$ agents and $O(n)$ possible sets of goods). So the total complexity of Algorithm \ref{Algo:FixProp123} considering all possible calls by Algorithm \ref{Algo:Algo2EF3X} is $O(n^4)$.
    
    Algorithm \ref{Algo:Algo2EF3X}: Algorithm \ref{Algo:Algo2EF3X} needs $O(n^4)$ time for the total calls of Algorithm \ref{Algo:FixProp123}, as we argued above. We next argue that there will be at most $n^2$ loops of the while-loop. 
    The reason is that after executing a loop of the while-loop, which means that an agent $i$ violates the condition of the  while-loop, that agent becomes non-envied due to Property (2). That agent may become envied again during the execution of Algorithm \ref{Algo:FixProp123} in line 5, where it receives a pair of twin goods. Since there are at most $n$ such pairs that are relevant to $i$ and its value only increases, agent $i$ may be considered in the while-loop at most $n$ times. So, in total, there may be at most $n^2$ loops of the while-loop. Checking the condition of the while-loop can be executed by checking for each agent if it is envied that needs $O(n)$ time, and compare its bundle with $\U^*_i(\X)$ that needs $O(1)$ time. So, checking the while-loop condition needs $O(n^2)$ time. The condition of the if-statement can be checked in $O(1)$ time by checking if the goods relevant to both agents $i$ and $0$ were unallocated prior to the update of $i$'s bundle. Overall, the time complexity of Algorithm \ref{Algo:Algo2EF3X} after excluding the complexity of Algorithm \ref{Algo:FixProp123} is $O(n^4)$, and all the calls of Algorithm \ref{Algo:FixProp123} need $O(n^4)$ time in total. Therefore, the overall time complexity of Algorithm \ref{Algo:Algo2EF3X} is $O(n^4)$.  

    Algorithm \ref{Algo:FinalAllocationEF3X}: The two for-loops combined are going to be executed $O(n^2)$ times (which is the maximum number of goods), the procedure of checking if a good has at least two, exactly one, or zero incident envied agents requires $O(n^2)$ time complexity (for each agent we check if any other agent envies it), and the procedure of checking if a good is allocated can be done in $O(1)$ time complexity. In total the time complexity is $O(n^4)$.
    
     Since the sum of the time complexity of Algorithms \ref{Algo:FixProp123}, \ref{Algo:Algo2EF3X}, \ref{Algo:FinalAllocationEF3X} is polynomial, we can construct an EF3X allocation in polynomial time.
\end{proof}

\section{Pursuing Approximate EFX Allocations}\label{sec:apxEFX}
In this section, we study two types of instances. We begin with hypergraph instances whose underlying hypergraph has girth at least 3 and the valuation functions are subadditive. We then extend our analysis to a more general setting involving multi-hypergraph instances, where the underlying hypergraph has girth at least 3 and multiplicity 2, while the valuation functions are assumed to be additive.

\subsection{Further preliminaries on valuations and goods}
Before we state the main results of that section, we give two extra definitions on the types of goods that are crucial throughout the constructions and the algorithms that will follow. 

\noindent{\bf Critical Goods}.
If $X_i\subseteq M$ is the set of goods allocated to some agent $i \in N$, then a good $g \notin X_i$ is {\em critical} for $i$, if  $g$ has more than half the value of $X_i$, in $i$'s perspective, i.e., $v_{i}(g) > \frac{1}{2}v_{i}(X_{i})$.

\noindent{\bf Dummy Goods}. A good is called \textit{dummy} if it is irrelevant to all agents. In our algorithms we introduce a set $H=\{h,h_1,\ldots,h_n\}$ of $n+1$ dummy goods, and by this we mean that the new set of goods is $M\cup H$, $h$ is irrelevant to all agents and for every $i,j\in N$: $v_j(h_i)=0$   and $h_i$ is artificially considered  relevant only for agent $i$. 

\begin{remark}
Dummy goods are there to help keeping track of the number of goods the agents get during the executions of the algorithms. It will be easy to verify that a good $h$ is {\em never} allocated  to some agent
and  that if some  agent $i\in N$ receives her dummy good $h_i$, then she also receives exactly one more good, valuable for her. 
\end{remark}

In the algorithms presented, for most of the updates of the allocations we will use the virtual valuations $\hvv=(\hv_1,\hv_2,\ldots,\hv_n)$ defined below. We note that these are similar to the factors used in the potential function of \citet[Definition 13]{hv2025almost}, and the construction of the envy graph in \citet[Definition 2.3]{filosratsikas2026approximateenvyfreeallocationsk}.

\noindent {\bf Virtual Valuation Functions}. For some $0<\alpha<1$, for every agent $i$ with valuation function $v_i$ we define the \textit{virtual valuation function}  $\hv_i:2^{M}\rightarrow \mathbb{R}_{\geq 0}$ so that 
\begin{equation*}
    \hv_i(S) = 
     \begin{cases}
      v_i(S), & \text{if }|S|\leq 1\\
       \frac{1}{\alpha}v_i(S), & \text{if }|S|\geq 2\
     \end{cases}
\end{equation*}
In the algorithms and in the procedures, when we consider virtual valuations, we define, in a similar way as before, the virtual envy and the virtual envy graph as follows. \\
\noindent{\bf Virtual Envy}. For some $0<\alpha<1$ we say that an agent $i$ {\em virtually} envies another agent $j$ (or bundle $X_j$), if $\hv_{i}(X_{i}) <\hv_{i}(X_{j})$. \\ 
\noindent{\bf Virtual Envy Graph}. Given a partial allocation $\X$ and the virtual valuation functions of the agents we define the corresponding virtual envy graph  $\G(\X) = (\N,\tilde{E}(\X))$ to be a directed graph where its vertex set is the set of agents and its edge set contains a directed edge $(i,j)$ if and only if  agent $i$  {\em virtually} envies agent $j$, i.e., $\tilde{E}(\X) = \{(i,j):\hv_{i}(X_{i}) <\hv_{i}(X_{j})\}$.

\noindent{\bf Potential Function}. In order to prove that the algorithms we present terminate in pseudo-polynomial time we make use of the following potential function $\Phi$ that is based on virtual valuations.
$\Phi$ maps allocations to the non-negative  real numbers so that for a (partial) allocation $\X$: 
$$\Phi(\X)=\sum_{i=1}^n\hv_i(\X_i)$$

\noindent To handle some extreme cases we need to build on the above potential function and employ a lexicographic potential, making use of the lexicographic order of vectors of $\mathbb{R}^2$: ($a,b)<(c,d)$ if and only if  $a<b$ or [$a=b$ and $c<d$]. We will use it on vectors of the form $\Big(\Phi(\X),\sum_{i=1}^n|X_i|\Big)$, where $\X$ is an allocation. 

Note that, since the goods and all the possible allocations are finite, $\Phi(\X)$ and $\Big(\Phi(\X),\sum_{i=1}^n|X_i|\Big)$ are upper bounded.

\subsection{Results and roadmap}

For hypergraph instances, we reprove in a simpler way a theorem originally established by \citet{kaviani2024envyfreeallocationindivisiblegoods} as stated below, with extra justification on the  running time complexity. 

\begin{restatable*}{thm}{ThmHyperOne}
\label{thmhyper1}
Instances on hypergraphs of girth at least 3 
under subadditive valuations admit a $\frac{\sqrt{2}}{2}$-EFX allocation that can be computed in pseudo-polynomial time.
\end{restatable*}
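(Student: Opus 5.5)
We follow the same three-phase template as in Section~\ref{sec:EFkX}, now using the virtual valuations $\hvv$ with $\alpha=\frac{\sqrt{2}}{2}$ and the dummy goods $H$. We fix an arbitrary agent $0$. The goal is a partial allocation $\X$ with four properties, stated in virtual terms:
\begin{enumerate}
\item $\X$ is an $\alpha$-EFX orientation, with the dummies $h_i$ padding bundles; equivalently, every agent $i$ virtually does not envy any bundle with one good removed.
\item No agent $i\neq 0$ finds any unallocated good critical, i.e.\ $v_i(g)\le \frac12 v_i(X_i)$, or at least $v_i(g)\le \hv_i(X_i)$ in the weaker form.
\item Every virtually envied agent $i$ satisfies $v_i(X_i)\ge \alpha\, v_i(\U^*_i(\X))$, or a similar bound guaranteeing that agent $0$ may later receive $\U^*_i(\X)$.
\item Every virtually envied agent holds at most two goods, one of which may be its dummy $h_i$.
\end{enumerate}
The role of $\hv$ is the standard one from \cite{amanatidis2024pushingfrontierapproximateefx}: a bundle of size at least $2$ is valued by $\frac1\alpha v_i$. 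Hence if $i$ does not virtually envy a multi-good bundle $S$, then $v_i(X_i)\ge \alpha v_i(S)$, which is stronger than $\alpha$-EFX towards $S$.

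\textbf{Construction.} We start from the empty allocation and run a local-search loop. While some agent $i\neq 0$ has $\hv_i(X_i)<v_i(g)$ for an unallocated relevant $g$, we give $i$ the bundle $\{g,h_i\}$ if this raises $\hv_i$, that is, if $v_i(g)>\alpha\,\hv_i(X_i)$. When an envied agent $i$ prefers $\U^*_i(\X)$ (virtually), we swap in $\U^*_i(\X)$. If this creates virtual envy cycles, we resolve them by cycle-shifting along $\G(\X)$, or we give a good to a source, exactly as in envy-cycle elimination. In each case we argue, as in Lemma~2 of Section~\ref{sec:step2_2X}, that girth at least $3$ (Observation~\ref{obs:at-most-1-relevant-common-edge}) ensures any other agent $j$ values the new bundle of $i$ only through a single shared good. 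By the non-critical property, that good is worth at most $\hv_j(X_j)$, so no new virtual envy arises except from $0$, and agent $0$ is excluded by the definition of $\U^*_i$. Termination in pseudo-polynomial time follows from the lexicographic potential $\big(\Phi(\X),\sum_i|X_i|\big)$: every update strictly increases it, and with integral values $\Phi$ takes pseudo-polynomially many values.

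\textbf{Final allocation.} This mirrors Algorithm~\ref{Algo:FinalAllocationEF2X}. A good irrelevant to $0$ goes to an incident non-virtually-envied agent if one exists, otherwise to $0$. A good relevant to $0$ goes to an incident envied agent if one exists, otherwise to $0$. Consider an envied agent $i$. It receives at most one extra good $g$, namely its unique good shared with $0$, and we drop $h_i$ if present. Any $j$ that virtually envied $i$ now compares against a bundle of at most $2$ real goods, and the envy must be bounded by $\alpha$-EFX. This is the main obstacle. Suppose $X_i=\{g'\}$ and it grows to $\{g',g\}$. Removing either good leaves a single good whose value must be at most $v_j(X_j)/\alpha$. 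That requires bounds of the form $v_j(g')\le \hv_j(X_j)$ (no EFX violation on the original bundle) and $v_j(g)\le \frac{1}{2}v_j(X_j)$ or so from non-criticality, which gives $v_j(X_j)\ge \alpha v_j(g)$ when $\tfrac12\le$ the threshold. The constant $\frac{\sqrt2}{2}$ should emerge here, from balancing the criticality threshold against $\alpha$ in the case where $i$ held a two-element bundle $\{g',h_i\}$ with $v_i(g')>\alpha\hv_i$: the condition becomes $\alpha^2\le\frac12$. For agent $0$, we argue as in the EF2X proof, using subadditivity in place of additivity: $v_i(X_0\cap\text{rel}_i)\le v_i(\U^*_i(\X))$ combined with the third property gives $\alpha$-EFX towards $0$, and non-envied agents are protected by the non-critical property.
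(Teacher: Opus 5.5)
\textbf{Your final step has a genuine gap, and it is the heart of the proof.} Your rule comes from the EF2X algorithm. There, handing an envied agent $i$ the good $g$ it shares with $0$ is harmless because two goods may be removed from $X_i\cup\{g\}$; under $\alpha$-EFX only one may be.

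Concretely, take agents $0,i,j$ and goods $g'=\{i,j\}$, $g=\{0,i\}$, $g_j=\{j\}$. Let $X_i=\{g'\}$, $X_j=\{g_j\}$, $X_0=\emptyset$, with $g$ unallocated, $v_i(g)\le\frac12 v_i(g')$, $v_j(g_j)=1$ and $v_j(g')=100$. All four of your properties hold: EFX towards a singleton is vacuous, $\U^*_i(\X)=\emptyset$, and $i$ is the only envied agent and holds one good. Your rule gives $g$ to $i$. Now agent $0$ sees $X_i\setminus\{g'\}=\{g\}$ with $v_0(g)>0=v_0(X_0)$. Agent $j$ sees $X_i\setminus\{g\}=\{g'\}$, worth $100$ against $1$. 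So $\alpha$-EFX fails for every $\alpha>0$.

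The bound you say is needed, $v_j(g')\le\hv_j(X_j)$, is exactly the negation of $j$ virtually envying $i$. It therefore fails for precisely the agents you must protect. Padding envied singletons with $h_i$ would rescue enviers other than $0$, since by girth $3$ they see $X_i$ through one good. It does not rescue agent $0$, who under your rule may end with no relevant good. So the balance $\alpha^2\le\frac12$ you anticipate is not backed by any valid inequality.

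The construction phase also has a circularity. Giving $\{g,h_i\}$ to $i$ breaks $\prpt$ for any other $j$ with $v_j(g)>\hv_j(X_j)$, because removing $h_i$ leaves $g$. You rule this out by invoking the very non-criticality property the loop is in the middle of restoring.

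The missing idea is never to enlarge an envied bundle, and instead to give the \emph{entire} pool to one agent nobody virtually envies. The paper proceeds as follows.
\begin{itemize}
\item Procedure1 hands out sets via MinimalEnviedSet, so no agent virtually envies any proper subset of a newly formed bundle; this is what avoids the circularity above. It stops only when no agent virtually envies $\U(\X)$.
\item Agents with $|X_i|\ge2$ then absorb all their relevant pool goods. This is safe by girth $3$: any other $j$ values the enlarged bundle through at most one shared good, and if that good came from the pool it is not virtually envied.
\item Procedure2 pads singleton sources with their dummies and resolves cycles until $\G(\X)$ is acyclic and its sources $s$ have $|X_s|\ne1$.
\end{itemize}
Now assign $\U(\X)$ to such an $s$. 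Agents with $|X_i|\ge2$ value the pool at $0$ and $X_s$ at most $v_i(X_i)$. An agent with $|X_i|\le1$ has $v_i(X_s)\le\alpha v_i(X_i)$. Since the dummy $h$ never leaves the pool, the pool has at least two goods, so not virtually envying it gives $v_i(\U(\X))\le\alpha v_i(X_i)$. Subadditivity then yields $v_i(X_s\cup\U(\X))\le2\alpha v_i(X_i)$, and solving $\frac{1}{2\alpha}=\alpha$ gives $\frac{\sqrt2}{2}$. Criticality thresholds play no role in this theorem; the paper needs them only for the additive multiplicity-$2$ result.
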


For its proof we rely on Algorithm \ref{alg:g3m1}, which outputs such an allocation. It begins by finding a partial allocation  satisfying the following property, for any $0<\alpha<1$.
\begin{tcolorbox}[width=6in, title= %
Property $\prpt$ of a partial allocation $\X$]
$\X$ is an $\alpha-$EFX allocation, where, additionally, every agent $i \in N$ with $|X_{i}| \leq 1$ is EFX towards any other agent.
\end{tcolorbox}
\noindent In the proofs, whenever we want to prove that Property $\prpt$  holds for the allocation at hand, we may split the proof in proving separately the following two properties:
\begin{tcolorbox}[width=6in, title= Properties $\prpt$(a) and $\prpt$(b) of a partial allocation $\X$]
  \begin{itemize}
     \item[\textbf{$\prpt$(a):}] Every agent $i \in N$ with $|X_{i}| \leq 1$ is EFX towards any other agent.
     \item[\textbf{$\prpt$(b):}] Every agent $i \in N$ with $|X_{i}| \geq 2$ is $\alpha-$EFX towards any other agent.
       \end{itemize}
\end{tcolorbox}
Note that Property $\prpt$ holds iff Properties $\prpt$(a) and $\prpt$(b) hold. \\ 

\paragraph{Proof Roadmap.}
Our approach is an extension of Algorithm 1 of \cite{amanatidis2024pushingfrontierapproximateefx}. To reach a partial allocation that satisfies Property $\prpt$, the algorithm initially allocates bundles of goods  to agents (Algorithm \ref{algo:prePro}) so that  no subset of these bundles is envied by any agent and also the pool is not envied by any agent, where the envy is based on the virtual valuations. Then, taking advantage of Observation \ref{obs:at-most-1-relevant-common-edge} and that the pool is not envied, it allocates to every agent with more than one goods, her relevant goods from the pool.

If at this point there exists a source with two goods in the envy graph (based on the virtual valuations), it could assign the remaining goods of the pool to that source and get a complete allocation. To make sure this is the case, it gives to each source with one good, its dummy good and then  resolves all the cycles of the envy graph (Algorithm \ref{algo:postPro}), improving the agents' virtual values. Yet, when the cycles are resolved the pool changes and envy towards the pool may be created. To control the envy towards the pool it re-initiates the initial allocating procedure, until eventually (since the agents' virtual values increase) it finds such a source and the pool is not envied by any agent.  Based on Property $\prpt$ and that no agent envies the pool, by setting $\alpha=\frac{\sqrt{2}}{2}$,  the complete allocation is shown to be $\frac{\sqrt{2}}{2}-$EFX

\smallskip
For multi-hypergraph instances we get the following theorem.

\begin{restatable*}{thm}{ThmHyperTwo}\label{thm:hyper2}
Instances on multi-hypergraphs of girth at least 3 and multiplicity 2 under additive valuations, admit a  $\frac{2}{3}$-EFX allocation that can be computed in pseudo-polynomial time.
\end{restatable*}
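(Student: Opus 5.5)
We follow the same scheme as for Theorem~\ref{thmhyper1}, now with $\alpha=\frac{2}{3}$ and twin goods handled together. The aim is a partial allocation $\X$ with five features:
\begin{itemize}
\item[(i)] $\X$ satisfies Property $\prpt$ for $\alpha=\frac23$;
\item[(ii)] no agent virtually envies the pool restricted to its relevant goods;
\item[(iii)] no agent virtually envies any pair of twin goods in the pool;
\item[(iv)] twin goods are allocated and unallocated together whenever possible;
\item[(v)] the virtual envy graph $\G(\X)$ has a source $s$ holding at least two goods.
\end{itemize}
Given such $\X$, we first give every agent $i$ with $|X_i|\ge 2$ her relevant pool goods that are shared with no other multi-good agent. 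We then dump the rest of the pool onto a suitable source. By Observation~\ref{obs:repetition-common-edge}, any other agent $j$ shares with $s$ at most one good and its twin. With additivity and (iii), $j$'s value for $X_s\cup(\text{pool})\setminus\{g\}$ is then at most the value of $j$'s relevant part of $X_s$ plus what $j$ already does not envy. That should yield the $\frac23$ bound.

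\textbf{Building $\X$.} Start from an analogue of Algorithm~\ref{algo:prePro}: repeatedly let an agent take a minimal virtually envied subset of pool goods (a single good, or a twin pair, or a larger set), releasing her old bundle. By additivity and girth at least 3, any minimal envied set relevant to an agent $j\neq i$ contains at most two goods of $j$, namely a twin pair. For $\alpha=\frac23$ the threshold $\frac1\alpha v_j$ in the virtual valuation is exactly what bounds $j$'s value for the bundle minus one good.

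Next, mimic Algorithm~\ref{algo:postPro}. Sources holding one good get their dummy good $h_i$, which makes the bundle count as size two virtually. Then cycles in $\G(\X)$ are resolved. Both operations strictly increase the lexicographic potential $\big(\Phi(\X),\sum_i|X_i|\big)$. Whenever the pool becomes virtually envied, we re-enter the first phase. Boundedness of the potential then gives termination in pseudo-polynomial time. At each step we check that $\prpt$(a) and $\prpt$(b) are preserved, using critical goods: an agent with one good never sees a critical good in the pool, by (ii).

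\textbf{Main obstacle.} The hardest part is the final one-pass dump. The twin of a good given to a multi-good agent may be relevant to the source. More generally, a source may receive a twin pair $\{g,g_t\}$ relevant to an agent $j$ whose bundle is a single good. Removing one good then leaves $g$ plus part of $X_s$, and the $\frac23$ accounting breaks. As the roadmap hints, the algorithm must iterate. Whenever the final dump would violate $\frac23$-EFX for some $j$, we show that $j$ prefers some modification, for instance taking the twin pair or exchanging with $s$. This modification strictly increases the lexicographic potential, so we restart. The first thing to try is a case analysis on $|X_j|\in\{1,2,\ge3\}$ against the goods of $X_s$ relevant to $j$, using $v_j(g)+v_j(g_t)\le\frac1\alpha v_j(X_j)$. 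The critical-good argument would show that the violating configuration always yields such an improving move.
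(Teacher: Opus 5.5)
\textbf{Genuine gap.} The final one-pass dump cannot give $\frac{2}{3}$, and the iteration you propose in its place is not specified.

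\textbf{Why the dump fails for multi-good agents.} Take an agent $j$ with $|X_j|\ge 2$. The pool goods relevant to $j$ that you withhold (those shared with another multi-good agent) are controlled only by non-virtual-envy of the pool. For $|X_j|\ge 2$ this gives just $v_j(\U(\X))\le v_j(X_j)$. Since $s$ is a source with $|X_s|\ge 2$, you also only know $v_j(X_s)\le v_j(X_j)$. After the dump, removing a good of zero value to $j$ can leave value $2v_j(X_j)$, so you get only $\frac12$-EFX. Observation~\ref{obs:repetition-common-edge} does not help: it limits goods relevant to both $j$ and $s$, not the pool goods irrelevant to $s$ that you pile onto $X_s$.

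\textbf{Other steps that fail.} Your preliminary hand-out can break $\prpt$(a). Nothing in (iv) as stated prevents $g_t\in X_i$ while $g$ is in the pool. If $i$ then receives $g$, a one-good agent $j$ sharing the pair may have $v_j(g_t)=v_j(X_j)$ and $v_j(g)=\frac23 v_j(X_j)$. After removing a good irrelevant to $j$, it sees $\frac53 v_j(X_j)$. Also, ``a one-good agent never sees a critical good in the pool'' is false: non-envy of the pool gives at best $v_j(g)\le\frac23 v_j(X_j)$ for such $j$. The paper needs, and explicitly enforces, absence of critical pool goods only for agents with at least two goods.

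\textbf{What is missing: the paper's three-part mechanism.}
\begin{enumerate}
\item \emph{No critical pool goods for multi-good agents.} The invariant is enforced by twin-aware moves. If $|X_i|=2$, swap out a cheaper good. If $|X_i|>2$: add $g$ when $g_t\notin X_i$; replace $g_t$ by $g$ when $g_t$ is not critical; otherwise set $X_i=\{g,g_t\}$. Each move is shown to preserve $\prpt$ using $v_j(X_j)\ge v_j(g)$ and Observation~\ref{obs:repetition-common-edge}.
\item \emph{Incremental feeding instead of dumping.} Pool goods are given one at a time to sources not yet virtually envied. If $g_\ell$ is the last good given to $s$, then $v_j(X_s\cup Y_s\setminus\{g_\ell\})\le v_j(X_j)$ together with $v_j(g_\ell)\le\frac12 v_j(X_j)$ gives the factor $\frac32$ when $|X_j|\ge 2$. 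When $|X_j|\le 1$ one gets $\frac{1}{2\alpha}=\frac34$.
\item \emph{Cycle through sources if the pool does not empty.} Then every source is virtually envied, so some cycle passes through sources. Only those sources keep their extended bundles, and the cycle is resolved, raising the lexicographic potential. Irrelevant goods are stripped and dummies re-added. Property $\prpt$ survives because the last added good must be relevant both to the new owner and to any envier. So at most one twin pair relevant to $j$ remains, and both were previously unenvied.
\end{enumerate}

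Your ``improving move'' (taking the twin pair or exchanging with $s$) comes with no proof that it preserves $\prpt$ or strictly increases the potential. As written, the proposal does not establish the theorem.
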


\paragraph{Proof Roadmap.} For its proof we rely on Algorithm \ref{algo:hyper2}, which outputs such an allocation. The approach is similar to before using again Algorithms \ref{algo:prePro}   and \ref{algo:postPro} as subroutines to reach a partial allocation where  the envy towards the pool is controlled and Property $\prpt$ is satisfied.  Here, for the envy towards the pool, the algorithm makes sure that no agent with at least two goods has a critical good in the pool. 

Once such an allocation is found the algorithm would wish to assign all unallocated goods to the sources of the virtual valuations' envy graph so that no subset of the sources' bundles  is envied by any agent. So, in a ``testing step'',  it starts allocating unallocated goods to sources until all of them become envied. If the pool becomes empty by this procedure then by the properties of the partial allocation, by setting $\alpha =\frac{2}{3}$, the resulting complete allocation is shown to be $\frac{2}{3}$-EFX. If the pool does not become empty and there are no sources to allocate goods from the pool, then the algorithm allocates goods only to the sources belonging to an arbitrarily chosen cycle in the envy graph (created during  the testing step) and ignores all the allocations of goods to other sources of this step.  It then resolves this cycle, improving agents' valuations, restores  Property $\prpt$ (that may have been broken) by removing irrelevant goods from bundles, and goes on from the start until eventually (since the agents' virtual values increase) in some testing step the pool gets empty. 

\smallskip

We close the section by proving a lemma that will be used in the forthcoming proofs. It shows that if for some allocation $\X$,  Property $\prpt$ holds, then changing the bundle of an agent to a better one based on the virtual valuations,  Property $\prpt$ does not break for that agent. 
\begin{lemma}
\label{lem:Prop_Preserve}
     Let $\X$ be an allocation for which Property $\prpt$ holds  
     and let agent $i$ and set of goods $S$ be such that $\hv_i(S)\geq\hv_i(X_i)$. If  $i$ gets assigned bundle $S$, for any agent $j$:
     \begin{itemize}
         \item[(a)] if $|S|\leq 1$, $i$ will be EFX towards $X_j$, and
         \item[(b)] if $|S|\geq 2$, $i$ will be  $\alpha$-EFX towards $X_j$.
     \end{itemize}
\end{lemma}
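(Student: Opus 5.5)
We argue directly from the definition of the virtual valuations and from Property $\prpt$ as it held for $i$ before the reassignment, splitting into cases on $|X_i|$ and $|S|$. Throughout, fix an agent $j$ and a good $g\in X_j$. Our aim is to lower-bound $v_i(S)$ by the quantity that Property $\prpt$ already guaranteed to dominate $v_i(X_j\setminus\{g\})$, up to the right factor.

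\emph{Case (a), $|S|\le 1$.} Here $\hv_i(S)=v_i(S)$, so $v_i(S)\ge \hv_i(X_i)$. If $|X_i|\le 1$, then $\hv_i(X_i)=v_i(X_i)$, and $\prpt$(a) gives $v_i(X_i)\ge v_i(X_j\setminus\{g\})$. Hence $v_i(S)\ge v_i(X_j\setminus\{g\})$, which is EFX. If $|X_i|\ge 2$, then $v_i(S)\ge \frac{1}{\alpha}v_i(X_i)$. By $\prpt$(b), $v_i(X_i)\ge \alpha\, v_i(X_j\setminus\{g\})$, so $v_i(S)\ge v_i(X_j\setminus\{g\})$. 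This is again EFX, and the factor $\frac1\alpha$ exactly cancels the loss $\alpha$.

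\emph{Case (b), $|S|\ge 2$.} Here $\frac{1}{\alpha}v_i(S)\ge \hv_i(X_i)$. If $|X_i|\ge 2$, this reads $v_i(S)\ge v_i(X_i)\ge \alpha\, v_i(X_j\setminus\{g\})$ by $\prpt$(b). If $|X_i|\le 1$, then $v_i(S)\ge \alpha\, v_i(X_i)\ge \alpha\, v_i(X_j\setminus\{g\})$ by $\prpt$(a). Either way $i$ is $\alpha$-EFX towards $X_j$.

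The one point needing care is that $X_j$ refers to the other bundles as they stand; these are unchanged by the reassignment of $i$'s bundle. The inequalities used are those of Property $\prpt$ for $i$ towards those same bundles. If $S$ is formed partly from goods previously in $X_j$, the statement is still understood relative to the bundle $X_j$ that $i$ is compared with, so no further obstacle arises. Monotonicity is not even needed beyond the definitions. I expect no real obstacle: the only subtle step is the mixed case $|X_i|\ge 2$, $|S|\le 1$, where the $\frac1\alpha$ boost in $\hv_i(X_i)$ is what upgrades $\alpha$-EFX to full EFX.
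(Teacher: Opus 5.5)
Your proof is correct and follows the paper's argument essentially line for line. It uses the same four-way split on $|S|\le 1$ versus $|S|\ge 2$ and $|X_i|\le 1$ versus $|X_i|\ge 2$, in each case unpacking $\hv_i$ and chaining with $\prpt$(a) or $\prpt$(b); the only cosmetic difference is that the paper first disposes of $|X_j|=1$ separately, which your uniform treatment makes unnecessary.
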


\begin{proof}
    If $|X_j|=1$ then trivially $i$ is EFX towards $j$. It remains to prove the lemma for when $|X_j|\geq 2$.

    Regarding (a), suppose $|S|\leq1$. If $|X_i|\leq 1$ then by the definition of the $\hv_i$'s: $\hv_i(S)>\hv_i(X_i)\Rightarrow v_i(S)>v_i(X_i)$.   Moreover,  by  Property $\prpt$(a) of $\X$, it is $v_i(X_i)\geq v_i(X_j\setminus\{g\})$ for any $g\in X_j$, and thus $v_i(S)\geq v_i(X_j\setminus\{g\})$ for any $g\in X_j$, as needed. 
   If $|X_i|\geq2$ then by the definition of the $\hv_i$'s: $\hv_i(S)>\hv_i(X_i)\Rightarrow v_i(S)>\frac{1}{\alpha}v_i(X_i)$. Moreover,  by  Property $\prpt$(b) of $\X$, it is $v_i(X_i)\geq \alpha v_i(X_j\setminus\{g\})$ for any $g\in X_j$, and thus $v_i(S)\geq v_i(X_j\setminus\{g\})$ for any $g\in X_j$, as needed. 

    Regarding (b), suppose $|S|\geq2$. If $|X_i|\leq 1$ then by the definition of the $\hv_i$'s: $\hv_i(S)>\hv_i(X_i)\Rightarrow \frac{1}{\alpha}v_i(S)>v_i(X_i)$.   Moreover,  by  Property $\prpt$(a) of $\X$, it is $v_i(X_i)\geq v_i(X_j\setminus\{g\})$ for any $g\in X_j$, and thus $v_i(S)\geq \alpha v_i(X_j\setminus\{g\})$ for any $g\in X_j$, as needed. 
    If $|X_i|\geq 2$ then by the definition of the $\hv_i$'s: $\hv_i(S)>\hv_i(X_i)\Rightarrow \frac{1}{\alpha}v_{i}(S) > \frac{1}{\alpha}v_{i}(X_{i}) \Rightarrow  v_i(S)>v_i(X_i)$.  Moreover,  by  Property $\prpt$(b) of $\X$, it is $v_i(X_i)\geq \alpha v_i(X_j\setminus\{g\})$ for any $g\in X_j$, and thus $v_i(S)\geq \alpha v_i(X_j\setminus\{g\})$ for any $g\in X_j$, as needed. 
\end{proof}

\subsection{Basic subroutines}
We will provide subroutines for  reducing envy towards the pool, for eliminating cycles in  envy graphs and for  reducing envy between agents.

\subsubsection{A Pre-Processing  Procedure}
Here, we introduce a pre-processing procedure (Algorithm \ref{algo:prePro}) that is used in our constructions. Intuitively,  Algorithm \ref{algo:prePro} iteratively swaps a subset of the pool with an agent's allocated bundle, as long as it prefers that subset according to the virtual valuations and $\prpt$ is preserved. To do this, we employ an algorithm of Chaudhury et al. \cite[Algorithm 2.3]{CKMS21}, given that at least one agent virtually envies the pool, the algorithm finds a subset $T$ of the pool that is virtually envied by some agent $j$ and no other agent virtually envies any proper subset of $T$. Algorithm \ref{algo:prePro} allocates $T$ to agent $j$ and $j$'s  old bundle is released. This is repeated until no agent, based on the virtual valuations, prefers any subset  of  the pool. 

We present for completeness the algorithm of Chaudhury et al. \cite[Algorithm 2.3]{CKMS21} in the Appendix. We call this algorithm \MinEnviedSet\ that receives as input an allocation $\X$ and an agent $i$ that virtually envies $\U(\X)$, and returns the desirable set $T\subseteq \U(\X)$ and an agent $j$, such that $j$ virtually envies $T$, and no agent virtually envies any proper subset of $T$.

\begin{algorithm}
\caption{Procedure1($\X$)}
\label{algo:prePro}
\raggedright{\bf{Input}:}An orientation $\X$ satisfying Property $\prpt$.\\

{\bf{Output:}} An orientation $\X$ satisfying Property $\prpt$.\\ 

\begin{algorithmic}[1]
\WHILE{$\exists$ $i\in N$ such that $\hv_i(\U(\X))>\hv_i(X_i)$ }
    \STATE $(T,j) \gets$ \MinEnviedSet $(\X,i)$ \COMMENT{$j$ virtually envies $T$, and no agent virtually envies any $S\subset T$}
    \IF {$|T|=2$ and $T$ contains an irrelevant good $r$ for $j$} 
    \STATE $T\gets (T\setminus\{r\})\cup \{h_j\}$
    \ENDIF
    \STATE  $X_{j} \leftarrow T$
\ENDWHILE
\RETURN $\X$
\end{algorithmic}
\end{algorithm}

First thing to note is that the potential function strictly increases whenever the allocation changes inside the while-loop and, eventually,  Algorithm \ref{algo:prePro} terminates. This is because at every  step that the allocation changes (line 6), 
the virtual value of the bundle that agent $j$ gets, strictly increases in her perspective, and there can be finitely many possible such  values for every agent. Thus, at some point there should not exist any agent that gets more value from any subset of the pool implying that the while-loop of the algorithm will break. The above yields the following observation. 

\begin{observation}
\label{obs:pre}
    Algorithm \ref{algo:prePro} terminates and returns an allocation where none of the virtual values of the agents has decreased and no agent virtually  envies the entire (or any subset) of the  pool, i.e., for any $i\in N$, $\hv_i(X_i)\geq \hv_i(\U(\X))$. Additionally, whenever the allocation inside the while-loop of Algorithm \ref{algo:prePro} changes from say an $\X$ to some $\X'$ it will be $\Phi(\X)<\Phi(\X')$.   
\end{observation}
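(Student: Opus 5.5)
The observation has three parts: termination, monotonicity of the agents' virtual values, and the fact that at termination no agent virtually envies the pool or any subset of it. I would also note that the potential $\Phi$ strictly increases at every update. The plan is to read these off the while-loop of Algorithm \ref{algo:prePro}, using the guarantee of \MinEnviedSet.

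\textbf{Each update improves only the receiving agent.} Fix one iteration. By the specification of \MinEnviedSet, the returned pair satisfies $\hv_j(T)>\hv_j(X_j)$. If lines 3--4 fire, $T$ has size $2$ and contains a good $r$ irrelevant to $j$. Replacing $r$ by $h_j$ keeps $|T|=2$, so the factor $\frac{1}{\alpha}$ is unchanged. Since $v_j(h_j)=0$, subadditivity and monotonicity give $v_j(\{g,h_j\})=v_j(g)=v_j(\{g,r\})$. Indeed, $v_j(\{g,r\})\le v_j(g)+v_j(r)=v_j(g)$, and the reverse inequality is monotonicity. Hence the modified $T$ has the same virtual value for $j$, and the inequality $\hv_j(T)>\hv_j(X_j)$ survives. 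After line 6 only $X_j$ has changed. So $\hv_j$ of $j$'s bundle strictly increases and every other agent keeps the same bundle and the same virtual value. Summing over agents gives $\Phi(\X)<\Phi(\X')$, and no agent's virtual value ever decreases.

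\textbf{Termination.} There are finitely many allocations of $M\cup H$, so $\Phi$ takes finitely many values. A strictly increasing sequence of these values must be finite, so the loop ends.

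\textbf{Conclusion at termination.} The loop exits only when $\hv_i(X_i)\ge \hv_i(\U(\X))$ for every $i$. The claim about subsets is subtler, because $\hv_i$ is not monotone across the $|S|=1$ versus $|S|\ge2$ threshold. For $S\subseteq\U(\X)$ with $|S|\ge 2$, monotonicity of $v_i$ gives $\hv_i(S)\le\hv_i(\U(\X))\le\hv_i(X_i)$. For $|S|\le 1$ the inequality $\hv_i(S)=v_i(S)\le v_i(\U(\X))\le \hv_i(\U(\X))$ holds whenever $|\U(\X)|\ge 2$, since then the pool's virtual value carries the factor $\frac{1}{\alpha}>1$. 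When $|\U(\X)|\le 1$ the only subsets are the pool itself and the empty set, so the claim is immediate. This threshold case analysis is the only delicate point; everything else is direct bookkeeping. Preservation of Property $\prpt$ and of the orientation structure is not part of this observation and is left to later lemmas, for instance via Lemma \ref{lem:Prop_Preserve}.
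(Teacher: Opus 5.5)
Your proposal is correct and follows the paper's argument: each pass of the loop strictly raises $\tilde v_j(X_j)$ for the receiving agent while leaving every other bundle unchanged, so $\Phi$ strictly increases over a finite set of values, and the loop exits exactly when no agent virtually envies the pool. You also spell out two points the paper leaves implicit: the swap of $r$ for $h_j$ in lines 3--4 preserves $\tilde v_j(T)>\tilde v_j(X_j)$, and the statement extends from the pool to its subsets. One small correction: $\tilde v_i$ is in fact monotone, since $1/\alpha>1$ makes the change at $|S|=2$ an upward jump, so your case analysis for subsets of the pool simply proves that monotonicity and is not a genuinely delicate point.
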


\begin{lemma}
\label{lem:propPreserve_pre}
If an orientation $\X$ satisfies  Property $\prpt$ then the output of Procedure1($\X$) is an orientation that satisfies Property $\prpt$.
\end{lemma}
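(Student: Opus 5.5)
We argue by induction over the iterations of the while-loop of Algorithm~\ref{algo:prePro}, showing that each update in line 6 turns an orientation satisfying Property $\prpt$ into another one. Fix an iteration: $(T,j)$ is returned by \MinEnviedSet, and $j$ gets $T'$, which is $T$ or $(T\setminus\{r\})\cup\{h_j\}$. Only $X_j$ changes, and the old $X_j$ goes back to the pool. So two things must be checked. First, $j$ keeps its guarantee towards every other bundle. Second, every other agent $k$ keeps its guarantee towards the new bundle $X_j=T'$. Guarantees between untouched pairs are unaffected, because their bundles do not change.

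\emph{Orientation.} \MinEnviedSet\ returns a set $T$ that $j$ virtually envies but whose proper subsets no agent virtually envies. Suppose $T$ contained a good $r$ irrelevant to $j$ and $|T|\geq 2$. If $|T|\geq 3$, then $|T\setminus\{r\}|\geq 2$, so for subadditive (hence monotone, with $v_j(r)=0$) valuations $\hv_j(T\setminus\{r\})=\frac1\alpha v_j(T\setminus\{r\})\geq \frac1\alpha\bigl(v_j(T)-v_j(r)\bigr)=\hv_j(T)$. Then $j$ would virtually envy a proper subset of $T$, a contradiction. If $|T|=2$, line 4 replaces $r$ with $h_j$; since $h_j$ is artificially relevant only to $j$, the bundle remains an orientation. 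The size is still $2$ and $v_j$ is unchanged, so the virtual value is preserved. If $|T|=1$ and its good were irrelevant, then $\hv_j(T)=0$, which cannot be virtually envied. Hence $T'$ is oriented to $j$.

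\emph{Agent $j$.} By construction $\hv_j(T')=\hv_j(T)>\hv_j(X_j)$. Lemma~\ref{lem:Prop_Preserve} then directly gives that $j$ is EFX towards all bundles if $|T'|\leq1$, and $\alpha$-EFX if $|T'|\geq2$. This is exactly $\prpt$ for $j$.

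\emph{Other agents towards $T'$.} This is the main step, and it is where the minimality of $T$ is used. Take $k\neq j$ and any $g\in T'$. If $g=h_j$, then $v_k(T'\setminus\{g\})\leq v_k(T)$ by monotonicity. Otherwise $T'\setminus\{g\}$ is either a proper subset of $T$, or it is $(T\setminus\{r,g\})\cup\{h_j\}$, whose value to $k$ equals that of the proper subset $T\setminus\{r,g\}$ since $h_j$ has zero value. By minimality, $\hv_k(S)\leq\hv_k(X_k)$ for every proper subset $S\subset T$.

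We then compare real values, splitting on $|X_k|$ and $|S|$:
\begin{itemize}
\item[(i)] If $|X_k|\leq1$ and $|S|\leq1$, then $v_k(S)\leq v_k(X_k)$, which is EFX.
\item[(ii)] If $|X_k|\leq1$ and $|S|\geq2$, then $\frac1\alpha v_k(S)\leq v_k(X_k)$, which is stronger than EFX.
\item[(iii)] If $|X_k|\geq2$ and $|S|\geq 2$, then $v_k(S)\leq v_k(X_k)$.
\item[(iv)] If $|X_k|\geq2$ and $|S|\leq1$, then $v_k(S)\leq\frac1\alpha v_k(X_k)$, i.e.\ $v_k(X_k)\geq\alpha v_k(S)$, which is the $\alpha$-EFX requirement.
\end{itemize}
In every case $k$ meets its $\prpt$ requirement towards $T'$: $\prpt$(a) in cases (i)--(ii) and $\prpt$(b) in cases (iii)--(iv).

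One subtle point needs checking: in case $|T|=2$ with $g=h_j$, the set $T'\setminus\{h_j\}=T\setminus\{r\}$ is itself a proper subset of $T$, so it is covered by minimality as well. Once this is verified, the induction closes, and together with termination (Observation~\ref{obs:pre}) the lemma follows.
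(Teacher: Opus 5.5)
Your proof is correct and follows essentially the same route as the paper. You establish orientation via the minimality of $T$ together with the $h_j$ substitution, apply Lemma~\ref{lem:Prop_Preserve} for agent $j$, and then use minimality of $T$ with the same case split on $|S|$ versus $|X_k|$ for the other agents. Your version is, if anything, slightly more careful than the paper's, since you explicitly check that replacing $r$ by $h_j$ does not affect the other agents' guarantees towards the new bundle.
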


\begin{proof}
We first prove that the allocation remains an orientation. Subset $T$ of line 2 has to contain at least one relevant good for agent  $j$ in order to have $\hv_j(T)> \hv_{j}(X_{j})$ in line 2. If it contains more than one relevant goods, then, due to its minimality, it cannot contain any irrelevant  goods. If it contains exactly one relevant good, say $g$, then, due to its minimality, it may contain  at most one irrelevant good, say $r$. This $r$ would be there only to adjust the cardinality of $T$ to $|T|=2$, since it could be $\hv_\hi(g)=v_\hi(g)\leq \hv_\hi(X_\hi)<\frac{1}{\alpha}v_\hi(g)=\frac{1}{\alpha}v_\hi(\{g, r\})=\hv_\hi(\{g, r\})$.  In words, it may be that, based on the virtual values, $g$ alone is not preferred by agent $\hi$ but $g$ together with any other (irrelevant) good  is preferred. In place of $r$ we use the (artificially) relevant good $h_\hi$ and thus, in this case as well, agent $\hi$ will get only relevant goods in the assignment of line 6. 

We now turn to inductively prove that the allocating step of line 6 will not break Property $\prpt$. To do so, we will first examine the way that agent $\hi$ values the bundles of the rest of the agents, and then  examine the way that other agents value the new bundle assigned to $\hi$. For the rest of the agents' pairs, any agent other than $\hi$ will value the bundle of any agent other  than $\hi$ in the same way as before and thus Property  $\prpt$ cannot break because of them.

We start by examining the way that $\hi$, based on the original valuations, values the bundles of the rest of the agents after the allocating  step of line 6.
Let $i$ be some other agent. 
Assuming that $\prpt$ is satisfied at the beginning of each loop of the while-loop, the requirements of Lemma \ref{lem:Prop_Preserve} are met after the allocating step of line 6 and thus the envy of agent $\hi$ towards other agents' bundles  cannot be a reason for Property $\prpt$ to break.

Let us now examine the way that other agents value the new bundle assigned to $\hi$, i.e., $T$. Consider any other agent $i$.
If $|T|=1$, then $i$ will trivially be EFX (and $\alpha$-EFX) towards $\hi$. If $|T|\geq 2$, for any strict subset $S\subset T$ it is $\hv_i(S)\leq \hv_i(X_{i})$ (by the conditions hold for the output of \MinEnviedSet). If $|S|\geq 2$ or  $|X_i|\leq 1$, by the definition of the $\hv_j$,  this gives $v_i(S)\leq  v_i(X_{i})$  and thus $i$ will be EFX (and $\alpha$-EFX) towards $\hi$. If $|S|\leq 1$ and $|X_i|\geq 2$ then $\hv_i(S)\leq \hv_i(X_{i})$ translates to $v_i(S)\leq \frac{1}{\alpha}v_i(X_{i})\Leftrightarrow \alpha v_i(S)\leq v_i(X_{i})$
and so $i$ will be $\alpha$-EFX towards $\hi$. Thus, no agent's envy towards $\hi$ can be a reason for Properties $\prpt$(a) and $\prpt$(b), and therefore $\prpt$, to break.
\end{proof}

\subsubsection{Eliminating Cycles}

The subroutine CycleResolution eliminates any given cycle in the envy graph, by giving to  every agent in the cycle the bundle that the next agent in the cycle has. AllCyclesResolution repeatedly calls CycleResolution until all cycles in the corresponding envy graph are eliminated. Both subroutines terminate in polynomial time.

\begin{algorithm}
\caption{CycleResolution$(\X,\vv,G(\X),C)$}
\label{algo:cycle_res}
\raggedright{\bf{Input}:} An allocation $\X$, its envy graph $G(\X)$ according to $\vv$, and a cycle $C$ in $G(\X)$.

\raggedright{\bf{Output}:} An updated allocation $\X$ such that its envy graph $G(\X)$ according to $\vv$ does not contain $C$.

\begin{algorithmic}[1]
     \STATE $\X' \leftarrow \X$ 
    \FOR{every edge $(i,j) \in C$}
        \STATE $X_i \leftarrow X'_j$
    \ENDFOR
    \STATE \textbf{return} $\X$
\end{algorithmic}
\end{algorithm}

\begin{algorithm}
\caption{AllCyclesResolution$(\X,\vv)$}
\label{algo:all_cycle_res}
\raggedright{\bf{Input}:} An allocation $\X$ and the valuation profile of the agents $\vv$.

\raggedright{\bf{Output}:} An updated  allocation $\X$ such that the envy graph $G(\X)$ according to $\vv$ is acyclic. 

\begin{algorithmic}[1]
        \WHILE{there exists a cycle $C$ in the envy graph $G(\X)$ according to $\vv$} 
        \STATE $\X$  \ = \  CycleResolution$(\X,\vv, G(\X),C)$
    \ENDWHILE
    \RETURN $\X$
\end{algorithmic}
\end{algorithm}

AllCyclesResolution (and CycleResolution) will be called to resolve cycles in $\G(\X)$, i.e., the envy graph based on the virtual valuations. The following lemma is important for our proofs.

\begin{lemma}
\label{lem:Prop_Preserve_cycle}
    If an allocation $\X$ satisfies Property $\prpt$, then the output of AllCyclesResolution$(\X,\hvv)$ satisfies Property $\prpt$.
\end{lemma}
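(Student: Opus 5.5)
It suffices to show that a single call of CycleResolution$(\X,\hvv,\G(\X),C)$ on an allocation satisfying Property $\prpt$ returns an allocation that again satisfies $\prpt$. The lemma then follows by induction over the iterations of the while-loop of AllCyclesResolution. The loop terminates because every resolution strictly increases the virtual value of each agent on $C$ and leaves all other agents' bundles unchanged. Since there are finitely many allocations, the potential $\Phi$ cannot increase forever.

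Fix one resolution of cycle $C$. The multiset of bundles after the resolution is exactly the same as before; bundles are only permuted among the agents of $C$. Hence, for an agent $k\notin C$, its own bundle and value are unchanged, and the collection of other bundles it compares against is unchanged as a set. Since $k$ was EFX (if $|X_k|\le 1$) or $\alpha$-EFX (if $|X_k|\ge 2$) towards every bundle before, it remains so. Property $\prpt$ is stated bundle-wise and depends only on $k$'s own bundle, so no new violation can involve such a $k$ as the envious agent.

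For an agent $i\in C$ with successor $j$, we have $(i,j)\in\tilde E(\X)$, i.e.\ $\hv_i(X_j)>\hv_i(X_i)$, and $i$ receives $S=X_j$. This is exactly the setting of Lemma~\ref{lem:Prop_Preserve} with $\hv_i(S)\ge\hv_i(X_i)$. Applied to the old allocation $\X$, which satisfies $\prpt$, it gives the following towards every old bundle $X_\ell$: if $|S|\le1$ then $i$ is EFX towards $X_\ell$, and if $|S|\ge 2$ then $i$ is $\alpha$-EFX towards $X_\ell$. The bundles held by the other agents after the resolution are all old bundles. The only exception is that $i$'s own former bundle $X_i$ may now be held by its predecessor, but $X_i$ is also an old bundle, so the lemma covers it as well. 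Hence $\prpt$(a) holds for $i$ if its new bundle has at most one good, and $\prpt$(b) holds for $i$ otherwise.

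The only point needing care is that Lemma~\ref{lem:Prop_Preserve} is phrased for reassigning one agent while the others keep their bundles. Its proof, however, only uses the inequalities of $\prpt$ for $i$'s old bundle against each fixed $X_\ell$. So I would invoke it once per agent of $C$, with respect to the original $\X$ and every old bundle $X_\ell$ (including $X_i$ itself, where it holds trivially). This requires noting that the conclusion is about pairs (agent, bundle) and not about which agent holds the bundle. That observation is the main, though mild, obstacle. If the paper also wants orientation preserved, that is not claimed by this lemma, so I would not address it here.
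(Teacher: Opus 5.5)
Your proof is correct and follows the paper's argument. Bundles are only permuted, so agents off the cycle are unaffected, and each agent on the cycle is handled by Lemma~\ref{lem:Prop_Preserve} applied to the old allocation, with the conclusion holding for each old bundle regardless of who holds it afterwards; your explicit check of $i$ against its own former bundle (now held by its predecessor) and your termination argument are small additions the paper's wording passes over, but they do not change the approach.
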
  

\begin{proof}
    First note that in any execution of CycleResolution, some bundles go from one agent to another but their contents do not change. Thus,  agents that do not participate in  the resolved cycle keep their bundles and continue envying the same bundles the way they did before. Thus Property $\prpt$ cannot break because of the way agents outside the cycle see other agents' bundles. It remains to show that this is true for agents in the resolved cycle.

    Consider an agent $i$ in the cycle that will be resolved
    and the bundle $X_j$ of some agent $j$ different from $i$. When the cycle is resolved we know that agent $i$ will get the set that the next agent in the cycle has, say  $S$, for which $\hv_i(S)>\hv_i(X_i)$. But then, for $\X$, $i$, $S$ and $j$ the requirements of Lemma \ref{lem:Prop_Preserve} are met and thus the envy of $i$ towards other agents' bundles cannot be a reason for Properties $\prpt$(a) and $\prpt$(b) to break, irrespective of where these bundles will be allocated after the cycle resolution.
\end{proof}

\subsubsection{A Post-Processing Procedure}

The following post-processing procedure (Algorithm \ref{algo:postPro}) is used to resolve cycles with the ultimate goal of finding a source in the envy graph having either zero or at least two goods in her bundle. For that, we make use of the dummy goods  $\{h,h_1,\ldots,h_n\}$ , that will help in controlling  the cardinalities of the bundles. Recall $\G$ denotes the envy graph  that arises when one considers the virtual valuations.

\begin{algorithm}
\caption{Procedure2($\X$)}
\label{algo:postPro}
\raggedright{\bf{Input}:} An orientation $\X$ satisfying Property $\prpt$.\\

{\bf{Output:}} An orientation $\X$ satisfying Property $\prpt$, with $\G(\X)$ being acyclic. 

\begin{algorithmic}[1]
\WHILE {$\G(\X)$ has cycles or some source of $\G(\X)$ has only one good} 
    \FORALL {sources of $\G(\X)$}
    \STATE If for a source $s$ it is $|X_s|=1$ then $X_s\leftarrow X_s\cup \{h_s\}$
    \ENDFOR
    \STATE   $\X \leftarrow$ AllCyclesResolution$(\X,\hvv) $ \COMMENT{Update $\X$ by eliminating all cycles}
    \STATE Let $S = \{i \in N| \ |X_{i}| \geq 2\}$
    \FORALL {$i\in N$}
    \STATE Remove from $X_i$ all the goods that are irrelevant to $i$ 
    \ENDFOR 
    \FORALL {$i\in S$ so that $|X_i|=1$}
    \STATE $X_i\leftarrow X_i\cup\{h_i\}$ 
    \ENDFOR     
\ENDWHILE
\RETURN $\X$
\end{algorithmic}
\end{algorithm}

First thing to note is that  at every execution of the while-loop at least one of lines 3 and 5 will change the allocation so that at least some of the virtual values of the agents' bundles 
get strictly increased, while the rest remain unchanged: line 3 increases the virtual value of every source $s$ because of the cardinality of $X_s$ switching from 1 to 2,\footnote{This is true only if $s$ had positive value for its assigned good. If this is not the case, still the termination is guaranteed, since there are at most $n$ sources.} line 5  increases the virtual values of the bundles of the agents in the cycle (if there is a cycle in $\G(\X)$), and lines 8 and 11 keep the same values for all agents and for those with more than two goods keep the cardinality above one, thus keeping the virtual values the same. As a result, since there can be finitely many possible such  values for every agent, Algorithm \ref{algo:postPro} terminates. In other words, the condition of the outer while-loop gets satisfied and we have the following observation.

\begin{observation}
\label{obs:post}
    Algorithm \ref{algo:postPro} terminates returning an allocation where none of the virtual values of the agents has decreased and for which $\G(\X)$ is acyclic and for any source $s$ of $\G(\X)$ it is $|X_s|=0$ or $|X_s|\geq 2$. Additionally, whenever the allocation changes after the execution of a loop of the   while-loop of Algorithm \ref{algo:postPro} from say an $\X$ to some $\X'$  the lexicographic potential increases, i.e., it will be either $\Phi(\X)<\Phi(\X')$, or it is $\Phi(\X)=\Phi(\X')$ and $\sum_{i=1}^n|X_i|<\sum_{i=1}^n|X'_i|$.  
\end{observation}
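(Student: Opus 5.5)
The plan is to prove the three claims of Observation~\ref{obs:post} separately: termination, the structure of the returned allocation, and the strict increase of the lexicographic potential $\big(\Phi(\X),\sum_{i}|X_i|\big)$ at every loop of the while-loop. The structural claim is the easy part. The loop condition of Algorithm~\ref{algo:postPro} is exactly the negation of ``$\G(\X)$ is acyclic and every source has $0$ or at least $2$ goods.'' So if the loop exits, the returned allocation has these properties, and termination is all that remains.

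Next I would show that no virtual value decreases in any single loop, by checking each line.
\begin{itemize}
\item Line 3 changes $|X_s|$ from $1$ to $2$ while keeping the value, since $v_s(h_s)=0$. Hence $\hv_s$ goes from $v_s(X_s)$ to $\frac1\alpha v_s(X_s)\geq v_s(X_s)$.
\item Line 5 only moves bundles along virtual-envy cycles. Each agent on a resolved cycle strictly increases its virtual value, and every other agent keeps its bundle.
\item Line 8 removes goods irrelevant to $i$. I would argue that this does not change $v_i$, using monotonicity and the definition of relevance: removing a good that never increases the marginal value leaves the value unchanged.
\item Line 11 restores cardinality at least $2$ for every agent that had at least $2$ goods before the removal. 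This keeps the virtual value equal to its value after line 5. Agents that had at most one good keep the same cardinality regime, since removing irrelevant goods only shrinks the bundle.
\end{itemize}
Consequently every $\hv_i$ after the loop is at least its value before.

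For the strict increase I would split into cases. If $\G(\X)$ had a cycle at line 5, or line 3 fired for a source with $v_s(X_s)>0$, then $\Phi$ strictly increases. Otherwise, the loop executed only because some source $s$ had exactly one good of zero value, and then line 3 adds $h_s$. I must check that this added good survives lines 8--11. It does: $h_s$ is artificially relevant to $s$, so it is never removed, and $s$ now has $|X_s|\ge2$. With no cycle resolved, no bundles move, and so $\sum_i|X_i|$ strictly increases while $\Phi$ stays equal. One subtlety is that line 8 could remove goods in other bundles and shrink $\sum|X_i|$. I would argue that after the first loop all bundles are orientations (Procedure1 outputs an orientation), so line 8 only removes goods that were moved by a cycle, which happens only in the case where $\Phi$ strictly increased. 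This is the step I expect to be most delicate.

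Termination then follows because the lexicographic potential takes finitely many values (finitely many allocations of $M\cup H$) and strictly increases whenever the allocation changes. If the allocation does not change in some loop, the condition would already fail, since lines 3 and 5 act whenever the condition holds. In the degenerate case noted in the footnote, at most $n$ sources can receive a dummy good, so this also ends.
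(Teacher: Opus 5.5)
Your proposal is correct and follows essentially the same route as the paper. The structural claim is the negation of the loop condition, lines 3 and 5 strictly raise some virtual value while lines 8 and 11 preserve all of them, and finiteness gives termination; your $\sum_i|X_i|$ argument makes the paper's footnote on zero-valued sources explicit. One caveat, which the footnote shares: because the allocation is an orientation at the start of each loop, a source whose single good has zero value must hold exactly $\{h_s\}$, and there line 3 adds nothing, so that case must be shown unreachable (it does not seem to arise in the runs of Algorithms~\ref{alg:g3m1} and~\ref{algo:hyper2}) rather than handled by the growth of $\sum_i|X_i|$.
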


We also have the following crucial lemma.

\begin{lemma}
\label{lem:propPreserve_post}
If $\X$ is an orientation that satisfies Property $\prpt$, then the output of Procedure2($\X$)  is an orientation that satisfies Property $\prpt$.
\end{lemma}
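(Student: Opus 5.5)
We argue by induction over the loops of the outer while-loop of Procedure2: assuming the current allocation $\X$ is an orientation satisfying Property $\prpt$ at the start of a loop, we show the same holds after each of the three modification blocks (line 3, line 5, lines 6--11). The claim then follows from the termination guaranteed by Observation \ref{obs:post}. One caveat: the output should be read as an orientation once the dummies $h_i$ count as relevant to $i$. Between line 5 and line 8 the allocation may briefly fail to be an orientation, and we handle this explicitly.

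\textbf{Line 3.} A source $s$ with $|X_s|=1$ gets $h_s$. Since $v_j(h_s)=0$ for all $j$, and $h_s$ is artificially relevant to $s$, the result is still an orientation.
\begin{itemize}
\item Other agents: every agent $j\neq s$ values $X_s\cup\{h_s\}$ exactly as it valued $X_s$. Removing any good from the new bundle leaves either $X_s$ or $\{h_s\}$, both worth at most $v_j(X_s)$. Since $s$ is a source, $\hv_j(X_j)\ge \hv_j(X_s)=v_j(X_s)$. This gives $v_j(X_j)\ge v_j(X_s)$ when $|X_j|\le 1$, and $v_j(X_j)\ge \alpha v_j(X_s)$ when $|X_j|\ge 2$. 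So $\prpt$(a) and $\prpt$(b) hold towards $s$.
\item Agent $s$ itself: $\hv_s(X_s\cup\{h_s\})=\frac1\alpha v_s(X_s)\ge \hv_s(X_s)$, so Lemma \ref{lem:Prop_Preserve}(b) shows $s$ is $\alpha$-EFX towards everyone, as $\prpt$(b) requires for an agent with two goods.
\end{itemize}

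\textbf{Line 5.} This is exactly Lemma \ref{lem:Prop_Preserve_cycle}, which preserves $\prpt$. Cycle resolution can, however, move a bundle to an agent for whom some goods in it are irrelevant. This is why lines 6--11 exist.

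\textbf{Lines 6--11.} This is the main obstacle. Agent $i$ removes irrelevant goods from $X_i$, and if $i\in S$ ends up with one good it gets $h_i$ back.
\begin{itemize}
\item Agent $i$'s own view: irrelevant goods contribute nothing to $v_i$, so $v_i$ of $i$'s bundle is unchanged, and $\hv_i$ is unchanged unless the cardinality drops from $\ge2$ to $1$. Adding $h_i$ exactly repairs that case. If the cardinality drops to $0$, the bundle consisted only of irrelevant goods, so its value was $0$. Then $\hv_i$ is again $0$, and $i$ only faces an EFX requirement ($\prpt$(a)) for which it previously met an $\alpha$-EFX one. I expect this sub-case to need the observation that a bundle of value $0$ with two goods can only arise from line 3 or a cycle, and then argue via the lexicographic potential or directly. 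I would first check whether $v_i(X_i)=0$ and $\alpha$-EFX towards $X_j$ force $v_i(X_j\setminus\{g\})=0$, which gives EFX trivially.
\item Other agents $j$: bundles only shrink, apart from replacing removed goods by an $h_i$ that $j$ values at $0$. By monotonicity $v_j(X_i\setminus\{g\})$ can only decrease. Also, $j$'s own bundle keeps its virtual value, or falls into the sub-case above.
\item Orientation: after line 8 every bundle contains only goods relevant to its owner or that owner's own dummy, so the allocation is again an orientation.
\end{itemize}
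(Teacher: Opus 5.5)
Your argument is correct and follows the same line-by-line route as the paper: the dummy additions at lines 3 and 11 are harmless, line 5 is Lemma~\ref{lem:Prop_Preserve_cycle}, and stripping irrelevant goods keeps each owner's value while only shrinking the other bundles. It is in fact more careful than the paper's proof in two places. At line 3 you use that $s$ is a source to obtain EFX/$\alpha$-EFX towards the new two-good bundle $X_s\cup\{h_s\}$, which the paper glosses over, and you treat lines 6--11 as one block.

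The drop-to-zero sub-case you leave open cannot occur. Every agent touched by a cycle resolution strictly increases $\hv_i$, so it ends with a bundle of positive value that contains a relevant good. Even if the case did arise, your own proposed check closes it, because $0=v_i(X_i)\ge\alpha\, v_i(X_j\setminus\{g\})$ forces $v_i(X_j\setminus\{g\})=0$.
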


\begin{proof}
    Allocation $\X$ is initially an orientation and  will remain an orientation no matter how many times the while-loop will be executed, since in the last run of the while-loop, line 8 will discard from every $X_i$ all irrelevant goods for $i$, and line 11 will  add only $h_i$ to  $X_i$, which is (artificially) relevant  for $i$.

    The allocation changes in lines 3, 5, 8 and 11. We will show that in none of them  Property $\prpt$ can break. In line 3, for any such agent $s$ and any agent $i$, it is $v_i(X_s)=v_i(X_s\cup\{h_s\})$ and thus the way agents envy each other does not change. Similarly, with the changes in line 11 the way agents envy each other does not change. In line 5, AllCyclesResolution is called but, because of Lemma \ref{lem:Prop_Preserve_cycle}, Property $\prpt$ cannot break there.  In line 8, the (original) value of the bundle of any agent $i$ does not change in $i$'s perspective since only irrelevant goods for $i$ are removed, whereas, on the other hand  the rest of the bundles may loose value in $i$'s perspective. Thus, Property $\prpt$ cannot break due to line 8 as well.
\end{proof}

\subsection{Solving hypergraph instances of girth at least 3}{\label{sec:hyper_g_3}}

We next present an algorithm for computing a $\frac{\sqrt{2}}{2}$-EFX allocation for hypergraph instances of girth at least three. Here we will (eventually) use $\alpha=\frac{\sqrt{2}}{2}$. Starting with  all bundles empty, the algorithm repeatedly and iteratively takes improving steps with regard to the envy towards the pool and the envy among agents, until it reaches an allocation where the envy towards the pool and among the agents simultaneously satisfies  some suitable properties. In the final step it assigns all the unallocated goods to some (arbitrarily chosen) source of $\G(\X)$ and reaches a $\frac{\sqrt{2}}{2}$-EFX complete allocation.

In more details, after initializing all agents' bundles to empty and introducing a set of dummy goods (lines 1-4), in the outer while-loop (lines 5-13) the algorithm repeatedly eliminates the envy  towards the pool (via calling Algorithm \ref{algo:prePro} in line 7), allocates to agents with more than one goods their relevant goods from the pool (lines 8-10) and resolves the possibly created cycles on the virtual envy graph $\G(\X)$ (via calling Algorithm \ref{algo:postPro} in line 11), and keeps on doing so as long as one of these steps change the allocation (line 12). Once this while-loop stops, an arbitrary source of $\G(\X)$ is chosen and  gets allocated all the unallocated goods (lines 14-15). The resulting allocations is returned (line 17) after removing all allocated dummy goods from it (line 16).

\begin{algorithm}
\caption{Hyper1 Algorithm}
\label{alg:g3m1}
\raggedright\textbf{Input:} A hypergraph instance $\Hg = (V,E)$ of girth at least 3.
    
    \raggedright\textbf{Output:} A $\frac{\sqrt{2}}{2}$-EFX complete allocation $\X$ satisfying Property $\prpt$.
\begin{algorithmic}[1]
    \FOR{every agent $i \in N$}
    \STATE $X_{i} \leftarrow \emptyset$
    \ENDFOR
    \STATE Introduce set $H=\{h,h_1,\ldots,h_n\}$ of $n+1$ dummy goods. 
    \WHILE{True}
        \STATE  $\X'\leftarrow \X$
        \STATE $\X \leftarrow $ Procedure1($\X$). 
        \WHILE{there is an agent $i \in N$ with $|X_{i}| \geq  2$ and a good $g \in \U(\X)$ such that $v_i(g)>0$}
        \STATE  $X_{i} \leftarrow X_i\cup \{g\}$ \COMMENT{Agent i receives all her relevant valuable goods from the pool}
        \ENDWHILE
         \STATE  $\X \leftarrow $ Procedure2($\X$).
         \STATE  If $\X = \X'$, break;
    \ENDWHILE \\ 
\STATE Let $s \in N$ be a source of $\G(\X)$
\STATE  $X_{s} \leftarrow X_{s} \cup \U(\X)$ \COMMENT{Assigning the pool to a non-envied source $s$}
\STATE Remove all dummy goods from the agents'  bundles
\RETURN $\X$
\end{algorithmic}
\end{algorithm}

Again, a first thing to note is that Algorithm \ref{alg:g3m1} terminates. To prove this it suffices to show that the while-loop of lines 5-13 terminates, or, in other words, that the allocation at some run of the while-loop will remain the same (line 12). This is true because whenever the allocation changes inside the while-loop, it does so either because of lines 7 or 11, where the  lexicographic potential increases (Observations~\ref{obs:pre} and \ref{obs:post}), or because of line 9 which can be executed a finite number of times and only increases the virtual values that agents perceive for their bundles while increasing some bundles' sizes, thus increasing the lexicographic potential. Since there can be finitely many possible values for the lexicographic potential, Algorithm \ref{alg:g3m1} terminates. 

Moreover, once the while-loop of lines 5-13 terminates, we know that in the last run of the  while-loop the allocation  did not change, which implies that the calls of Procedure1 at line 7 and Procedure2 at line 11 did not change the allocation. In turn this implies that  Observations \ref{obs:pre} and \ref{obs:post} will simultaneously hold for the allocation $\X$ at hand. Last thing to note is that with the allocating steps of the while-loop of lines 8-10 any $i$ with $|X_i|\geq 2$ has no relevant good in the pool. All the above yield the following.

\begin{observation}
    \label{obs:hyper1}
    Once the while-loop of lines 5-13 of Algorithm \ref{alg:g3m1} terminates, for  allocation $\X$   Observations \ref{obs:pre} and \ref{obs:post} hold, and for any $i$ with $|X_i|\geq 2$ it holds that  $v_i(\U(\X))=0$.
    Additionally, whenever the allocation changes after the execution of a loop of the   while-loop of lines 5-13 of Algorithm \ref{alg:g3m1} from say an $\X$ to some $\X'$ it will be $\Big(\Phi(\X),\sum_{i=1}^n|X_i|\Big)<\Big(\Phi(\X'),\sum_{i=1}^n|X'_i|\Big)$.  
\end{observation}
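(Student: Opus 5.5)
The plan is to treat the observation as three claims and prove them in order: termination of the outer while-loop, the joint validity of Observations~\ref{obs:pre} and \ref{obs:post} at exit, and the strict increase of the lexicographic potential whenever a loop iteration changes the allocation. The monotonicity claim comes first, because termination follows from it.

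\textbf{Monotonicity.} Fix one run of lines 6--12 that starts at $\X$ and ends at $\X'\neq \X$. Split the run into its three phases: the call to Procedure1, the inner loop of lines 8--10, and the call to Procedure2. For each phase I show that the pair $\big(\Phi,\sum_i|X_i|\big)$ never decreases, and strictly increases whenever the phase changes the allocation.
\begin{itemize}
\item Procedure1: Observation~\ref{obs:pre} gives a strict increase of $\Phi$ at every change.
\item Procedure2: Observation~\ref{obs:post} gives a lexicographic increase at every change.
\item Lines 8--10: each step adds a good $g$ with $v_i(g)>0$ to a bundle with $|X_i|\ge2$. By monotonicity $v_i$ does not decrease, and the bundle size stays at least $2$, so $\hv_i=\frac1\alpha v_i$ does not decrease either. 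Other agents' bundles are untouched, so $\Phi$ does not decrease, while $\sum_i|X_i|$ increases by one. Hence the lexicographic potential strictly increases.
\end{itemize}
If $\X'\ne\X$, at least one phase changed something, and chaining the phases gives the strict lexicographic inequality. The point to be careful about is that a strict increase in $\Phi$ in one phase is not undone by a later phase. This holds because every phase is non-decreasing in the first coordinate, and whenever the first coordinate stays equal the second coordinate does not drop.

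\textbf{Termination.} Only finitely many allocations of $M\cup H$ exist, so the lexicographic potential takes finitely many values. By the monotonicity claim, the loop cannot change the allocation infinitely often, so eventually $\X=\X'$ at line 12 and the loop breaks.

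\textbf{Properties at exit.} In the final iteration the allocation is unchanged. Since each phase is weakly monotone in the lexicographic potential and strictly monotone upon change, none of the three phases changed anything. Procedure1 left $\X$ unchanged, so its while-condition failed on $\X$, which gives the conclusion of Observation~\ref{obs:pre}. Lines 8--10 made no change, so their loop condition failed: every $i$ with $|X_i|\ge2$ has no $g\in\U(\X)$ with $v_i(g)>0$. By subadditivity, a set of zero-value goods has zero value, so $v_i(\U(\X))=0$. Finally, Procedure2 returned $\X$ itself, and its output guarantee from Observation~\ref{obs:post} (acyclic $\G(\X)$, every source holding $0$ or at least $2$ goods) therefore holds for the final $\X$.

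The main obstacle is the step that passes from "the allocation is unchanged over a whole iteration" to "each phase individually made no change." A priori some phase could change the allocation and a later phase could revert it. This is ruled out only because every phase is weakly monotone in the lexicographic order and strictly monotone upon change. So the per-phase monotonicity must be established first, and the other two claims are derived from it.
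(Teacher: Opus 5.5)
Your proposal is correct and follows essentially the same route as the paper: per-phase lexicographic monotonicity of $\big(\Phi,\sum_i|X_i|\big)$ (via Observations~\ref{obs:pre} and \ref{obs:post} and the size increase in lines 8--10), finiteness for termination, and the fact that an unchanged final iteration means Procedure1, lines 8--10 and Procedure2 all made no change, so their stopping conditions hold at exit. Your argument that no phase can change the allocation only for a later phase to undo it, and your subadditivity step giving $v_i(\U(\X))=0$, make precise two points the paper leaves implicit.
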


Below is the main result of the section.

\ThmHyperOne

\begin{proof}
We will show that Algorithm \ref{alg:g3m1} returns such an allocation.  We defer the discussion on the running time of the algorithm to Section \ref{sec:RunTimes} (Lemma \ref{lem:runTime1}). We start by showing that when the while-loop of lines 5-13 terminates, allocation $X$ is an orientation that satisfies Property $\prpt$. When all bundles are empty  Property $\prpt$ trivially holds and the allocation is an orientation. Also, by Lemmas \ref{lem:propPreserve_pre} and \ref{lem:propPreserve_post} when calling  Procedure1 and Procedure2 in lines 7 and 11, respectively,  Property $\prpt$ cannot break and the allocation remains an orientation. Last, in the while-loop of lines 8-10 agents are given only goods relevant to them. Thus, the allocation is at all times an orientation and it suffices to show that in the while-loop of lines 8-10 Properties  $\prpt$(a) and   $\prpt$(b) do not break. 

Consider agent $i$, with $|X_i|\geq 2$, that will get an extra good because of line 9, and any other agent $j$. It is $v_i(X_i)\leq v_i(X_i\cup\{g\})$, for the $g$ allocated in line 9, and thus $i$ will not stop being $\alpha$-EFX towards $j$, implying that Properties $\prpt$(a) and $\prpt$(b) cannot break because of the way $i$ values other agents' bundles. It remains to examine how other agents value $X_i\cup\{g\}$. 

 Consider any agent $j$ and note that $X_i$ contains only goods relevant to $i$ (recall, $\X$ is at all times an orientation). If $X_i$ contains the only common  good  of $i$ and $j$ (Observation \ref{obs:at-most-1-relevant-common-edge} for girth$\geq$3 instances), then for $j$ it is $v_j(X_i)= v_j(X_i\cup\{g\})$, implying that $j$ will keep envying $i$ in the same way as before (EFX or $\alpha$-EFX towards $i$).
 If $X_i$ does not contain the only common  good of $i$ and $j$, then it might still be in the pool. 
 Yet, by every termination of Procedure1 at line 7, due to Observation \ref{obs:pre}, for any agent $j$: $\hv_j(X_j)\geq \hv_j(\U(\X))$, which, by the definition of the $\hv_j$'s, implies $\hv_j(X_j)\geq v_j(g)$. Moreover, this inequality holds throughout the while-loop of lines 8-10, since the agents' values only increase and goods are only removed from the pool. On the other hand, since $X_i$ contains only goods relevant to $i$  and does not contain the common good of $i$ and $j$, it is $v_j(g)=v_j(X_i\cup\{g\})$.  If $|X_j|=1$, then $\hv_j(X_j)\geq v_j(g) \Rightarrow v_j(X_j)\geq v_j(g)$, yielding $v_j(X_j)\geq v_j(X_i\cup\{g\})$, implying that $j$ will be  EFX towards $i$. If $|X_j|\geq 2$, then $\hv_j(X_j)\geq v_j(g) \Rightarrow \frac{1}{\alpha}v_j(X_j)\geq v_j(g)$, yielding $v_j(X_j)\geq \alpha v_j(X_i\cup\{g\})$, implying that $j$ will be  $\alpha$-EFX towards $i$.

Since by the termination of the while-loop of lines 5-13 the allocation at hand satisfies Properties $\prpt$(a) and $\prpt$(b),  every agent is $\alpha$-EFX towards any other agent. In order to complete the proof it suffices to show  that with the allocating step of line 15 all agents are $\alpha$-EFX towards the chosen source $s$. For any agent $i$ with $|X_{i}| \geq 2$, from Observation \ref{obs:hyper1},  it holds that $v_{i}(\U(\X)) = 0$. On the other hand since $s$ from line 14 is a source of $\G(\X)$  we have that $\hv_{i}(X_{i}) \geq \hv_{i}(X_{s})$ and by Observation \ref{obs:post}, $|X_s|=0$ or $|X_s|\geq 2$. If $|X_s|\geq 2$ then $\hv_{i}(X_{i}) \geq \hv_{i}(X_{s})\Rightarrow v_{i}(X_{i}) \geq v_{i}(X_{s})$. The last inequality trivially holds if $|X_s|=0$. Putting it all together, by the subadditivity of the valuation functions, we get $$v_{i}(X_{s} \cup \U(\X)) \leq v_{i}(X_{s}) + v_{i}(\U(\X)) = v_{i}(X_{s}) \leq v_{i}(X_{i}) $$

\noindent and so $i$ will not envy $s$ after the allocating step of line 15, trivially  being also $\alpha$-EFX towards $s$.

For any agent $i \in N $ with $|X_{i}| \leq 1 $, by Observation \ref{obs:pre} it holds that $\hv_i(X_i)\geq \hv_i(\U(\X))$. If $i$ is allocated some good then it should be of positive value which implies that $h,h_1\in \U(\X)$. Thus, $|\U(\X)|\geq 2$ 
and we get  $\hv_i(X_i)\geq \hv_i(\U(\X))\Rightarrow v_i(X_i)\geq \frac{1}{\alpha}v_i(\U(\X))$. 
On the other hand, since $s$  of line 14 is a source  of $\G(\X)$ it is $\hv_i(X_i)\geq \hv_i(X_s)$, which if $|X_s|\geq 2$  implies $v_i(X_i)\geq \frac{1}{\alpha}v_i(X_s)$, with the last inequality being trivially true if $|X_s|=0$ (by Observation~\ref{obs:post} there is no other case for $|X_s|$). By the subadditivity of the valuation functions we have that 
$$v_{i}(X_{s} \cup \U(X)) \leq v_{i}(X_{s}) + v_{i}(\U(X)) \leq a\cdot v_{i}(X_{i}) + a\cdot v_{i}(X_{i}) = 2a\cdot v_{i}(X_{i})$$
and thus agent $i$ will be $\frac{1}{2\alpha}-$EFX towards $s$ after the allocating step of line 15. Setting $\alpha =\frac{\sqrt{2}}{2}$ we get $\frac{1}{2\alpha}=\alpha$ and  thus agent $i$ will be $\alpha-$EFX towards $s$ after the allocating step of line 15. 

Putting it all together, for $\alpha =\frac{\sqrt{2}}{2}$, $\X$ is a complete allocation where  every agent is $\alpha$-EFX towards any other agent.
\end{proof}

\subsection{Solving multi-hypergraph Instances of girth at least 3 and multiplicity 2}
\label{sec:multi-hyper} 
In this section we present an algorithm for computing a $\frac{2}{3}-$EFX allocation for multi-hypergraph instances of girth at least 3 and multiplicity 2.  Here we will (eventually) use $\alpha=\frac{2}{3}$. The algorithm starts with all bundles empty and  takes improving steps aiming  to reduce the envy towards the pool and among the agents but also to release all  critical goods from the pool. The ultimate goal is to reach a final updated suitable allocation where certain properties are satisfied which will be shown to be a $\frac{2}{3}-$EFX complete allocation.

In more detail, after initializing all agents' bundles to empty and introducing a set of dummy goods (lines 3-6), the algorithm  enters the while-loop of lines 7-40. There, among other things,  a testing step is run that determines when this while-loop breaks (lines 24-27, and we will prove it will eventually break). In this while-loop, until it breaks, the algorithm repeatedly does the following:

In the while-loop (lines 7-21) it repeatedly eliminates the envy  towards the pool (via calling Algorithm \ref{algo:prePro} in  line 10), carefully allocates all critical goods from the pool to agents depending  on the agents' bundles' sizes and where the twin goods of the critical goods (if existent) are allocated (lines 11-18) and resolves the possibly created cycles on the virtual envy graph $\G(\X)$ (via calling Algorithm \ref{algo:postPro} in line 19), and keeps on doing so as long as one of these steps change the allocation (line 20).  Once this while-loop stops (we will show it has to),  it initiates a testing step that from $\X$ it creates an allocation $\X'$ by allocating one by one unallocated goods to sources of the virtual envy graph  until either the pool gets empty or, because of this testing step,  there are no more sources in the virtual envy graph (while-loop of lines 24-26). 

If the procedure stopped with an empty pool, then the outer while-loop of lines 7-40 breaks (line 27) and the resulting allocation is returned  after removing all dummy goods from it (lines 41-42).  If the procedure stopped with a non-empty pool but with no more sources in the virtual envy graph, then an arbitrary cycle of the virtual envy graph is chosen (line 28) and the sources of this cycle  are allocated the bundles they got in the testing step, i.e., the bundles they get in $\X'$ (lines 29-31), whereas all other agents still get the same bundles as right before the initialization of the testing step, i.e., the bundles they had in $\X$. Under this new allocation the arbitrary chosen cycle exists in the virtual envy graph and the algorithm resolves it (line 32). This run of the while-loop of lines 7-40 ends by removing all irrelevant goods from the agents' bundles (line 34-37) making sure, by using the dummy goods, that agents with more than one goods will still have more than one goods (lines 33 and 37-39). Then, the next iteration of the while-loop starts and this goes on until eventually (as we will prove) one iteration ends with the corresponding testing step leaving the pool empty.

\begin{algorithm}[t]
\caption{Hyper2 Algorithm}
\label{algo:hyper2}
\begin{algorithmic}[1]
    \STATE \textbf{Input:} A multi-hypergraph instance $\Hg = (V,E)$ of girth at least  3 and multiplicity 2.
    \STATE \textbf{Output:} A $\frac{2}{3}$-EFX complete allocation $\X$ satisfying Property $\prpt$.\\ 
     \FOR{every agent $i \in N$}
    \STATE $X_{i} \leftarrow \emptyset$
    \ENDFOR 
    \STATE Introduce set $H=\{h,h_1,\ldots,h_n\}$ of $n+1$ dummy goods.
    \WHILE{True}
        \WHILE{True}
        \STATE  $\X'\leftarrow \X$
            \STATE $\X \leftarrow $ Procedure1($\X$). \\
    \WHILE{there is an  agent $i \in N $ and a good $g \in \U(\X)$ such that $v_{i}(g) > \frac{1}{2}v_{i}(X_{i})$ }
    \STATE {\bf(i)} if $|X_{i}| =  2$ and there is a  $g'\in X_{i}$  such that $v_{i}(g) > v_{i}(g')$ then $X_{i} \leftarrow (X_{i} \cup  \{g\}) \setminus \{g'\}$ \\ 
    \STATE {\bf(ii)} if $|X_{i}| >2$ and $g_t$ is the twin  good of $g$, if it exists,  then\\
\STATE \qquad  If  $g_{t} \notin X_{i}$  then  $X_{i} \leftarrow X_{i} \cup \{g\}$  \\ 
\STATE \qquad  If  $g_{t} \in X_{i}$ then \\ 
\STATE \qquad \qquad if  $v_{i}(g_{t}) \leq \frac{1}{2}v_{i}(X_{i})$ then  $X_{i} \leftarrow (X_{i} \cup \{g\}) \setminus \{g_{t}\}$ \\ 
\STATE \qquad \qquad else if $v_{i}(g_{t}) >\frac{1}{2}v_{i}(X_{i})$ then $X_{i} \leftarrow \{g,g_{t}\}$. \\ 
\ENDWHILE
\STATE $\X \leftarrow $ Procedure2($\X$).\\
\STATE  If $\X = \X'$, break;
\ENDWHILE \\ 
    \STATE Let $\So(\X)$ be the set of sources in $\G(\X)$
    \STATE $\X' \leftarrow \X$ \COMMENT{$\X'$ is a potential update for agents $s \in \So(\X)$, which will either break the outer while-loop (line 27) or lead to the update of line 30.} 
    \WHILE{$\U(\X')\setminus H \neq \emptyset$ and there exists a source  $s\in \So(\X)$ that is not virtually envied under $\X'$} 
    \STATE  $X'_{s} \leftarrow X'_{s} \cup \{g\}$, for some  $g\in \U(\X')\setminus H$, by giving priority to relevant goods for $s$.
    \ENDWHILE
    \STATE If  $\U(\X') \setminus H= \emptyset$ then  $\X \leftarrow \X'$, break; 
    \STATE Else there exists a cycle $C \in \G(\X')$, and let $S_{C} = C\cap \So(\X) \neq \emptyset$, be the sources of $\G(\X)$ that participate in $C$. 
    \FORALL{sources $s\in S_C$}
    \STATE $X_s\leftarrow X'_s$. \COMMENT{We keep the update only for agents $s\in S_{C}$}
    \ENDFOR
    \STATE  $\X \leftarrow $ CycleResolution$(\X,\hvv,\G(\X),C)$ 
    \STATE Let $S = \{i \in N| \ |X_{i}| \geq 2\}$
    \FORALL {$i\in N$}
    \STATE Remove from $X_i$ all the goods irrelevant to $i$ 
    \ENDFOR 
    \FORALL {$i\in S$ so that $|X_i|=1$}
    \STATE $X_i\leftarrow X_i\cup\{h_i\}$
    \ENDFOR 
   
    \ENDWHILE \\
    \STATE Remove all dummy goods from the agents'  bundles
\RETURN $\X$ 
\end{algorithmic}
\end{algorithm}

First we note that Algorithm \ref{algo:hyper2} terminates, or, in other words, that the outer while-loop of lines 7-40 terminates. For that, similar to before, we will rely on the fact that at every run of the outer while-loop, for any $0<\alpha<1$,   the virtual value of at least one agent strictly increases without any decrease on other agents' virtual values, which, since the possible virtual values are finite, implies that the algorithm will terminate (more accurately, the lexicographic potential strictly increases). We will show that this is true separately for lines 8 to 21 and lines 22 to 39.

\begin{observation}
\label{obs:hyper2_upperHalf}
    The while-loop of lines 8-21 terminates with an allocation where none of the virtual values of the agents has decreased and there are no critical goods in $\U(\X)$ for agents with at least two goods in their bundles.
\end{observation}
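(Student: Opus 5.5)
The plan is to argue termination through the lexicographic potential $\big(\Phi(\X),\sum_{i=1}^n|X_i|\big)$, and to get the two structural claims from the fact that the last iteration of the loop changed nothing. The argument has three parts: show that every step inside lines 8--21 never decreases any agent's virtual value; show that each change strictly increases the lexicographic potential; then read off the final properties from the last iteration.

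\textbf{Procedures 1 and 2.} By Observation~\ref{obs:pre}, Procedure1 in line 10 does not decrease any virtual value, and it strictly increases $\Phi$ whenever it changes the allocation. By Observation~\ref{obs:post}, Procedure2 in line 19 does not decrease any virtual value, and it strictly increases the lexicographic potential whenever it changes the allocation.

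\textbf{The inner while-loop of lines 11--18.} Here I check each update case by case, using additivity. Throughout, $g$ is critical for $i$, i.e.\ $v_i(g)>\frac12 v_i(X_i)$.
\begin{itemize}
\item In case (i), $|X_i|=2$ and $g'$ is replaced by $g$ with $v_i(g)>v_i(g')$. The cardinality stays $2$ and $v_i$ strictly increases, so $\hv_i$ strictly increases.
\item In case (ii) with $g_t\notin X_i$, the agent simply gains $g$. Its value and cardinality both grow, so $\hv_i$ weakly increases and $\sum_i|X_i|$ strictly increases.
\item In case (ii) with $g_t\in X_i$ and $v_i(g_t)\le\frac12 v_i(X_i)<v_i(g)$, swapping $g_t$ for $g$ strictly increases $v_i$. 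The cardinality stays above $2$.
\item In case (ii) with $v_i(g_t)>\frac12 v_i(X_i)$, the new bundle is $\{g,g_t\}$. Its value is $v_i(g)+v_i(g_t)>v_i(X_i)$, and its cardinality $2$ keeps the $\frac1\alpha$ factor.
\end{itemize}
Every executed update therefore strictly increases the lexicographic potential without lowering any other agent's virtual value. Any released goods go to the pool and do not affect other agents' own bundles. Since the potential takes finitely many values, the inner loop terminates. One subtle point is that an iteration of the while condition might match neither (i) nor (ii). For example, $|X_i|\le1$, or $|X_i|=2$ with $g$ no better than both goods of $X_i$. I would rule these out. If $|X_i|\le 1$, Procedure1 guarantees $\hv_i(X_i)\ge\hv_i(\U(\X))\ge v_i(g)$, so $g$ is not preferred. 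If $|X_i|=2$ and $v_i(g)>\frac12 v_i(X_i)$, then $g$ beats the weaker of the two goods. So I would restrict the loop condition to agents with at least two goods, as the statement does. Checking this carefully is the main obstacle: one must confirm that the loop does not spin on no-op iterations, and that Procedure1's guarantee still holds for agents whose bundles were not touched in this pass.

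\textbf{Termination and final properties.} Each pass of the loop of lines 8--21 that changes the allocation strictly increases the lexicographic potential, so the loop breaks after finitely many passes. The break condition $\X=\X'$ means that Procedure1, the critical-goods loop and Procedure2 all left the allocation unchanged in the final pass. Hence the exit condition of lines 11--18 holds for the returned $\X$, so no agent with $|X_i|\ge2$ has a critical good in $\U(\X)$. Monotonicity of the virtual values follows by composing the steps above.
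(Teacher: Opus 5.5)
Your argument is correct and is essentially the paper's. Observations~\ref{obs:pre} and~\ref{obs:post} handle lines 10 and 19. The same case check of lines 12--17 shows that the updated agent's value strictly increases while $|X_i|\ge 2$ is kept, so its virtual value strictly increases. Finiteness then gives termination, and the break condition $\X=\X'$ yields the exit condition of lines 11--18.

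One correction concerns your aside on no-op iterations. Procedure1 does \emph{not} exclude a critical good for an agent with $|X_i|=1$. Since the pool contains the dummy $h$, $\hv_i(X_i)\ge\hv_i(\U(\X))$ only gives $v_i(g)\le\alpha\, v_i(X_i)=\frac23 v_i(X_i)$. That is compatible with $v_i(g)>\frac12 v_i(X_i)$: ``not preferred'' is weaker than ``not critical''. For such an agent neither (i) nor (ii) applies, so the literal loop would spin forever.

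Restricting the condition of line 11 to agents with $|X_i|\ge 2$ is therefore a necessary reading of the algorithm, not a consequence of Procedure1. It is the reading the paper uses implicitly: its proof says ``it remains that $|X_i|\ge 2$'', and Lemma~\ref{lem:propPreserve_hyper2Middle} takes $|X_i|\ge 2$ for the agent of line 11.

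Under that reading you need no Procedure1 invariant inside the loop at all. This is just as well, since goods released in lines 12, 16 and 17 can re-create virtual envy towards the pool. For $|X_i|=2$, a critical $g$ beats the weaker of the two goods, so (i) fires. For $|X_i|>2$, some branch of (ii) always fires.
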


\begin{proof}

At any repetition of the while-loop of lines 8-21, by Observations \ref{obs:pre} and \ref{obs:post}, the procedures of lines 10 and 19 terminate without any agent getting a bundle with less virtual value, and in the while-loop of lines 11-18, the value of every agent $i$ that gets its bundle changed strictly increases: in all the allocating steps of lines 12-17 a critical good $g$ with $v_i(g)>\frac{1}{2}v_i(X_i)$ is either added to $X_i$  as an extra good  or as a replacement of a good with value $\leq\frac{1}{2}v_i(X_i)$, or is allocated to agent  $i$ together with another good that also has value $\geq \frac{1}{2}v_i(X_i)$ (recall, the valuation functions are additive). Since it remains that $|X_i|\geq 2$ this  implies that also the virtual value of any such agent  $i$ strictly increases. Since the possible virtual values are finite, the while-loop of lines 11-18 will eventually  stop executing whenever called and  eventually, for the same reason, so will the while-loop of lines 8-21.
\end{proof}

\begin{observation}
\label{obs:hyper2_lowerHalf}
    Algorithm \ref{algo:hyper2} terminates with a complete allocation.
    Additionally, whenever the allocation changes after the execution of a loop of the   while-loop of lines 7-40 of Algorithm \ref{algo:hyper2} from say an $\X$ to some $\X'$ it will be $\Big(\Phi(\X),\sum_{i=1}^n|X_i|\Big)<\Big(\Phi(\X'),\sum_{i=1}^n|X'_i|\Big)$.
\end{observation}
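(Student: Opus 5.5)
We will prove the observation via the lexicographic potential $\big(\Phi(\X),\sum_i|X_i|\big)$. It takes finitely many values, because goods and allocations are finite. So it suffices to show two things: every iteration of the outer while-loop of lines 7-40 that does not break at line 27 strictly increases this potential, and the loop can only be exited through line 27, which yields a complete allocation.

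\textbf{Step 1: lines 8-21.} By Observation~\ref{obs:hyper2_upperHalf}, together with Observations~\ref{obs:pre} and~\ref{obs:post}, this inner loop terminates. No agent's virtual value decreases in it. Procedure1 strictly increases $\Phi$ whenever it changes the allocation. The allocating steps of lines 12-17 strictly increase the virtual value of the updated agent. Procedure2 increases the lexicographic potential whenever it changes the allocation. Hence the lexicographic potential is non-decreasing through lines 8-21, and strictly increases if anything changed.

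\textbf{Step 2: lines 22-39.} Suppose the testing step ends with $\U(\X')\setminus H\neq\emptyset$. Then no source of $\So(\X)$ is virtually unenvied under $\X'$, so every agent of $\G(\X')$ has positive in-degree. Indeed, the non-sources of $\G(\X)$ keep their bundles in $\X'$ and remain envied, since bundles only grew. Hence $\G(\X')$ contains a cycle $C$. Because $\G(\X)$ is acyclic (Observation~\ref{obs:post}), $C$ must pass through a source whose bundle grew, so $S_C\neq\emptyset$.

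After setting $X_s\gets X'_s$ only for $s\in S_C$, every edge of $C$ is still present. Agents outside $S_C$ have their $\X$ bundles, which equal their $\X'$ bundles if they are not sources. Sources outside $S_C$ that were enlarged in $\X'$ do not lie on $C$. Edges into $s\in S_C$ use $X'_s$, which is unchanged. Edges out of $s\in S_C$ compare against bundles that were kept identical.

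Each $s\in S_C$ weakly gains from $X_s\to X'_s$, because the bundle only grows; the cardinality step from 1 to 2 only multiplies the value by $1/\alpha$. Resolving $C$ then strictly increases the virtual value of every agent on $C$ and leaves all other agents unchanged, so $\Phi$ strictly increases. The cleanup of lines 33-39 removes only goods that are irrelevant to their holder, and adds $h_i$ exactly when needed to keep $|X_i|\geq 2$. It therefore preserves every $v_i(X_i)$ and every cardinality class (at most one versus at least two), so $\Phi$ is unchanged by it. Thus each full non-breaking iteration strictly increases $\Phi$, and in particular the potential.

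\textbf{Step 3: completeness.} By finiteness, the outer loop eventually breaks at line 27, where $\U(\X')\setminus H=\emptyset$. After line 41 removes the dummy goods, every real good is allocated, so the allocation is complete.

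\textbf{Main obstacle.} The delicate point is Step 2. We must argue that keeping the test updates only for the sources in $S_C$ keeps $C$ a cycle of $\G(\X)$ and never lowers any virtual value. In particular we must handle the fact that goods given to $s$ during the test may be irrelevant to $s$: they can still raise the cardinality and hence the virtual value, but they are stripped in line 36, with $h_s$ compensating in line 38. We also need that the final iteration, in which $\X$ does not change, is exactly the breaking one.
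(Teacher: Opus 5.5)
\textbf{Gap: your argument that the cycle $C$ exists is wrong.} Otherwise your route is the paper's. Both proofs show the lexicographic potential does not decrease through lines 8--21, using Observations~\ref{obs:pre}, \ref{obs:post} and \ref{obs:hyper2_upperHalf}. Both then get a strict increase of $\Phi$ when $C$ is resolved, invariance of $\Phi$ under the cleanup of lines 33--39, and completeness when the loop exits at line 27. Your check that $C$ survives when only the sources in $S_C$ keep their test bundles is correct, and the paper leaves it implicit. The failing step is the one you add to show that $C$ exists at all.

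You argue that the non-sources of $\G(\X)$ ``keep their bundles in $\X'$ and remain envied, since bundles only grew.'' The monotonicity points the wrong way. The only bundles that grow in the test are those of sources. A growing bundle can only \emph{delete} its holder's envy edges towards unchanged bundles. So a non-source whose enviers in $\G(\X)$ are all sources may have in-degree $0$ in $\G(\X')$. Then it no longer follows that every vertex has positive in-degree, and a backward walk from an envied source need not close a cycle.

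Nothing you or the paper establish about the allocation at line 22 excludes this. Consider:
\begin{itemize}
\item Agents $s,j,k$, with private goods $b,a,p$ respectively.
\item Twin pairs $\{e,e_t\}$ on $\{s,j\}$, $\{f,f_t\}$ on $\{s,k\}$, and $\{m,m_t\}$ on $\{j,k\}$; this instance has girth $3$ and multiplicity $2$.
\item Bundles $X_s=\{b,f_t\}$, $X_j=\{e,a\}$, $X_k=\{m,p\}$, so $f,e_t,m_t$ are in the pool.
\item Values $v_s(b)=v_s(f_t)=1$, $v_s(e)=2.5$, $v_s(f)=0.6$; $v_j(e)=v_j(a)=1$, $v_j(m)=2.5$; $v_k(m)=v_k(p)=1$, $v_k(f_t)=1.5$, $v_k(f)=0.8$; all other values $0$.
\end{itemize}
With $\alpha=\frac23$, this is an orientation satisfying Property $\prpt$. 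No agent virtually envies the pool, and no agent has a critical pool good. $\G(\X)$ is the path $s\to j\to k$, with unique source $s$ and $|X_s|=2$. So this is a fixed point of lines 8--21.

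The test gives $f$ to $s$. Then $\hv_s(X'_s)=3.9\ge 3.75=\hv_s(X_j)$, so the edge $s\to j$ disappears. Also $\hv_k(X'_s)=3.45>3=\hv_k(X_k)$, so $s$ is envied, and the loop of lines 24--26 stops with $e_t,m_t$ unallocated. Yet $\G(\X')$ has only the edges $j\to k$ and $k\to s$, so it is acyclic.

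Whether the algorithm can actually reach this state is a separate question. The paper's proof only asserts that ``a cycle will be resolved'' here, so there is no argument to borrow. Your Step~2, and with it the strict increase of the potential and termination, needs either an extra invariant of reachable allocations that rules out such states, or a change to the test loop.
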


\begin{proof}
    By Observation~\ref{obs:hyper2_upperHalf}, the while-loop of lines 8-21 terminates with an allocation where none of the virtual values of the agents has decreased. Continuing with lines 22-39, first observe that the while-loop of lines 24-26 terminates, since the pool has finitely many goods, and that  the allocating steps of lines 25 and 30 only increase the value of the source and its bundle's size, thus only increasing the source's virtual value. If the while-loop of lines 24-26 ends with an empty $\U(\X)$ then the outer while-loop will break and the algorithm will terminate (line 27). In different case all the sources of $\G(\X)$ have become envied under $\X'$ and thus in line 32  a cycle will be resolved. There, every agent participating in the cycle that gets resolved will  get a bundle with strictly greater virtual value, in her perspective. Last, in lines 33-39 the agents' virtual values do not decrease since only irrelevant goods are removed from the bundles and all bundles with size $\geq2$ keep this property. Putting it all together, as long as the outer while-loop does not break, the virtual values of the agents do not decrease and at least one agent gets its virtual value increased.  Since the possible virtual values are finite, the algorithm will terminate with $\U(\X)=\emptyset$.
\end{proof}

It remains to show that the allocation returned by Algorithm \ref{algo:hyper2} is a $\frac{2}{3}$-EFX allocation. For that, we first show that as long as the outer while-loop of lines 7-40 does not break,  Property $\prpt$ holds. Once the outer while-loop breaks (line 27), the allocation may not have Property $\prpt$ due to the while-loop of lines 24-26. Yet, up to that point Property $\prpt$ did hold and we will be able to show that at the end, by setting $\alpha=\frac{2}{3}$, the output is a $\frac{2}{3}$-EFX allocation.

We begin by showing, in two steps, that as long as the outer while-loop of lines 7-40 does not break,  Property $\prpt$ holds, and later we will fix $\alpha=\frac{2}{3}$.

\begin{lemma}
\label{lem:propPreserve_hyper2Middle}
Let $\X$ be an orientation that satisfies Property $\prpt$. If $\X$ is the allocation when the while-loop in lines 11-18 starts executing, then upon its termination the resulting allocation will be an orientation that satisfies Property $\prpt$.
\end{lemma}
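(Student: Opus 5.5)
The plan is to argue by induction over the single updates performed inside the while-loop of lines 11--18. Each update changes only the bundle of one agent $i$, replacing $X_i$ by a new set $S$. So it suffices to show three things about one update, assuming $\X$ is an orientation satisfying Property $\prpt$ just before it.

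\textbf{(1) Orientation.} $S$ consists of relevant goods only. The added good $g$ satisfies $v_i(g)>\frac12 v_i(X_i)\ge 0$, so it is relevant to $i$. All other goods of $S$ were already in $X_i$.

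\textbf{(2) How $i$ sees the other bundles.} First I check that only the branches (i) and (ii) can modify a bundle, so $|X_i|\ge 2$ and $|S|\ge 2$ in every update. Next, additivity gives $v_i(S)>v_i(X_i)$, exactly as in the proof of Observation~\ref{obs:hyper2_upperHalf}: the critical good either is added, or replaces a good of value at most $\frac12 v_i(X_i)$, or is paired with another good of value $>\frac12 v_i(X_i)$. Since both bundles have size at least two, this yields $\hv_i(S)>\hv_i(X_i)$. Then Lemma~\ref{lem:Prop_Preserve}(b) shows that $i$ remains $\alpha$-EFX towards every $X_j$. This is what $\prpt$(b) requires, because $|S|\ge2$.

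\textbf{(3) How each other agent $j$ sees $S$.} This is the main step. By Observation~\ref{obs:repetition-common-edge}, the only goods relevant to both $i$ and $j$ are one good $e$ and its twin $e_t$. Since $\X$ is an orientation, at most these two goods of $S$ carry value for $j$.
\begin{itemize}
\item If $g\notin\{e,e_t\}$, then the $j$-relevant part of $S$ is contained in that of $X_i$. In case (i), removing $g'$ only lowers $j$'s value of each $S\setminus\{x\}$ compared with a suitable $X_i\setminus\{y\}$. So $j$'s EFX or $\alpha$-EFX condition towards $i$ is inherited from $\prpt$.
\item If $g=e$, then for every $x\in S$ the set $S\setminus\{x\}$ contains at most one $j$-relevant good other than possibly $g$ itself. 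Removing $x=g$ leaves at most $e_t$. Removing any other $x$ leaves at most $g$, together with $e_t$ only if $e_t\in S$.
\end{itemize}
I bound these quantities via $v_j(g)$ and via the old bundle. After Procedure1 we have $\hv_j(X_j)\ge\hv_j(\U(\X))\ge v_j(g)$ (Observation~\ref{obs:pre}). This inequality persists through the loop, because the pool only shrinks and $j$'s virtual value never decreases. If $e_t$ was already in $X_i$, then $|X_i|\ge 2$ and $\prpt$ gives $v_j(e_t)\le v_j(X_i\setminus\{y\})$ for some $y\neq e_t$, which is controlled by $j$'s old guarantee. If $|X_j|\le1$, the inequality $\hv_j(X_j)\ge v_j(g)$ reads $v_j(X_j)\ge v_j(g)$, giving EFX. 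If $|X_j|\ge2$, it gives $v_j(X_j)\ge\alpha v_j(g)$, giving $\alpha$-EFX.

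The delicate subcase is when $S$ contains both $g$ and $e_t$ and $x$ is a third good. I expect this to be the main obstacle. In case (ii) the rules are designed to prevent it: if $g_t\in X_i$, the update either drops $g_t$ or sets $X_i=\{g,g_t\}$. With two goods, removing either one leaves a singleton whose value is bounded as above. In case (i) with $|X_i|=2$, the same singleton argument applies. The remaining work is to go carefully through these subcases.
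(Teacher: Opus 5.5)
Your proof is essentially the paper's argument, regrouped by whether $g$ is the edge that $i$ and $j$ share rather than by which line of the loop fires. The ingredients are the same: the strict increase of $v_i$ plus Lemma~\ref{lem:Prop_Preserve} for how $i$ sees the others, Observation~\ref{obs:repetition-common-edge} to limit what $j$ values in the new bundle, the bound on $v_j(g)$ from Observation~\ref{obs:pre}, and the old $\prpt$ guarantee towards $X_i$. Your remark that no update yields a bundle of at least three goods containing both $g$ and $g_t$ neatly explains the design of lines 14--17. Two justifications need repair, however.

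First, the pool does \emph{not} only shrink during the loop. Lines 12, 16 and 17 return $g'$, $g_t$, or $X_i\setminus\{g_t\}$ to the pool, so the critical good of a later update need not have been in $\U(\X)$ when Procedure1 returned. Your invariant, $\hv_j(X_j)\ge v_j(g)$ for all $g\in\U(\X)$, still holds, but by induction through $\prpt$. Suppose $g$ was released from some $X_k$ with $|X_k|\ge 2$. For $j\neq k$ and any $y\in X_k\setminus\{g\}$, $\prpt$ at that moment gives $\hv_j(X_j)\ge v_j(X_k\setminus\{y\})\ge v_j(g)$. For $j=k$, each released good is worth less to $k$ than $k$'s new bundle. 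In both cases $\hv_j(X_j)$ never decreases afterwards. The paper's proof invokes Observation~\ref{obs:pre} at this point without addressing released goods. It also asserts the stronger $v_j(X_j)\ge v_j(g)$, which can fail for a released good when $|X_j|\ge 2$. Your weaker form is the one that is actually maintained, and it is all that is needed.

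Second, in the case $g\notin\{e,e_t\}$, containment of the $j$-relevant part of $S$ in that of $X_i$ does not by itself transfer $j$'s guarantee. At line 14 you have $S\setminus\{g\}=X_i$, whereas $\prpt$ only bounds the sets $X_i\setminus\{y\}$. To see the danger, take $X_i=\{e,e_t\}$ plus a good irrelevant to $j$. You need some $y\in X_i$ with $v_j(y)=0$. It exists because line 14 is reached only when $|X_i|\ge 3$, while at most two goods of $X_i$ are relevant to $j$. The paper makes this step explicit, and it is exactly why branch (i) replaces a good instead of adding one when $|X_i|=2$.
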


\begin{proof}
Clearly the allocation remains an orientation since for any agent $i$ considered in line 11, only a critical for $i$, and thus relevant to $i$, good $g$  enters her bundle. To show that Properties $\prpt$(a) and $\prpt$(b) do not break we first show that $i$ keeps being $\alpha$-EFX towards all other agents (observe $|X_i|\geq2$).
Let $i$ with $|X_{i}| \geq 2$ be the agent considered in line 11. In all cases of lines 12, 14, 16 and 17 the critical good $g$ will be allocated to her   either as an extra good (line 14), or as a replacement of a good of smaller value (lines 12 or 16) or as a bundle with another good (i.e., the twin good of g) which also has  value $>\frac{1}{2}v_i(X_i)$ (line 17). Thus, since the valuation functions are additive, in all cases, the value of agent $i$ increases when allocated the new bundle,
implying that Properties $\prpt$(a) and $\prpt$(b) cannot break because of the way $i$ values other agents' bundles. 

It remains  to examine how other agents value the new bundle of $i$.
Consider an agent $j$. When Procedure1 (line 10) terminates, by Observation \ref{obs:pre} it will be $\hv_j(X_j)\geq \hv_j(\U(\X))$. This implies that $v_j(X_j)\geq v_j(\{g,h\})=v_j(g)$. Keep in mind this  last inequality since we use it in all cases below, and also that the valuation functions are additive.

To begin with, let agent $i$ be allocated her new bundle by line 12, where $|X_i|=2$. The good that gets replaced is $ g'$ and let $g''$ be the good that will remain in  $X_i$. In this case the new bundle for $i$ will be $\{g,g''\}$. If $|X_j|\leq 1$, since (inductively) $j$ is EFX towards $i$, it is $v_j(X_j)\geq v_j(X_i\setminus\{g'\})=v_j(g'')=v_j(\{g,g''\}\setminus \{g\})$ which together with $v_j(X_j)\geq v_j(g)=v_j(\{g,g''\}\setminus \{g''\})$ implies that $j$ will be EFX towards $\{g,g''\}$ and thus towards $i$.
If $|X_j|\geq 2$, similar to above, since (inductively) $j$ is $\alpha $-EFX towards $i$, it is $v_j(X_j)\geq \alpha v_j(X_i\setminus\{g'\})=\alpha v_j(g'')$ which together with $v_j(X_j)\geq v_j(g)$ implies that $j$ will remain $\alpha $-EFX towards $i$. 

Next, let  agent $i$ be allocated her new bundle by line 14, where $g$ is added as an extra good to $X_i$. In this case $X_i$ does not contain $g_t$, i.e., the twin of $g$. This, together with  $\X$  (inductivley) being an orientation and that $i$ and $j$ can share at most one good and its twin (Observation \ref{obs:repetition-common-edge}), yields $v_j(X_i)=0$ or $v_j(g)=0$. If $v_j(X_i)=0$ then $v_j(X_i\cup\{g\})=v_j(g)$. Since $v_j(X_j)\geq v_j(g)$, the latter implies that $j$ will not envy $i$ and thus will be EFX (and $\alpha $-EFX) towards $i$. If $v_j(X_i)\neq 0$ then $v_j(g)=0$. Since $|X_i|>2$, $\X$ is an orientation  and $i$ and $j$ share at most one good and its twin, there exist a $g'\in X_i$   irrelevant to $j$ for which $v_j(X_i)=v_j(X_i\setminus\{g'\})$. 
If $|X_j|\leq1$ then $j$ is EFX towards $i$ implying that $v_j(X_j)\geq v_j(X_i\setminus \{g'\})=v_j(X_i)=v_j(X_i\cup\{g\})$,  and thus $j$ will not envy $i$, being EFX towards $i$.
If $|X_j|\geq2$ then $j$ is $\alpha$-EFX towards $i$ implying that $v_j(X_j)\geq \alpha v_j(X_i\setminus \{g'\})=\alpha v_j(X_i)=\alpha v_j(X_i\cup\{g\})$,  and thus $j$ will be $\alpha$-EFX towards $i$.

For the third case let  agent $i$ be allocated her new bundle by line 16, where the twin good $g_t$ of $g$ replaces $g$ in $X_i$.
 If $g$ is relevant to $j$ then also is $g_t$ as twin good of $g$, and, by Observation \ref{obs:repetition-common-edge} and the fact that an orientation is maintained, $g_t$ is the only good in $X_i$ relevant to $j$ and thus, $v_j(X_i\setminus \{g_t\})=0$. On the other hand we have $v_j(X_j)\geq v_j(g)= v_j(X_i\setminus \{g_t\}\cup \{g\})$, implying that $j$ will not envy $i$ and thus will be EFX (and $\alpha $-EFX) towards $i$. 
 If $g$ is irrelevant to $j$ then so is its twin good $g_t$,  and $v_j(X_i\setminus\{g_t\})=v_j(X_i)$. If $|X_i|\leq1$ then (inductively) $j$ is EFX towards $i$ and thus $v_j(X_j)\geq v_j(X_i\setminus\{g_t\})= v_j(X_i)=v_j(X_i\cup\{g\}\setminus\{g_t\})$,  implying that $j$ will remain EFX towards $i$. Similar, if $|X_i|\geq 2$ then (inductively) $j$ is $\alpha $-EFX towards $i$ and thus $v_j(X_j)\geq \alpha v_j(X_i\setminus\{g_t\})= \alpha v_j(X_i)=\alpha v_j(X_i\cup\{g\}\setminus\{g_t\})$,  implying that $j$ will remain $\alpha $-EFX towards $i$.

For the last case, let agent $i$ be allocated her new bundle by line 17. In this case the new bundle for $i$ will be $\{g,g_t\}$. If $g$ and $g_t$ are irrelevant to $j$ then clearly $j$ will not envy $i$ and thus  will be EFX (and $\alpha $-EFX) towards $i$.  If $g$ and $g_t$ are relevant to $j$ then $X_i$ has no other good relevant to $j$ (Observation \ref{obs:repetition-common-edge}), which implies that $v_j(g_t)=v_j(X_i)=v_j(X_i\setminus\{g'\})$, for some (any) $g'\in X_i, g'\neq g_t$ (recall $|X_i|>2$).
If $|X_j|\leq 1$, since (inductively) $j$ is EFX towards $i$, it is $v_j(X_j)\geq v_j(X_i\setminus\{g'\})=v_j(g_t)$ while on the other hand $v_j(X_j)\geq v_j(g)$, and thus, $j$ will be EFX towards $\{g,g_t\}$ and remain EFX towards $i$. Similar, If $|X_j|\geq 2$, since (inductively) $j$ is $\alpha $-EFX towards $i$, it is $v_j(X_j)\geq \alpha v_j(X_i\setminus\{g'\})=\alpha v_j(g_t)$ while on the other hand $v_j(X_j)\geq v_j(g)$, and thus, $j$ will be $\alpha $-EFX towards $\{g,g_t\}$ and remain $\alpha$-EFX towards $i$.

In conclusion, no matter which line will change $i$'s bundle, any agent $j$ with $|X_j|\leq 1$ will remain EFX towards $i$ and any agent $j$ with $|X_j|\geq 2$ wil remain $\alpha$-EFX towards $i$ and thus Properties $\prpt$(a) and $\prpt$(b) will not break because of the way that other agents value the new bundle of $i$. 
 \end{proof}

\begin{lemma}
\label{lem:prop_preserve_fixSources}
Let $\X$ be an orientation that satisfies Property $\prpt$. If $\X$ is the allocation when lines 22-39  starts executing, then if  the outer while-loop does not break in line 27, when line 39 is reached, the resulting allocation will be an orientation that satisfies Property $\prpt$.
\end{lemma}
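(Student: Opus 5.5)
The plan is to track the allocation through the three blocks executed between line 22 and line 39 when the break at line 27 is not taken, and to show that each block keeps $\X$ an orientation satisfying Property $\prpt$. First, note that the testing step of lines 23--26 changes only the auxiliary allocation $\X'$, not $\X$. The real changes to $\X$ are four: the update of the sources $s\in S_C$ in lines 29--31, the cycle resolution of line 32, the removal of irrelevant goods in lines 34--36, and the addition of dummy goods in lines 37--39. Because $\X$ itself is not touched before line 29, it still satisfies $\prpt$ and is an orientation at that point.

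For lines 29--31, the key observation is that each $s\in S_C$ only gains goods, so $\hv_s(X'_s)\geq \hv_s(X_s)$. By Lemma~\ref{lem:Prop_Preserve}, $s$ therefore stays EFX or $\alpha$-EFX toward every other bundle, as appropriate. The delicate part is how other agents $j$ see $X'_s$. By the loop condition of line 24, every good was added to $s$ at a moment when $s$ was not virtually envied under the current $\X'$. So if $g$ was the last good added, then $X'_s\setminus\{g\}$ is not virtually envied by $j$. I would first prove that the bundles of the non-$S_C$ agents in $\X$ coincide with those in $\X'$ except for sources outside $C$, which are reverted. Agent $j$'s own bundle in $\X$ is therefore either equal to its $\X'$ bundle or, if $j$ is a reverted source, weakly worse. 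From ``$j$ does not virtually envy $X'_s\setminus\{g\}$'' I then get $v_j(X_j)\geq v_j(X'_s\setminus\{g\})$ or $\alpha v_j(X'_s\setminus\{g\})$, depending on cardinalities, exactly as in the last paragraph of the proof of Lemma~\ref{lem:propPreserve_pre}.

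This yields the EFX/$\alpha$-EFX guarantee only after removing the last added good, whereas $\prpt$ requires it after removing any good. This is the step I expect to be the main obstacle. I would handle it with additivity, monotonicity and the prefix structure: for an arbitrary $g'\in X'_s$, compare $X'_s\setminus\{g'\}$ with the earlier prefix bundles. An alternative route is to exploit that $s$ was a source of $\G(\X)$, so that $j$ did not virtually envy the original $X_s$. I would also exploit the ``priority to relevant goods'' rule together with the fact that, by Observation~\ref{obs:repetition-common-edge}, $s$ and $j$ share only one good and its twin. Irrelevant goods added to $s$ do not change $j$'s view and are stripped in line 35 anyway. I would also need to worry about the reverted sources: their bundles shrink back to the $\X$ version, which only helps others but might hurt their own comparisons. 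Since reverted bundles are exactly the $\X$ bundles and the $S_C$ bundles are bigger, I would need the cycle structure to argue that their view still satisfies $\prpt$. If this fails, I would argue instead that only agents in $C$ actually end up with the enlarged bundles after line 32.

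For line 32, on the modified allocation $C$ is a cycle in $\G$, so Lemma~\ref{lem:Prop_Preserve_cycle} (applied to a single CycleResolution) preserves $\prpt$. Lines 33--39 are handled verbatim as in Lemma~\ref{lem:propPreserve_post}: removing irrelevant goods does not change owners' values and only lowers others' values, and adding $h_i$ changes nobody's real values while restoring orientation and the cardinality classes. This gives an orientation satisfying $\prpt$ at line 39.
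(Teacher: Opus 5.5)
There is a genuine gap. Your plan needs $\prpt$ to hold for the intermediate allocation after lines 29--31, so that Lemma~\ref{lem:Prop_Preserve_cycle} can handle line 32 and lines 33--39 become harmless clean-up. That intermediate claim is false in general. So the ``last good versus any good'' obstacle you flagged cannot be overcome by any of the tools you list: additivity, the prefix structure, $s$ being a source, or priority to relevant goods.

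Here is a counterexample. Take $\alpha=\frac{2}{3}$ and an agent $j$ that is neither a source nor in $C$, with $|X_j|=1$ and $v_j(X_j)=1$. Take a source $s\in S_C$ with $v_j(X_s)=0.6$. Suppose the pool holds twins $b,b'$ that are irrelevant to $s$ and relevant to $j$, with $v_j(b)=0.06$ and $v_j(b')=0.6$. Suppose also that $s$ has no relevant good left in the pool, so it receives $b$ and then $b'$.
\begin{itemize}
\item Every invariant available at line 22 holds for $j$: $\hv_j(X_s)=0.9\le 1$ and $\hv_j(\U(\X))=0.99\le 1$. Observation~\ref{obs:hyper2_upperHalf} imposes nothing, since $|X_j|=1$.
\item The condition of line 24 is respected, since $\hv_j(X_s\cup\{b\})=0.99\le 1$.
\item Yet $v_j(X_s\cup\{b'\})=1.2>v_j(X_j)$, so $\prpt$(a) fails for $j$ after line 31.
\end{itemize}
Line 32 only hands this bundle to the predecessor $i$ of $s$ in $C$, so the violation survives the cycle resolution as well. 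It disappears only at line 35. There, the goods of $X_s$ that $j$ values are discarded: $b'$ is relevant to both $i$ and $j$, and $i,j$ share at most one edge and its twin, so those goods cannot be relevant to $i$. Your fallback remark that only agents of $C$ end up holding enlarged bundles does not help, because the enlarged bundle still exists after line 32.

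The missing idea, which is the paper's route, is to give up on $\prpt$ at intermediate points and argue directly about the allocation at line 39, where the stripping does the real work.
\begin{itemize}
\item Lemma~\ref{lem:Prop_Preserve} covers every comparison towards a bundle that is unchanged. It also covers bundles that came from non-source members of $C$, which at line 39 are subsets of $\X$-bundles plus a dummy.
\item For the bundle $X'_i$ that $i$ inherits from a source $s\in S_C$, first note that every source has $|X_s|\ge 2$. Otherwise a source with an empty bundle could absorb the whole pool unenvied (Observation~\ref{obs:pre}), and the loop would have broken at line 27.
\item This keeps the comparisons before and after the last added good $g_\ell$ in the $\frac{1}{\alpha}$ regime. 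Hence $g_\ell$ is relevant to $i$, and it must be relevant to any $j$ that could envy $X'_i$.
\item Since $X'_i$ keeps only goods relevant to $i$, Observation~\ref{obs:repetition-common-edge} leaves $g_\ell$ and its twin as the only goods of $X'_i$ that $j$ values.
\item If both of these came from the pool, Observation~\ref{obs:pre} bounds them. If the twin lay in $X_s$, use Observation~\ref{obs:pre} for $g_\ell$ and the fact that $s$ is a source of $\G(\X)$ for the twin. Either way, $j$ is EFX towards $X'_i$.
\end{itemize}
The irrelevant-goods removal, which you treated as cosmetic (``verbatim as in Lemma~\ref{lem:propPreserve_post}''), is exactly where the proof has to happen.
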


\begin{proof}

Let us assume that $\X$ is the allocation when lines 22-39 started executing and the outer while-loop did not break in line 27. Then for-loop of lines 34-36 (and that of lines 37-39) makes sure that the allocation when reaching line 39 is an orientation. We go on to prove that Properties $\prpt$(a) and $\prpt$(b) will also hold when reaching line 39.  

Because of Lemma \ref{lem:Prop_Preserve}, we need only to show that these properties do not break because of the envy towards bundles that come from sources.  
To see this implication of Lemma \ref{lem:Prop_Preserve}, let $\X'$ be the allocation when line 39 is reached and for any agent $j$, let $X^*_j$ be the bundle that resulted from the manipulation of $X_j$ up to line 39. For $\X$, any agent $i$, $X_i'$ and any agent $j$ the lemma implies that these properties do not break for the bundles  left unchanged in $\X'$, i.e., $X_j=X^*_j$. Additionally, if $j$ lies in the resolved cycle of line 28 and it is not a source, then its bundle $X_j$, after the cycle's resolution (line 32), will turn to $X_j^*$ by  only getting goods removed  from it (line 35) and possibly one dummy good added to it (line 38). 
Therefore, the properties do not break because of the way $i$ values such $X^*_j$'s as well. The only case left is to examine the envy that $i$ may have towards  $X^*_j$'s that come from sources, i.e., $j$ is a source, and that is what we do next, focusing on sources of $\X$.

If when entering line 22 there was a source $s$ with $|X_{s}| = 0$, then, since no agent virtually envies the pool (Observation \ref{obs:pre}), 
in lines $24-26$ the source could be allocated the entire pool without getting  envied by any agent, and so the outer while-loop would break in line 27. Thus, all sources had at least two goods in $\X$ (Observation \ref{obs:post}).

When the algorithm reaches line 31, only the sources of a single cycle of $\G(\X)$ have their bundle enhanced with extra goods.
Let $s$ be one of these sources. When the cycle gets resolved some agent, say $i$, will get $X_s$, which in lines 33-39 will be manipulated so that it contains only relevant goods for $i$, without losing the property $|X_s|\geq2$. Let this manipulated bundle given to $i$ be $X_i'$. We will show that 
for any $j \neq i$, agent $j$ is EFX towards $X_i'$.

Let $g_\ell$ be the last good added to $X_i'$. By the condition of line 24, agent $j$ may virtually envy $X_i'$ only after the addition of  $g_\ell$  and so $g_\ell$ must be relevant to $j$ for $j$ to envy $X_i'$. On the other hand, since $i$ received $X_s$ in the cycle resolution it should be that she envied $s$, implying, by the same reasoning, that $g_\ell$ is relevant to $i$ as well. Thus for $X_i'$ to be envied by $j$ it should be that $g_\ell$ is one of the at most two goods that $i$ and $j$ may share (Observation \ref{obs:repetition-common-edge}) and so, after removing from $X_s$ the  goods that are irrelevant to $i$, $X_i'$ may contain only two goods relevant to $j$, namely $g_\ell$ and its twin $g_\ell'$ , if existent. We will distinguish between two cases: $g_\ell$ being irrelevant to $s$ and $g_\ell$ being relevant to $s$. 

If $g_\ell$ is irrelevant to $s$ then 
both $g_\ell$ and its twin $g_\ell'$ (if existent) 
belonged to  $\U(\X)$ when line 22 was reached ($\X$ was an orientation and priority is given to relevant goods in line 25) and thus, by Observation \ref{obs:pre},  $\hv_j(X_j)\geq \hv_j(\{g_\ell,g_\ell'\})=\hv_j(X_i')$, implying that $X_i'$ is not envied by $j$. 
If $g_\ell$ is relevant to $s$ then, since $X_s$ contains only goods relevant to $s$ ($\X$ was an orientation and priority is given to relevant goods in line 25), $X_i'$ contains at most two goods (Observation \ref{obs:repetition-common-edge}), namely $g_\ell$ and its twin $g_\ell'$ (if existent). If
both $g_\ell$ and its twin $g_\ell'$ (if existent)  
belonged to  $\U(\X)$, similar to above, when line 22 was reached then $X_i'$ is not envied by $j$. If only  $g_\ell$ belonged to  $\U(\X)$  then it was not envied (by Observation \ref{obs:pre}) but also, if $g_\ell'\in X_s$, since $X_s$ was not envied, $g_\ell'$ is not envied by $j$. Thus, either if $X_i'=\{g_\ell\}$ or  $X_i'=\{g_\ell,g_\ell'\}$, agent $j$ will be EFX towards $X_i'$.
\end{proof}

Below we state the main theorem of this section. 

\ThmHyperTwo

\begin{proof}
We show that Algorithm \ref{algo:hyper2} returns such an allocation. We defer the discussion on the running time of the algorithm to Section \ref{sec:RunTimes} (Lemma \ref{lem:runTime2}). The algorithm terminates soon  after the while-loop of lines 7-40 terminates in line 27, at which point the pool contains only dummy goods. 
Using already proved lemmas we will first argue that whenever the algorithm reaches line 22 the allocation will be an orientation that satisfies Property $\prpt$. Note that Property $\prpt$ implies that the allocation up to that point will be  $\alpha$-EFX, for any $0<\alpha<1$. Then, using this property for $\alpha=\frac{2}{3}$ and that no agent has virtual envy towards the pool and no agent with at least two goods ``sees'' a critical good in the pool,  we will show that the resulting complete allocation returned by the algorithm is $\frac{2}{3}$-EFX. Since at line 22 the allocation is $\alpha$-EFX, for the latter we only need to examine how the agents value their bundles  compared to the bundles that got changed  in line 25  and how the sources that got their bundles changed value their new bundles compared to other bundles.

Initially, when all bundles are empty, Property $\prpt$ holds trivially and the allocation is an orientation. When the algorithm reaches line 22 then by Lemmas \ref{lem:propPreserve_pre}, \ref{lem:propPreserve_post}, \ref{lem:propPreserve_hyper2Middle} and \ref{lem:prop_preserve_fixSources}, inductively, allocation $\X$ is an orientation that satisfies  Property $\prpt$. Moreover, since the while-loop of lines 8-21 terminated with no change in the allocation (breaking condition), we simultaneously have that 
by Observation \ref{obs:pre}, for any agent $i$ it holds,  $\hv_i(X_i)\geq \hv_i(\U(\X))$, 
by Observation \ref{obs:hyper2_upperHalf}  for any agent $i$ with $|X_i|\geq 2$, there are no critical goods in $\U(\X)$, and 
by Observation \ref{obs:post}, for any source $s$ of $\G(\X)$, we have that  $|X_s|=0$ or $|X_s|\geq2$.

Consider any source $s$ that  will get extra 
goods by line 25, and any other agent j $\neq$ $s$ that is not a source. Let $Y_s$ be the set of goods that are assigned to $s$ by the while-loop of lines 24-26.
It holds that $v_{s}(X_{s}) \leq v_{s}(X_{s} \cup Y_s)$  
and thus $s$ will be EFX (and $\alpha$-EFX) if $|X_s|\leq1$, or $\alpha-$EFX,if $|X_s|\geq2$, towards $j$. 
It remains now to examine how an agent $j$ will value the new bundles of the sources, compared to their current bundle $X_j$. 
Let now $j$ be any other agent that is not a source. We will show that  $v_j(X_j)\geq\frac{2}{3}v_j(X_s\cup Y_s)$, which suffices to prove the theorem since even if agent $j$ is  a source herself then for her new bundle it holds,  $v_j(X_j\cup Y_j)\geq v_j(X_j)$. 

If $|X_s|=0$ then $j$ will not envy $s$ no matter how many goods from the pool $Y_s$ contains, since, by Observation \ref{obs:pre}, $\hv_j(X_j)\geq \hv_j(\U(\X))\geq \hv_j(Y_s\cup\{h\})\Rightarrow v_j(X_j)\geq v_j(Y_s\cup\{h\})=v_j(Y_s)$. 
Also, if $Y_s= \emptyset$ then $s$ does not get her bundle changed and, by Property $\prpt$, $j$ will be EFX (and $\alpha$-EFX), if $|X_j|\leq1$, or 
$\alpha-$EFX , if $|X_j|\geq2$, 
towards $s$. Thus, we also focus on sources with $|Y_s|\geq 1$ and let $g_\ell$ be the last good added to the bundle of $s$.

If $|X_{j}| \geq 2$, since $s$ is not virtually envied until it gets $g_\ell$, then 
$\hv_{j}(X_{j}) \geq \hv_{j}(X_{s}\cup Y_s\setminus\{g_\ell\})$ which by $|X_j|\geq2$ and $|X_{s}\cup Y_s\setminus\{g_\ell\}|\geq 2$ implies $v_{j}(X_{j}) \geq v_{j}(X_{s}\cup Y_s\setminus\{g_\ell\})$. 
On the other hand, there is no critical good in the pool for $j$ and thus  $v_{j}(g_\ell) \leq \frac{1}{2}v_{j}(X_{j})$ (Observation \ref{obs:hyper2_upperHalf}). 
Putting these together and using the additivity of the valuation functions we get 
$$v_{j}(X_{s} \cup Y_s) = v_{j}(X_{s}\cup Y_s\setminus \{g_\ell\}) + v_{j}(g_\ell) \leq v_{j}(X_{j}) + \frac{1}{2}v_{j}(X_{j}) = \frac{3}{2}v_{j}(X_{j})$$
\noindent and so j will be $\frac{2}{3}-$EFX towards $s$.

If $|X_{j}| \leq 1$, since $s$ is not virtually envied until it gets $g_\ell$, then 
again $\hv_{j}(X_{j}) \geq \hv_{j}(X_{s}\cup Y_s\setminus\{g_\ell\})$ which now by $|X_j|\leq 1$ and $|X_{s}\cup Y_s\setminus\{g_\ell\}|\geq 2$ implies $v_{j}(X_{j}) \geq \frac{1}{\alpha}v_{j}(X_{s}\cup Y_s\setminus\{g_\ell\})$.
On the other hand,  by  Observation \ref{obs:pre}, $\hv_j(X_j)\geq \hv_j(\U(\X))\geq \hv_j(\{g_\ell ,h\}\})\Rightarrow v_j(X_j)\geq \frac{1}{\alpha}v_j(\{g_\ell\} \cup \{h\})=\frac{1}{\alpha}v_j(g_\ell)$. 
Adding the inequalities and using the additivity of the valuation functions we get 
$$2v_{j}(X_{j}) \geq \frac{1}{\alpha}v_{j}(X_{s}\cup Y_s\setminus\{g_\ell\}) + \frac{1}{\alpha}v_{j}(g_\ell) = \frac{1}{\alpha}v_{j}(X_{s}\cup Y_s)\Rightarrow v_{j}(X_{j}) \geq  \frac{1}{2\alpha}v_{j}(X_{s}\cup Y_s)$$ 
\noindent which by setting $\alpha = \frac{2}{3}$ implies that agent $j$ is $\frac{3}{4}-$EFX towards $s$ and thus $\frac{2}{3}-$EFX towards $s$. 
\end{proof}

\subsection{On the running time of Algorithms \ref{alg:g3m1} and \ref{algo:hyper2}}
\label{sec:RunTimes}

Every loop of  Algorithm \ref{algo:prePro}'s while-loop runs in $poly(n,m)$ time (specifically in $O(nm)$ time). On the other hand since the lexicographic potential is upper bounded and by Observation \ref{obs:pre} whenever the allocation changes after the execution of a loop of the  while-loop the lexicographic potential increases, we have that this while-loop may be executed at most $poly(n,m,2^k)$ times, where here and throughout $k$ is the number of bits needed for representing the values of the valuation functions. Thus, in total Algorithm \ref{algo:prePro} runs in $poly(n,m,2^k)$ time.

Moving on to examine the running time of Algorithm \ref{alg:g3m1} we first argue that every loop of the algorithm's outer while-loop (lines 5-13) needs $poly(n,m,2^k)$ time.
This is because, by above, Algorithm \ref{algo:prePro} runs in $poly(n,m,2^k)$ time, lines 8-10 need $poly(n,m)$ time and line 11 calls Algorithm \ref{algo:postPro} that needs $poly(n,m,2^k)$ time, since, as before, the lexicographic potential is upper bounded and by Observation \ref{obs:post} whenever the allocation changes after the execution of a loop of the  while-loop of Algorithm \ref{algo:postPro} the lexicographic potential increases. On the other hand, by Observation \ref{obs:hyper1}, whenever the allocation changes after the execution of a loop of the outer while-loop of Algorithm \ref{alg:g3m1} the lexicographic potential increases, which implies, since the lexicographic potential is upper bounded, that this while-loop will be executed in at most $poly(n,m,2^k)$ time. Putting it together we get the following lemma.

\begin{lemma}
\label{lem:runTime1}
Algorithm \ref{alg:g3m1}  runs in $poly(n,m,2^k)$ time, where $k$ is the number of bits needed for representing the values of the valuation functions.
\end{lemma}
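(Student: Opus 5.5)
The proof will bound separately the cost of a single iteration of each loop and the total number of iterations, with the lexicographic potential $\big(\Phi(\X),\sum_{i=1}^n|X_i|\big)$ serving as the global progress measure. I will first argue that this potential takes at most $poly(n,m,2^k)$ distinct values. Each $v_i(S)$ is a $k$-bit number, so it is at most $2^k$ times the granularity of the values (after scaling, an integer in $[0,2^k]$, or, for general subadditive valuations, a sum of such quantities). Each $\hv_i(S)$ is either $v_i(S)$ or $\frac{1}{\alpha}v_i(S)$, so it lies in a set of size at most $2\cdot 2^k$. Consequently $\Phi(\X)$ is a nonnegative combination whose values, after multiplying by a common denominator that depends only on $\alpha$, lie on a grid of size $O(n\,2^k)$. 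The second coordinate ranges over $\{0,\dots,m+n+1\}$, because the dummy goods are included. Hence any sequence that strictly increases this potential has length $poly(n,m,2^k)$.

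Next comes the inner cost. Each call to \MinEnviedSet\ runs in $O(nm)$ time (this is the CKMS procedure), so each iteration of the while-loop of Procedure1 runs in polynomial time. By Observation~\ref{obs:pre}, every change of the allocation in Procedure1 strictly increases $\Phi$, so Procedure1 runs in $poly(n,m,2^k)$ time. For Procedure2, one iteration consists of handling sources, calling AllCyclesResolution, removing irrelevant goods and adding dummy goods. Cycle detection in $\G(\X)$ takes $poly(n)$ time, and each CycleResolution strictly increases the virtual values of the agents on the cycle, so AllCyclesResolution performs $poly(n,m,2^k)$ resolutions. By Observation~\ref{obs:post}, each iteration of the outer loop of Procedure2 strictly increases the lexicographic potential, which bounds the number of its iterations. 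The loop in lines 8--10 of Algorithm~\ref{alg:g3m1} performs at most $m$ insertions, each found in $poly(n,m)$ time.

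Finally, Observation~\ref{obs:hyper1} shows that every non-terminal iteration of the outer while-loop (lines 5--13) strictly increases the lexicographic potential. The number of outer iterations is therefore $poly(n,m,2^k)$, and each costs $poly(n,m,2^k)$ time. Lines 14--17 run in polynomial time, and multiplying the bounds gives the lemma. The step I expect to be the main obstacle is making the value-grid count rigorous when $\frac{1}{\alpha}=\sqrt{2}$ is irrational. I would handle this by treating the potential as a pair of integer-valued sums, one over bundles of size at most $1$ and one over bundles of size at least $2$. Each sum lies in a range of size $poly(n)\cdot 2^k$, so the number of attainable potential values is still polynomial in $n$, $m$ and $2^k$. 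Strict increase then bounds the number of steps, because the potential takes only finitely many values, regardless of their spacing.
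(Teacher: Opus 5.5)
Your proposal is correct and takes essentially the same route as the paper: each iteration of Procedure1, Procedure2 and the outer loop of Algorithm~\ref{alg:g3m1} costs polynomial time, and the number of iterations is bounded because the lexicographic potential strictly increases (Observations~\ref{obs:pre}, \ref{obs:post} and \ref{obs:hyper1}). Your one addition is an explicit count of the attainable potential values, writing $\Phi(\X)=A+\frac{1}{\alpha}B$ with $A,B$ integer sums of range $poly(n,m)\cdot 2^k$ so that the irrationality of $\frac{1}{\alpha}=\sqrt{2}$ is harmless; this makes rigorous a step the paper only asserts, and it supersedes your earlier single-grid claim, which can be dropped.
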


To examine the running time of Algorithm \ref{algo:hyper2} we act similarly checking that every loop of the outer while-loop of Algorithm \ref{algo:hyper2} runs in $poly(n,m,2^k)$ time and there can be at most $poly(n,m,2^k)$ such loops. To see that every loop of the outer while-loop of Algorithm \ref{algo:hyper2} runs in $poly(n,m,2^k)$ time we observe that (i) similar to the above analysis for Algorithm \ref{alg:g3m1} the while-loop of lines 8-21 need $poly(n,m,2^k)$ time, (ii) the while-loop fo lines 24-26 needs $poly(n,m)$ time, (iii) line 32 calls CycleResolution which needs $poly(n,m)$ time, and (iv) all other commands/routines need $poly(n,m)$ time. On the other hand, by Observation \ref{obs:hyper2_lowerHalf}, whenever the allocation changes after the execution of a loop of the outer while-loop of Algorithm \ref{algo:hyper2} the lexicographic potential increases, which implies that, since the lexicographic potential is upper bounded, this while-loop will be executed in at most $poly(n,m,2^k)$ time. Putting it together we get the following lemma.

\begin{lemma}
\label{lem:runTime2}
Algorithm   \ref{algo:hyper2} runs in $poly(n,m,2^k)$ time, where $k$ is the number of bits needed for representing the values of the valuation functions.
\end{lemma}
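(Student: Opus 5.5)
The plan mirrors the argument preceding Lemma~\ref{lem:runTime1}: bound the cost of one iteration of the outer while-loop of lines 7--40 of Algorithm~\ref{algo:hyper2}, bound the number of such iterations through the lexicographic potential, and multiply the two bounds.

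\textbf{Cost of one outer iteration.} I would go through its parts in order.
\begin{enumerate}
\item The inner while-loop of lines 8--21 has the same structure as the outer loop of Algorithm~\ref{alg:g3m1}. It calls Procedure1 (line 10) and Procedure2 (line 19), each of which runs in $poly(n,m,2^k)$ time by Observations~\ref{obs:pre} and~\ref{obs:post} together with boundedness of the potential.
\item The critical-goods loop of lines 11--18 strictly increases the value, and hence the virtual value, of the agent it updates (Observation~\ref{obs:hyper2_upperHalf}). It keeps that agent's bundle size at least $2$, so $\Phi$ strictly increases at each step. The number of distinct values of $\Phi$ is at most $poly(n,m,2^k)$: every value is a sum of $n$ terms, each an integer multiple of $2^{-k}$ (scaled by $1/\alpha=3/2$), bounded by $m2^k$. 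Hence this loop runs $poly(n,m,2^k)$ times, each iteration costing $poly(n,m)$.
\item Each iteration of the inner loop either changes the allocation, which by the observations above strictly increases the lexicographic potential $\big(\Phi(\X),\sum_i|X_i|\big)$, or breaks at line 20. So the inner loop also iterates at most $poly(n,m,2^k)$ times.
\item The testing loop of lines 24--26 adds one good per step, so it has at most $m$ steps. Each step checks virtual envy in $poly(n,m)$ time.
\item Finding the cycle $C$ in $\G(\X')$ (line 28), CycleResolution (line 32), and the cleanup of lines 33--39 are graph and bundle operations costing $poly(n,m)$.
\end{enumerate}

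\textbf{Number of outer iterations.} By Observation~\ref{obs:hyper2_lowerHalf}, every outer iteration that does not break changes the allocation and strictly increases the lexicographic potential. The first coordinate takes at most $poly(n,m,2^k)$ values, as above. The second coordinate is an integer in $[0,m+n+1]$, since the dummy goods $H$ are counted. Hence there are $poly(n,m,2^k)$ outer iterations, and the total running time is $poly(n,m,2^k)$.

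\textbf{Main obstacle.} The delicate step is making rigorous that $\Phi$ has only pseudo-polynomially many values. Two points need care:
\begin{itemize}
\item The factor $1/\alpha$ is rational, so a common denominator is needed to view every value of $\Phi$ as an integer on a grid.
\item Under subadditive (rather than additive) values, "values representable with $k$ bits" must be interpreted so that the values of bundles, not just single goods, lie on a grid of size $2^{O(k)}$.
\end{itemize}
For Algorithm~\ref{algo:hyper2} the valuations are additive, so every bundle value is a sum of at most $m$ $k$-bit numbers. Scaling by $2^k$ and by the denominator of $\alpha$ makes every value of $\Phi$ an integer bounded by $O(nm2^{k})$, which resolves the issue.
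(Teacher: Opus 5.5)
Your proposal is correct and follows the paper's argument. You bound the cost of one iteration of the outer loop of lines 7--40: the inner loop of lines 8--21 is handled as for Algorithm~\ref{alg:g3m1}, and the testing loop, cycle resolution and cleanup each cost $poly(n,m)$. You then bound the number of outer iterations by the strict increase of the lexicographic potential given by Observation~\ref{obs:hyper2_lowerHalf}.

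Two of your steps add rigor the paper leaves implicit, where it appeals only to boundedness of the potential and to ``similar'' reasoning. First, your grid argument shows that $\Phi$ takes only $O(nm2^k)$ distinct values under additive $k$-bit valuations. Second, you give a separate potential-based bound for the critical-goods loop of lines 11--18, which, unlike lines 8--10 of Algorithm~\ref{alg:g3m1}, can return goods to the pool.
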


\section{Conclusion}

The problem of EFX existence for indivisible goods has received significant attention over the past decade. A very recent breakthrough by \citet{akrami2026counterexample} settles this question negatively for valuation functions beyond the additive class, though the problem remains wide open for additive valuations. Consequently, alternative research directions focus on more restricted settings, even when dealing with general valuations, such as restricted multi-hypergraphs,\footnote{The unrestricted multi-hypergraph setting is equivalent to the general problem.} or on more relaxed fairness notions. This work lies at the intersection of these directions, pushing the frontiers forwards.

\begin{acks}
    The research project is implemented in the framework of H.F.R.I call “Basic research Financing (Horizontal support
    of all Sciences)” under the National Recovery and Resilience Plan “Greece 2.0” funded by the European Union-
    NextGenerationEU (H.F.R.I. Project Number:15635). This work has been partially supported by project MIS 5154714 of the National Recovery and Resilience
    Plan Greece 2.0 funded by the European Union under the NextGenerationEU Program. The work of I.K. was co-funded by the European Union under the project Robotics and advanced industrial production (reg. no. CZ.02.01.01/00/22\_008/0004590) and also supported by the Grant Agency of the Czech Technical University in Prague, grant No.SGS26/181/OHK3/3T/18.
\end{acks}

\bibliographystyle{ACM-Reference-Format}
\bibliography{CombinedRefs}

\newpage
\appendix
\section{Minimal Envied Set Algorithm}

\begin{algorithm}[h]
\caption{\MinEnviedSet($\X, j$) (Algorithm 2.3 of \cite{CKMS21} )}
\label{alg:minEnviedSet}
\raggedright\textbf{Input:} A partial allocation $\X$ that at least one agent $j$ envies $\U(\X)$, according to virtual valuations.

\raggedright\textbf{Output:} A (minimal) subset $T$ of $\U(\X)$, and an agent $j$, such that $\hv_j(T)>\hv_j(X_j)$, and for any agent $i$ and $S \subset T$, $\hv_i(X_i)\geq \hv_i(S)$. 
\begin{algorithmic}[1]
\STATE $T=\U(\X)$
\FOR {every agent $i$}
\FOR {every good $g\in T$}
\IF{$\hv_i(T\setminus\{g\})>\hv_i(X_i)$}
\STATE $T\gets T\setminus\{g\}$
\STATE $j = i$
\ENDIF
\ENDFOR
\ENDFOR
\RETURN $(T,j)$
\end{algorithmic}
\end{algorithm}

\end{document}